\newtheorem{theorem}{Theorem}[section]
\newtheorem{proposition}[theorem]{Proposition}
\newtheorem{definition}[theorem]{Definition}
\newtheorem{claim}[theorem]{Claim}
\newtheorem{lemma}[theorem]{Lemma}
\newtheorem{conjecture}[theorem]{Conjecture}
\newtheorem{corollary}[theorem]{Corollary}
\newtheorem{remark}[theorem]{Remark}
\newtheorem{question}[theorem]{Question}
\newcommand{\qedsymb}{\hfill{\rule{2mm}{2mm}}}
\renewenvironment{proof}[1][]{\begin{trivlist}
\item[\hspace{\labelsep}{\bf\noindent Proof#1:\/}] }{\qedsymb\end{trivlist}}
\def\calA{{\cal A}}
\def\calF{{\cal F}}
\def\calL{{\cal L}}
\def\R{\mathbb{R}}
\def\N{\mathbb{N}}
\newcommand{\NP}{\mathsf{NP}}
\newcommand{\DTIME}{\mathsf{DTIME}}
\newcommand{\od}{\overline{\xi}}
\newcommand{\eps}{\epsilon}
\renewcommand{\epsilon}{\varepsilon}
\newcommand{\rank}{\mathop{\mathrm{rank}}}
\newcommand{\poly}{\mathop{\mathrm{poly}}}
\newcommand{\Fset}{\mathbb{F}}         
\newcommand{\vchrom}{{\chi_\mathrm{v}}}
\newcommand{\svchrom}{{\chi^{(\mathrm{s})}_\mathrm{v}}}
\begin{document}

\title{{\bf Approximating the Orthogonality Dimension of Graphs and Hypergraphs}}

\author{
Ishay Haviv\thanks{School of Computer Science, The Academic College of Tel Aviv-Yaffo, Tel Aviv 61083, Israel.
}
}

\date{}

\maketitle

\begin{abstract}
A $t$-dimensional orthogonal representation of a hypergraph is an assignment of nonzero vectors in $\R^t$ to its vertices, such that every hyperedge contains two vertices whose vectors are orthogonal. The orthogonality dimension of a hypergraph $H$, denoted by $\od(H)$, is the smallest integer $t$ for which there exists a $t$-dimensional orthogonal representation of $H$.
In this paper we study computational aspects of the orthogonality dimension of graphs and hypergraphs.
We prove that for every $k \geq 4$, it is $\NP$-hard (resp. quasi-$\NP$-hard) to distinguish $n$-vertex $k$-uniform hypergraphs $H$ with $\od(H) \leq 2$ from those satisfying $\od(H) \geq \Omega(\log^\delta n)$ for some constant $\delta>0$ (resp. $\od(H) \geq \Omega(\log^{1-o(1)} n)$). For graphs, we relate the $\NP$-hardness of approximating the orthogonality dimension to a variant of a long-standing conjecture of Stahl.
We also consider the algorithmic problem in which given a graph $G$ with $\od(G) \leq 3$ the goal is to find an orthogonal representation of $G$ of as low dimension as possible, and provide a polynomial time approximation algorithm based on semidefinite programming.
\end{abstract}

\section{Introduction}

A $t$-dimensional {\em orthogonal representation} of a hypergraph $H=(V,E)$ is an assignment of a nonzero real vector $u_v \in \R^t$ to every vertex $v \in V$, such that every hyperedge $e \in E$ contains two vertices $v,v' \in e$ whose vectors $u_v$ and $u_{v'}$ are orthogonal. The {\em orthogonality dimension} of $H$, denoted by $\od(H)$, is the smallest integer $t$ for which there exists a $t$-dimensional orthogonal representation of $H$.\footnote{Orthogonal representations of {\em graphs} are sometimes defined in the literature as orthogonal representations of the complement, namely, the definition requires vectors associated with {\em non-adjacent} vertices to be orthogonal. In this paper we have decided to use the other definition because it is extended more naturally to hypergraphs. For a graph $G$, one can view the notation $\od(G)$ as standing for $\xi(\overline{G})$.}
The notion of orthogonal representations was introduced for graphs by Lov\'asz~\cite{Lovasz79} in the study of the Shannon capacity and was later involved in a geometric characterization of connectivity properties of graphs by Lov\'asz, Saks, and Schrijver~\cite{LovaszSS89}.
The orthogonality dimension over the complex field was used by de Wolf~\cite{deWolfThesis} in a characterization of the quantum one-round communication complexity of promise equality problems and by Cameron et al.~\cite{CameronMNSW07} in the study of the quantum chromatic number of graphs (see also~\cite{ScarpaS12,BrietBLPS15,BrietZ17}).
An extension of orthogonal representations, called orthogonal bi-representations, was introduced by Haemers~\cite{Haemers81} and has found several further applications to information theory and to theoretical computer science.

Orthogonal representations can be viewed as a generalization of hypergraph vertex colorings, one of the most fundamental and extensively studied topics in graph theory.
Recall that a hypergraph $H$ is said to be $c$-colorable if one can assign one of $c$ colors to every vertex of $H$ such that no hyperedge is monochromatic. The chromatic number of $H$, denoted by $\chi(H)$, is the smallest integer $c$ for which $H$ is $c$-colorable.
Obviously, every $c$-coloring of $H$ induces a $c$-dimensional orthogonal representation of $H$ by assigning the $i$th unit vector $e_i \in \R^c$ to every vertex colored by the $i$th color.
On the other hand, given a $t$-dimensional orthogonal representation $(u_v)_{v \in V}$ of $H$ one can assign to every vertex $v$ the vector in $\{-1,0,+1\}^t$ that consists of the signs of the entries of $u_v$, and since nonzero orthogonal vectors have distinct sign vectors it follows that $H$ is $3^t$-colorable. We conclude that every hypergraph $H$ satisfies
\begin{eqnarray}\label{eq:chi_vs_od}
\log_3 \chi(H) \leq  \od (H) \leq \chi (H).
\end{eqnarray}
The upper bound in~\eqref{eq:chi_vs_od} can clearly be tight (take, e.g., a complete graph), and it turns out that there exist graphs whose orthogonality dimension is exponentially smaller than their chromatic number (see Proposition~\ref{prop:od_vs_chi}).

The current work studies the problem of approximating the orthogonality dimension of graphs and hypergraphs.
This research direction was already suggested in the late eighties by Lov\'asz et al.~\cite{LovaszSS89}, who remarked that computing the orthogonality dimension of graphs seems to be a difficult task (see also~\cite{LovaszLN}). Nevertheless, the only hardness result we are aware of for this graph parameter is the one of Peeters~\cite{Peeters96}, who proved that for every $t \geq 3$ it is $\NP$-hard to decide whether an input graph $G$ satisfies $\od(G) \leq t$ (see also~\cite{BrietBLPS15}).
Before stating our hardness and algorithmic results, we overview related previous work on graph and hypergraph coloring.

\subsection{Graph and Hypergraph Coloring}\label{sec:coloring}

It is well known that the problem of deciding whether an input graph is $c$-colorable can be easily solved in polynomial time for $c \in \{1,2\}$ and is $\NP$-hard for every $c \geq 3$~\cite{Karp72}.

In 1976, Garey and Johnson~\cite{GareyJ76a} have discovered an interesting connection between hardness of graph coloring and the multichromatic numbers of Kneser graphs.
For integers $d \geq 2s$, the {\em Kneser graph} $K(d,s)$ is the graph whose vertices are all the $s$-subsets of $[d]$, where two sets are adjacent if they are disjoint.
A {\em $k$-tuple coloring} $f$ of a graph $G=(V,E)$ is an assignment $f(v)$ of a set of $k$ colors to every vertex $v \in V$ such that $f(v) \cap f(v') = \emptyset$ whenever $v$ and $v'$ are adjacent in $G$. The {\em $k$th multichromatic number} of $G$, denoted by $\chi_k(G)$, is the smallest integer $c$ for which $G$ has a $k$-tuple coloring with $c$ colors.
Equivalently, $\chi_k(G)$ is the smallest integer $c$ for which there exists a homomorphism from $G$ to the Kneser graph $K(c,k)$. Note that $\chi_1(G)$ is simply the standard chromatic number $\chi(G)$.
In the seventies, Stahl~\cite{Stahl76} has made the following conjecture regarding the multichromatic numbers of Kneser graphs (see also~\cite{FranklF86a}).

\begin{conjecture}[\cite{Stahl76}]\label{conj:Stahl}
For all integers $k$ and $d \geq 2s$,
\[\chi_k(K(d,s)) = \Big \lceil \frac{k}{s} \Big \rceil \cdot (d-2s)+2k.\]
\end{conjecture}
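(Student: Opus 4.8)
The plan is to prove the claimed identity as two matching inequalities: an upper bound realized by an explicit family of tuple colorings, and a lower bound combining a fractional estimate with the topological method behind the Lov\'asz--Kneser theorem. Note at the outset that the case $k=1$ is exactly that theorem, $\chi(K(d,s))=d-2s+2$, so a complete proof must in particular subsume Lov\'asz's topological lower bound; the general statement should be viewed as a ``multichromatic'' strengthening of it, in which the role played by the prime $2$ is taken over by the arithmetic interplay between $k$ and $s$ encoded in the ceiling $\lceil k/s\rceil$.

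For the upper bound I would proceed by a modular reduction. First, the identity map is a homomorphism $K(d,s)\to K(d,s)$, so $\chi_s(K(d,s))\le d$. Second, the multichromatic number is subadditive, $\chi_{k+k'}(G)\le\chi_k(G)+\chi_{k'}(G)$, since one may superimpose two tuple colorings on disjoint palettes; hence $\chi_{qs}(K(d,s))\le qd$ for every integer $q$. Third, for the ``small'' range $0\le r\le s$ one builds a homomorphism $K(d,s)\to K(d-2s+2r,\,r)$ directly: for $r=1$ this is Kneser's own coloring (write $[d]=[d-2s]\cup\{d-2s+1,\dots,d\}$, color an $s$-set $S$ by $\min S$ when $\min S\le d-2s+1$, and give all remaining $s$-sets --- which lie inside the $(2s-1)$-element tail and hence pairwise intersect --- the single spare color $d-2s+2$), and for general $r<s$ one runs a refined rounding of the same type, splitting the $r$ colors between the low coordinates of $S$ and a bounded reservoir of spare colors governed by the tail; this yields $\chi_r(K(d,s))\le d-2s+2r$. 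Writing $k=qs+r$ with $0\le r<s$ and combining the last two steps gives $\chi_k(K(d,s))\le qd+(d-2s+2r)=\lceil k/s\rceil(d-2s)+2k$ when $r>0$, and $\chi_k(K(d,s))\le qd=\lceil k/s\rceil(d-2s)+2k$ when $r=0$. Only the construction for $1<r<s$ is not entirely routine, but it is elementary.

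For the lower bound the first ingredient is fractional: since $\chi_k(G)\ge k\,\chi_f(G)$ and, by vertex-transitivity together with the Erd\H{o}s--Ko--Rado theorem, $\chi_f(K(d,s))=\binom{d}{s}/\binom{d-1}{s-1}=d/s$, one gets $\chi_k(K(d,s))\ge\lceil kd/s\rceil$. A short arithmetic check shows $\lceil kd/s\rceil$ already equals $\lceil k/s\rceil(d-2s)+2k$ whenever $s\mid k$ or $d-2s\le 1$, so the remaining task is the range $d-2s\ge 2$ with $s\nmid k$. There the natural strategy is to upgrade the proof of the Lov\'asz--Kneser bound: a homomorphism $K(d,s)\to K(c,k)$ induces a $\mathbb{Z}_2$-equivariant map of the associated box (or neighborhood) complexes, and the topological core of that theorem is the assertion that this complex for $K(d,s)$ has $\mathbb{Z}_2$-index $d-2s$; one would like a $k$-fold Borsuk--Ulam/Tucker lemma turning this, together with the combinatorial structure of $K(c,k)$, into the sharp bound $c\ge\lceil k/s\rceil(d-2s)+2k$.

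The hard part is exactly this last step, and it is the reason the conjecture has stayed open. For $s\nmid k$ the standard topological invariants of the box complex are essentially blind to the rounding: on their own they deliver only the fractional value $kd/s$, not the extra additive contribution of the ceiling. Closing this gap seems to require a genuinely new combinatorial input --- for instance a colorful, multichromatic Tucker lemma tracking $k$ simultaneous color classes at once, or a $\mathbb{Z}_p$-Tucker argument with $p$ chosen as a function of $k$ and $s$ --- and I do not see how to push such an argument through in full generality (Stahl himself verified the formula when $s\mid k$ and in several further ranges, but the general case remains elusive). A more modest and attainable target, and the one relevant to the present paper, is to prove a \emph{variant} of the conjecture: a multichromatic analogue of the Lov\'asz--Kneser lower bound that grows linearly in $d-2s$ up to a factor depending appropriately on $k$ and $s$, possibly with a sub-optimal constant in front of the $(d-2s)$ term, which --- while falling short of Stahl's exact value --- already suffices to drive the reduction to hardness of approximating the orthogonality dimension of graphs.
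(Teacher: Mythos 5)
The statement you were asked about is a \emph{conjecture} (Stahl's conjecture), not a theorem: the paper states it without proof, explicitly notes that it ``is still widely open,'' and only cites the special cases that have been settled ($k=1$ by Lov\'asz, $1\le k\le s$, $d=2s+1$, $s\mid k$ by Stahl, $s\le 3$ by Stahl, and $(s,k)=(3,4)$ by Garey--Johnson). So there is no proof in the paper to compare yours against, and no complete proof exists in the literature.

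Your proposal is, appropriately, not a proof either, and you say so; what you do supply is accurate. The upper bound argument is correct and is essentially the known one: $\chi_s(K(d,s))\le d$ via the identity homomorphism, subadditivity $\chi_{k+k'}\le\chi_k+\chi_{k'}$, and Stahl's explicit colorings giving $\chi_r(K(d,s))\le d-2s+2r$ for $1\le r\le s$, which assemble to the conjectured value for all $k$. The fractional lower bound $\chi_k(K(d,s))\ge\lceil kd/s\rceil$ via vertex-transitivity and Erd\H{o}s--Ko--Rado is also correct, and your arithmetic observation that this already closes the gap when $s\mid k$ or $d-2s\le 1$ is right. Your diagnosis of where the difficulty lies --- that the topological index arguments behind the Lov\'asz--Kneser theorem see only the fractional value $kd/s$ and miss the additive ceiling contribution when $s\nmid k$ --- is a fair description of why the conjecture has resisted proof. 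Your closing remark is also well aligned with how the paper actually uses the conjecture: Theorem~\ref{thm:hard_F} only needs information about $\od_3$ and $\od_4$ of Kneser graphs (a variant of the conjecture for subspace representations, cf.\ Question~\ref{question:k=4} and Corollary~\ref{cor:k=s}), not Stahl's formula in full. In short: no gap in your reasoning to flag, but nothing here (nor in the paper) constitutes a proof of the statement, because none is known.
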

\noindent
More than forty years later, Conjecture~\ref{conj:Stahl} it is still widely open.
While the right-hand side in the conjecture is known to form an upper bound on $\chi_k(K(d,s))$ for all values of $k$, $d$ and $s$, the conjecture was confirmed only for a few special cases.
For $k=1$, the conjecture was proved by Lov\'asz~\cite{LovaszKneser} in a breakthrough application of algebraic topology confirming a conjecture by Kneser~\cite{Kneser55}.
Stahl~\cite{Stahl76} proved that the conjecture holds whenever $1 \leq k \leq s$, $d=2s+1$, or $k$ is divisible by $s$.
Garey and Johnson~\cite{GareyJ76a} proved the case of $s=3$ and $k=4$, namely, that $\chi_4(K(d,3))=2d-4$, and used it in the analysis of a reduction from $3$-colorability to prove that for every $c \geq 6$, it is $\NP$-hard to distinguish between graphs $G$ that satisfy $\chi(G) \leq c$ and those that satisfy $\chi(G) \geq 2c-4$.

In 1993, Khanna, Linial, and Safra~\cite{KhannaLS00} proved that it is $\NP$-hard to decide whether an input graph $G$ satisfies $\chi(G) \leq 3$ or $\chi(G) \geq 5$ (see also~\cite{GuruswamiK04}). As observed in~\cite{BrakensiekG16}, combining this result with the proof technique of~\cite{GareyJ76a} and the case of $s=3$ and $k=5$ in Conjecture~\ref{conj:Stahl} proved by Stahl~\cite{Stahl98} (who confirmed there the conjecture for $s \leq 3$ and all integers $k$ and $d$), it follows that for every $c \geq 6$ it is $\NP$-hard to distinguish between the cases $\chi(G) \leq c$ and $\chi(G) \geq 2c-2$.
Brakensiek and Guruswami~\cite{BrakensiekG16} improved this result using different techniques and proved the $\NP$-hardness of deciding whether a given graph $G$ satisfies $\chi(G) \leq c$ or $\chi(G) \geq 2c-1$ for all $c \geq 3$. In a recent work of Bul{\'{\i}}n, Krokhin, and Opr{\v s}al~\cite{BKO19}, the latter condition was further improved to $\chi(G) \geq 2c$.
We note that Dinur, Mossel, and Regev~\cite{DinurMR06} proved that assuming a certain variant of the unique games conjecture, deciding whether a given graph $G$ satisfies $\chi(G) \leq 3$ or $\chi(G) \geq c$ is $\NP$-hard for every $c \geq 4$.

We next consider, for any constant $k \geq 3$, the problem of deciding whether an input $k$-uniform hypergraph (i.e., a hypergraph each of its hyperedges contains exactly $k$ vertices) is $c$-colorable.
While the problem can clearly be solved in polynomial time for $c=1$, it was shown to be $\NP$-hard for $c=2$ and $k = 3$ by Lov\'asz~\cite{Lovasz73}, motivating the study of the following gap problem:
Given an $n$-vertex $k$-uniform hypergraph $H$, decide whether $\chi(H) \leq 2$ or $\chi(H) \geq c$.
Guruswami, H{\aa}stad, and Sudan~\cite{GuruswamiHS02} proved that the problem is $\NP$-hard for $k \geq 4$ and every constant $c \geq 3$. By combining their proof with the later PCP theorem of Moshkovitz and Raz~\cite{MoshkovitzR10}, this result also follows for the super-constant $c= \Omega(\frac{\log \log n}{\log \log \log n})$.
For $k=3$, the $\NP$-hardness was proved for every constant $c \geq 3$ by Dinur, Regev, and Smyth~\cite{DinurRS05}.
Their proof approach was extended in a recent work of Bhangale~\cite{Bhangale18}, who obtained $\NP$-hardness for every $k \geq 4$ with the improved super-constant $c = \Omega(\log^\delta n)$ where $\delta >0$ is some constant.
Under the complexity assumption $\NP \nsubseteq \DTIME(2^{\poly (\log n)})$, several stronger hardness results are known.
This includes the case of $k=3$ and $c = \Omega(\sqrt[3]{ \log \log n})$ proved in~\cite{DinurRS05} and the case of $k \geq 4$ and $c = \Omega(\frac{\log n}{\log \log n})$ proved in~\cite{Bhangale18}.
For additional related results see, e.g.,~\cite[Table~1]{Bhangale18} and the references therein.

On the algorithmic side, significant efforts have been made in the literature to obtain polynomial time algorithms for coloring $n$-vertex $3$-colorable graphs using as few colors as possible. This line of research was initiated by a simple algorithm of Wigderson~\cite{Wigderson83} that used $O(\sqrt{n})$ colors. In a series of increasingly sophisticated combinatorial algorithms, Blum~\cite{Blum94} improved the number of colors to $\widetilde{O}(n^{3/8})$. Then, Karger, Motwani, and Sudan~\cite{KargerMS98} introduced an algorithm based on a semidefinite relaxation and improved the number of colors to $\widetilde{O}(n^{1/4})$. Combining the combinatorial approach of~\cite{Blum94} and the semidefinite relaxation of~\cite{KargerMS98}, Blum and Karger~\cite{Blum94,BlumK97} improved it to $\widetilde{O}(n^{3/14})$, which was later improved by Arora, Chlamtac, and Charikar~\cite{AroraCC06} and by Chlamtac~\cite{Chlamtac07} to $\widetilde{O}(n^{0.2111})$ and $\widetilde{O}(n^{0.2072})$ respectively. The combinatorial component of these algorithms was recently improved by Kawarabayashi and Thorup~\cite{KT17}, who reduced the number of colors to $\widetilde{O}(n^{0.19996})$.
Halperin et al.~\cite{HalperinNZ02} have obtained analogue results for coloring $n$-vertex $c$-colorable graphs for all constants $c \geq 4$, e.g., for $c=4$ there exists an efficient algorithm that uses $\widetilde{O}(n^{7/19})$ colors.

For hypergraphs, there exists a simple combinatorial algorithm that given an $n$-vertex $k$-uniform $2$-colorable hypergraph finds in polynomial time a coloring with $\widetilde{O}(n^{1-1/k})$ colors, as was shown independently by Alon et al.~\cite{AlonKMH96} and by Chen and Frieze~\cite{ChenF96}. For $k=3$, this algorithm was combined in~\cite{AlonKMH96,ChenF96} with the semidefinite programming approach of~\cite{KargerMS98} to obtain a better bound of $\widetilde{O}(n^{2/9})$, which was later improved to $\widetilde{O}(n^{1/5})$ by Krivelevich, Nathaniel, and Sudakov~\cite{KrivelevichNS01}.
We note, however, that Alon et al.~\cite{AlonKMH96} have provided evidence that the powerful semidefinite approach cannot be applied to coloring $k$-uniform hypergraphs for $k \geq 4$.

\subsection{Our Contribution}

The present paper offers hardness and algorithmic results on the orthogonality dimension of graphs and hypergraphs.
We first mention that known hardness results on the chromatic number can be used to derive hardness results for the orthogonality dimension. Indeed, the inequalities given in~\eqref{eq:chi_vs_od} yield that for every integers $t_1$ and $t_2$, $\NP$-hardness of deciding whether an input $k$-uniform hypergraph $H$ satisfies $\chi(H) \leq t_1$ or $\chi(H) \geq t_2$ immediately implies the $\NP$-hardness of deciding whether it satisfies $\od(H) \leq t_1$ or $\od(H) \geq \log_3 t_2$. In particular, the hardness results of~\cite{DinurRS05,GuruswamiK04} imply that for all constants $k \geq 3$ and $t \geq 3$, it is $\NP$-hard to decide whether an input $k$-uniform hypergraph $H$ satisfies $\od(H) \leq 2$ or $\od(H) \geq t$.
For $k=2$, such a result follows from~\cite{DinurMR06} under a variant of the unique games conjecture.
However, for super-constant hardness gaps this implication leads to an exponential loss. In particular, it follows from~\cite{Bhangale18} that for every $k \geq 4$ it is $\NP$-hard to distinguish $n$-vertex $k$-uniform hypergraphs $H$ with $\od(H) \leq 2$ from those satisfying $\od(H) \geq \Omega(\log \log n)$.
We prove that this exponential loss can be avoided.

\begin{theorem}\label{thm:intro-NP-hard-gen}
Let $k \geq 4$ be a fixed integer.
\begin{enumerate}
  \item There exists a constant $\delta>0$ for which it is $\NP$-hard to decide whether an input $n$-vertex $k$-uniform hypergraph $H$ satisfies $\od(H) \leq 2$ or $\od(H) \geq \log ^\delta n$.
  \item Assuming $\NP \nsubseteq \DTIME(n^{O(\log \log n)})$, for every constant $c>0$ there is no polynomial time algorithm that decides whether an input $n$-vertex $k$-uniform hypergraph $H$ satisfies $\od(H) \leq 2$ or $\od(H) \geq c \cdot \frac{\log n}{\log \log n}$.
\end{enumerate}
\end{theorem}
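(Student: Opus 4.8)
The starting point is that the \textsc{yes} side is insensitive to the difference between chromatic number and orthogonality dimension: for every hypergraph $H$ one has $\od(H)\le 2$ if and only if $\chi(H)\le 2$. Indeed, given a $2$-dimensional orthogonal representation $(u_v)_{v\in V}$ of $H$, color each vertex $v$ according to whether the line spanned by $u_v$, viewed as an angle in $[0,\pi)$, lies in $[0,\pi/2)$ or in $[\pi/2,\pi)$; two nonzero vectors in $\R^2$ are orthogonal precisely when their angles differ by $\pi/2$ modulo $\pi$, so the endpoints of any orthogonal pair get distinct colors and no hyperedge is monochromatic, giving $\chi(H)\le 2$; the converse is trivial. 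Hence it suffices to construct, on the \textsc{no} side, $k$-uniform hypergraphs whose orthogonality dimension—not merely whose chromatic number—is $\log^{\delta}n$ (resp.\ $\Omega(\log n/\log\log n)$). The obstacle to deriving this by a black-box appeal to the known coloring hardness is the exponential loss in $\od(H)\ge\log_3\chi(H)$, and the plan is to avoid it by revisiting the reduction itself.

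Concretely, I would run the reduction of Bhangale~\cite{Bhangale18}—which, building on Dinur--Regev--Smyth~\cite{DinurRS05} and Guruswami--H{\aa}stad--Sudan~\cite{GuruswamiHS02}, amplifies constant-gap coloring hardness to the poly-logarithmic gap $\chi\le 2$ versus $\chi\ge\log^{\delta}n$ via a PCP composition together with a combinatorial gadget—but carry out its completeness and soundness analyses for orthogonal representations rather than colorings. On the completeness side nothing changes: a $2$-coloring of the base instance lifts through the construction to a $2$-coloring, hence to a $2$-dimensional orthogonal representation of the final hypergraph. On the soundness side one argues, from a $t$-dimensional orthogonal representation of the final hypergraph, that the underlying PCP/Label-Cover instance has a good assignment, with $t$ playing the role that the number of colors played before. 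The base case is supplied by~\cite{DinurRS05,GuruswamiHS02}: for every constant $c_0$ it is $\NP$-hard to distinguish $\chi(H)\le 2$ from $\chi(H)\ge c_0$, equivalently (taking $c_0=3^{C}$) to distinguish $\od(H)\le 2$ from $\od(H)\ge C$ for every constant $C$; the amplification then boosts $C$ to $\log^{\delta}n$, and starting instead from the near-linear-size PCP underlying Bhangale's quasi-$\NP$-hard regime yields the second part under $\NP\nsubseteq\DTIME(n^{O(\log\log n)})$. One must also check that the composition preserves $k$-uniformity, which it does.

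The crux—and the main obstacle—is re-proving soundness with orthogonal representations so that the $3^{\dim}$ sign-vector conversion is never invoked globally: applied at the very end of the argument it costs only a constant, but applied to the final hypergraph it would re-introduce the exponential loss. This forces a geometric reading of the argument—wherever the original proof decodes a label from the restriction of a coloring to a gadget, one must instead decode from the restriction of an orthogonal representation, exploiting that a low-dimensional orthogonal representation of a gadget is already a highly constrained object. For the Kneser-type gadgets at play this is clean, because the chromatic number of such a gadget is certified by a clique-like structure that also lower-bounds its orthogonality dimension: indeed $\od(K(d,s))\ge\omega(K(d,s))=\lfloor d/s\rfloor=\Theta(\chi(K(d,s)))$ for fixed $s$, so the gadget loses only a constant factor in passing from $\chi$ to $\od$. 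It is precisely this unconditional ``no exponential gap'' statement for the gadgets that is available for hypergraphs but, for the graph analogue of the theorem, turns into a variant of Stahl's conjecture. Granting the geometric soundness analysis, a polynomial overhead at each of the $O(1)$ composition levels is absorbed into the exponent $\delta$, and Theorem~\ref{thm:intro-NP-hard-gen} follows.
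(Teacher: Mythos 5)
Your high-level plan coincides with the paper's: redo Bhangale's Label Cover reduction with orthogonal representations in place of colorings, with completeness unchanged. But your soundness sketch has a fatal quantitative flaw at exactly the point you call the crux. The gadget in Bhangale's reduction is the Kneser/Schrijver graph $K(d,s)$ (in fact $S(d,s)$) with $d=R$ and $s=\lceil (R-t)/2\rceil$, i.e.\ $d-2s\le t$ while $s$ is huge. In this regime your proposed lower bound $\od(K(d,s))\ge\omega(K(d,s))=\lfloor d/s\rfloor$ gives only $\od\ge 2$, since $\lfloor d/s\rfloor=2$, whereas the argument needs $\od\ge t+1$ so that each block $C[x]$ contains a disjoint pair with non-orthogonal vectors from which a label can be decoded. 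The identity $\omega(K(d,s))=\lfloor d/s\rfloor=\Theta(\chi(K(d,s)))$ that you invoke holds for \emph{fixed} $s$ and growing $d$ — the opposite of the regime used here, where $\chi=d-2s+2\approx t$ but $\omega=2$. The paper closes this gap with a genuinely non-elementary ingredient: the topological theorem of~\cite{Haviv18topo} that $\od(S(d,s))=\od(K(d,s))=d-2s+2$, i.e.\ the orthogonality dimension of these gadgets \emph{equals} their chromatic number. Without this (or an equivalent), the soundness argument does not start.

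Even granting the gadget bound, your decoding step is underspecified in a way that hides the other two ingredients the paper needs. Showing that a candidate hyperedge $\{(x,A^{(x)}),(x,B^{(x)}),(y,A^{(y)}),(y,B^{(y)})\}$ violates the orthogonal representation requires \emph{all six} pairs of vectors to be non-orthogonal, not just one; the paper achieves this with a tensor-product symmetrization lemma (Lemma~3.2) that upgrades $\langle w_{a_x},w_{b_y}\rangle\neq 0$ to pairwise non-orthogonality at the cost of raising the dimension to $t^4$. And to find, for each $z$, one variable $x$ whose decoded set is consistent with a constant fraction of its neighbors, the paper replaces the independence-number counting of the coloring proofs (which provably cannot work for $\od$, by Proposition~2.2) with the Golovnev--Regev--Weinstein bound $s(M)\ge n^2/(4\rank M)$ for matrices with nonzero diagonal. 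Finally, "the composition preserves $k$-uniformity" is not automatic: the reduction natively produces $4$-uniform hypergraphs, and passing to general $k\ge 4$ while preserving $\od$ exactly is the content of the paper's Section~3, which again needs the symmetrization lemma together with a Ramsey-number argument.
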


We next consider the hardness of approximating the orthogonality dimension of {\em graphs}.
Our result involves a generalization of orthogonal representations of graphs defined as follows.
A $t$-dimensional {\em orthogonal $k$-subspace representation} of a graph $G=(V,E)$ is an assignment of a subspace $U_v \subseteq \R^t$ with $\dim (U_v)=k$ to every vertex $v \in V$, such that the subspaces $U_v$ and $U_{v'}$ are orthogonal whenever $v$ and $v'$ are adjacent in $G$. For a graph $G$, let $\od_k(G)$ denote the smallest integer $t$ for which there exists a $t$-dimensional orthogonal $k$-subspace representation of $G$. Note that $\od_1(G) = \od(G)$ for every graph $G$.
We prove the following result.

\begin{theorem}\label{thm:hard_F}
For every graph $F$, it is $\NP$-hard to decide whether an input graph $G$ satisfies $\od(G) \leq \od_3(F)$ or $\od(G) \geq \od_4(F)$.
\end{theorem}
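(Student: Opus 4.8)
The plan is to reduce from the $\NP$-hard problem of $3$-colorability, using the graph $F$ as a "gadget" that translates a $3$-coloring promise into an orthogonality-dimension promise. Given an instance $G_0$ of $3$-colorability on vertex set $V_0$, I would build a graph $G$ as follows: take the tensor/categorical-type product of $G_0$ with $F$, or more precisely replace each vertex of $G_0$ by a copy of (the vertex set of) $F$ and connect copies across edges of $G_0$ in a way governed by $F$'s structure. The key design goal is that a proper $3$-coloring of $G_0$ should let us assemble an $\od_3(F)$-dimensional orthogonal representation of $G$ out of an optimal orthogonal $3$-subspace representation of $F$ (one $3$-dimensional subspace per vertex of $F$, distributed among the three color classes so that adjacent copies land in orthogonal subspaces), while the absence of a $3$-coloring should force the orthogonality dimension of $G$ up to at least $\od_4(F)$.

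Concretely, I expect the completeness direction to go like this: fix an optimal orthogonal $3$-subspace representation $(U_w)_{w\in V(F)}$ of $F$ in $\R^{\od_3(F)}$. Given a proper $3$-coloring $\phi\colon V_0\to\{1,2,3\}$ of $G_0$, and given for each color $i$ an orthonormal frame witnessing that the subspace $U_w$ has dimension $3$, assign to the vertex of $G$ indexed by $(v,w)\in V_0\times V(F)$ a nonzero vector drawn from $U_w$ according to the color $\phi(v)$ — the three coordinates of $w$'s subspace-frame correspond to the three colors. Edges of $G$ are set up so that an edge between $(v,w)$ and $(v',w')$ forces $u_{(v,w)}\perp u_{(v',w')}$; when $vv'\in E(G_0)$ and $ww'\in E(F)$ this is guaranteed because $\phi(v)\neq\phi(v')$ picks two of the three orthonormal directions and $U_w\perp U_{w'}$. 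This gives $\od(G)\le\od_3(F)$.

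For the soundness direction I would argue the contrapositive: from a $t$-dimensional orthogonal representation of $G$ with $t<\od_4(F)$ I must extract a proper $3$-coloring of $G_0$. The idea is that for each vertex $v\in V_0$, the vectors $\{u_{(v,w)}\}_{w\in V(F)}$, together with the orthogonality constraints inherited from the $F$-structure, must look enough like a genuine orthogonal representation of $F$ that — if the $u_{(v,w)}$ failed to "concentrate" into a $3$-dimensional pattern indexed by three global directions — one could pad the construction to produce a $4$-subspace orthogonal representation of $F$ in dimension $t<\od_4(F)$, a contradiction. So each $v$ is assigned one of a bounded number of "types," and one shows that the types form a legal $3$-coloring of $G_0$: adjacent vertices cannot share a type because the cross-edges in $G$ between their copies of $F$ would then violate some orthogonality constraint that $F$'s own structure rules out.

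The main obstacle, and the place where the construction has to be engineered carefully, is making the soundness extraction robust: an arbitrary orthogonal representation of $G$ need not respect the product structure, so the vectors $u_{(v,w)}$ within one copy of $F$ could be "spread out" rather than lying in a clean $3$-dimensional arrangement. The fix is to choose the inter-copy edges of $G$ so densely/rigidly that any orthogonal representation of $G$ of dimension below $\od_4(F)$ is literally forced, for each $v$, into one of three mutually orthogonal $1$-dimensional directions per vertex of $F$ — i.e.\ the edges must encode not only "subspaces of adjacent copies are orthogonal" but also "within a copy, the vectors attached to $F$'s vertices behave like a $3$-subspace representation," so that the only way to beat dimension $\od_4(F)$ is to genuinely use a $3$-subspace representation of $F$ consistently across $G$, which in turn forces a $3$-coloring. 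Getting this rigidity with only pairwise (graph) edges — rather than hyperedges — while keeping $\od_4(F)$ as the exact soundness threshold is the delicate combinatorial core of the argument.
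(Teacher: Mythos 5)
There is a genuine gap, and it starts at the very first step: you reduce from $3$-colorability, but the soundness direction of that reduction cannot work. A graph that is not $3$-colorable can still satisfy $\od(G_0)=3$ --- this is exactly the Kochen--Specker phenomenon mentioned elsewhere in the paper (there are graphs with orthogonality dimension $3$ and chromatic number $4$). For such a $G_0$, one can combine its $3$-dimensional orthogonal representation with an optimal orthogonal $3$-subspace representation of $F$ to produce an $\od_3(F)$-dimensional orthogonal representation of the product graph, so the NO instances of $3$-colorability do not get pushed up to $\od_4(F)$. The correct starting point is Peeters' theorem: it is $\NP$-hard to decide whether $\od(G)\leq 3$, and it is precisely the promise ``$\od(G)\geq 4$'' in the NO case (not ``$\chi(G)\geq 4$'') that the soundness argument needs.

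Beyond that, your construction has the product in the wrong order and leaves the soundness extraction unresolved (you flag it yourself as the ``delicate combinatorial core''). The paper takes the lexicographic product $F\bullet G$ --- each vertex of $F$ replaced by a copy of the \emph{input} graph $G$, each edge of $F$ replaced by a complete bipartite graph --- and proves the clean identity $\od(F\bullet G)=\od_k(F)$ where $k=\od(G)$. The soundness direction of that identity is exactly the rigidity you were missing: given any $t$-dimensional orthogonal representation of $F\bullet G$, let $U_x$ be the span of the vectors assigned to the copy of $G$ sitting at vertex $x$ of $F$; the restriction to that copy is an orthogonal representation of $G$, so $\dim(U_x)\geq\od(G)$, and adjacent copies span mutually orthogonal subspaces because of the complete bipartite connections. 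This hands you an orthogonal $\od(G)$-subspace representation of $F$ in dimension $t$ with no need to force individual vectors into ``clean $3$-dimensional arrangements.'' With your order of composition (copies of $F$ indexed by vertices of $G_0$) the analogous identity would yield $\od_{\od(F)}(G_0)$, which is not the quantity the theorem is about.
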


With Theorem~\ref{thm:hard_F} in hand, it is of interest to find graphs $F$ for which $\od_4(F)$ is large compared to $\od_3(F)$.
We consider here, in light of Conjecture~\ref{conj:Stahl}, the behavior of the $\od_k$ parameters on the Kneser graphs $K(d,s)$.
For $k=1$, it was recently shown that the orthogonality dimension and the chromatic number coincide on Kneser graphs~\cite{Haviv18topo}.
We further observe, as an application of a result of Bukh and Cox~\cite{BukhC18}, that the values of $\chi_k$ and $\od_k$ coincide on the Kneser graphs $K(d,s)$ for every $k$ divisible by $s$ (that is, for all integers $\ell \geq 1$ and $d \geq 2s$,~ $\od_{\ell \cdot s}(K(d,s)) = \chi_{\ell \cdot s}(K(d,s)) = \ell \cdot d$, and in particular $\od_3(K(d,3))=d$; see Corollary~\ref{cor:k=s}).
It would be natural to ask whether this is also the case for $k=4$ and $s=3$.
\begin{question}\label{question:k=4}
Is it true that for every $d \geq 6$,~ $\od_4(K(d,3)) = 2d-4$?
\end{question}
\noindent
A positive answer to Question~\ref{question:k=4} would imply that for every $t \geq 6$, it is $\NP$-hard to decide whether an input graph $G$ satisfies $\od(G) \leq t$ or $\od(G) \geq 2t-4$, analogously to the hardness result of~\cite{GareyJ76a} for the chromatic number.\footnote{It can be shown, using a result of~\cite{BukhC18}, that every $d \geq 6$ satisfies $\od_4(K(d,3)) \geq \lceil 4d/3 \rceil$ (see Lemma~\ref{lemma:BukhC}). Combining this bound with Theorem~\ref{thm:hard_F}, it follows that for every $t \geq 6$ it is $\NP$-hard to decide whether an input graph $G$ satisfies $\od(G) \leq t$ or $\od(G) \geq \lceil 4t/3 \rceil$.}

We finally consider the algorithmic problem in which given an $n$-vertex $k$-uniform hypergraph $H$ with constant orthogonality dimension, the goal is to find an orthogonal representation of $H$ of as low dimension as possible. It is not difficult to show that a hypergraph $H$ satisfies $\od(H) \leq 2$ if and only if it is $2$-colorable. Hence, by the algorithm of Krivelevich et al.~\cite{KrivelevichNS01}, given an $n$-vertex $3$-uniform hypergraph $H$ with $\od(H) \leq 2$ it is possible to efficiently find a coloring of $H$ that uses $\widetilde{O}(n^{1/5})$ colors, and, in particular, to obtain an orthogonal representation of $H$ of this dimension. For graphs, the first nontrivial case is where we are given as input an $n$-vertex graph $G$ with $\od(G) \leq 3$, for which we prove the following result.

\begin{theorem}\label{thm:alg_od_3}
There exists a randomized polynomial time algorithm that given an $n$-vertex graph $G$ satisfying $\od(G) \leq 3$, finds a coloring of $G$ that uses at most $\widetilde{O}(n^{0.2413})$ colors. In particular, the algorithm finds an orthogonal representation of $G$ of dimension $\widetilde{O}(n^{0.2413})$.
\end{theorem}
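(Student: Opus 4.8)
The plan is to reduce the coloring problem to that of coloring a \emph{vector $3$-colorable} graph, to solve the associated semidefinite program, and then to round it using the techniques of Karger, Motwani, and Sudan~\cite{KargerMS98} and their refinements, while exploiting the extra structure that the promise $\od(G)\le 3$ imposes on the neighborhoods of $G$.

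The first and key step is to observe that $\od(G)\le 3$ implies that $G$ is vector $3$-colorable, i.e.\ that there exist unit vectors $(x_v)_{v\in V}$ with $\langle x_v,x_w\rangle\le-\tfrac12$ for every edge $\{v,w\}$. This is what makes the problem tractable: although we are promised $\od(G)\le 3$, we do not have an actual $3$-dimensional orthogonal representation at hand, and cannot compute one in polynomial time (assuming $\P\neq\NP$, by~\cite{Peeters96}), whereas vector $3$-colorability is a semidefinite feasibility condition and hence a suitable vector coloring can be found efficiently. To prove the implication, fix a unit-normalized orthogonal representation $u\colon V\to\R^3$ with $\langle u_v,u_w\rangle=0$ on edges, and associate with every vertex $v$ the symmetric matrix $W_v=u_vu_v^{\mathsf{T}}-\tfrac13 I_3$, viewed as an element of the space of symmetric $3\times 3$ real matrices with the Frobenius inner product. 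Using $\trace(u_vu_v^{\mathsf{T}}u_wu_w^{\mathsf{T}})=\langle u_v,u_w\rangle^2$ one gets $\langle W_v,W_v\rangle=1-\tfrac13=\tfrac23$ and, for every edge $\{v,w\}$, $\langle W_v,W_w\rangle=\langle u_v,u_w\rangle^2-\tfrac13=-\tfrac13$, so the rescaled vectors $\sqrt{3/2}\,W_v$ form a vector $3$-coloring of $G$. One may additionally record that these vectors lie in a $6$-dimensional space and that every triangle of $G$ is mapped to an orthonormal basis of $\R^3$, yielding the further valid constraint that the vectors assigned to the three vertices of a triangle sum to zero; such strengthenings can be added to the program without losing feasibility.

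The second step is to color $G$ from this solution. The basic rounding of~\cite{KargerMS98} colors a vector $3$-colorable graph of maximum degree $\Delta$ with $\widetilde{O}(\Delta^{1/3})$ colors; combined with the Wigderson step — repeatedly selecting a vertex $v$ of degree exceeding a threshold $\Delta_0$ and properly $2$-coloring $G[N(v)]$, which is possible in polynomial time because $\od(G)\le 3$ forces $\od(G[N(v)])\le 2$ and hence $G[N(v)]$ is bipartite — this already yields $\widetilde{O}(n^{1/4})$ colors. To reach an exponent below $1/4$ one incorporates the refined semidefinite-and-local arguments in the spirit of Arora, Chlamtac, and Charikar~\cite{AroraCC06} and of Chlamtac~\cite{Chlamtac07}, whose core exploits exactly the $2$-colorability of neighborhoods, which is available here. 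Balancing the degree threshold $\Delta_0$ against the resulting rounding guarantee as a function of $\Delta_0$ produces a coloring of $G$ with $\widetilde{O}(n^{0.2413})$ colors. The final assertion of the theorem is then immediate: a proper $c$-coloring of $G$ yields a $c$-dimensional orthogonal representation by assigning to every vertex the standard basis vector of $\R^c$ indexed by its color, so $\od(G)\le c$.

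The main obstacle is the last part of the second step. The strongest known algorithms for genuinely $3$-colorable graphs rely on the $\ell_2^2$-triangle-inequality strengthening of the vector-coloring semidefinite program, and this strengthening is \emph{not} available here: the map $u\mapsto\sqrt{3/2}(u u^{\mathsf{T}}-\tfrac13 I_3)$ sends three vectors lying on a common great circle at pairwise angles $30^\circ,30^\circ,60^\circ$ to three vectors violating the $\ell_2^2$-triangle inequality, and more fundamentally the promise $\od(G)\le 3$ is only weaker than $3$-colorability, since graphs of small orthogonality dimension may have much larger chromatic number (Proposition~\ref{prop:od_vs_chi}). One must therefore isolate precisely the portion of the refined rounding analysis that goes through using only the constraints that remain valid under $\od(G)\le 3$ — the basic vector $3$-coloring together with the bipartiteness of neighborhoods, and possibly the triangle and low-dimensionality constraints noted above — and carrying this out carefully is what yields the exponent $0.2413$ rather than the smaller exponent attainable for $3$-colorable graphs.
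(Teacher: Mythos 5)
Your proposal follows essentially the same route as the paper: reduce to finding large independent sets (Blum), handle high-degree vertices by $2$-coloring their neighborhoods (which works because a $3$-dimensional orthogonal representation restricted to $N(v)$ lives in the $2$-dimensional complement of $u_v$), and handle the low-degree regime by SDP rounding applied to a vector $3$-coloring, balancing the degree threshold. Your explicit construction $u_v\mapsto\sqrt{3/2}\,(u_vu_v^{\mathsf T}-\tfrac13 I_3)$ is a correct, self-contained proof that $\od(G)\le 3$ implies $\svchrom(G)\le 3$; the paper instead invokes the identity $\vartheta(G)=\svchrom(\overline G)$ together with Lov\'asz's bound (and its subspace generalization $\od_k(G)\ge k\cdot\svchrom(G)$ due to Bukh and Cox, which lets the paper prove the stronger statement for $\od_k(G)\le 3k$). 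The one point you leave open --- whether the refined rounding survives without the $\ell_2^2$-triangle-inequality strengthening --- is exactly where the paper is more economical: the result it cites (the Arora--Chlamtac--Charikar analysis as stated in~\cite[Lemma~4.12]{ChlamtacH14}, with $\sigma=0.5$) is formulated under the hypothesis $\vchrom(G)\le 3$ alone, giving an independent set of size $\widetilde{\Omega}(n\cdot\Delta^{-0.3179})$ for $\Delta\le n^{0.7587}$; balancing against the high-degree case at $\Delta_0=n^{0.7587}$ yields $\widetilde{O}(n^{0.2413})$ colors with no need to re-examine which constraints of the stronger relaxation remain valid. So your ``main obstacle'' is real as a caution against importing the full $3$-colorability machinery, but it dissolves once one uses the $\vchrom$-only version of the refined bound.
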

\noindent
In fact, we prove a stronger statement than that of Theorem~\ref{thm:alg_od_3}, allowing the input graph $G$ to satisfy $\od_k(G) \leq 3k$ for some integer $k$ (rather than the special case $k=1$; see Theorem~\ref{thm:alg_od_3k}).

\subsection{Overview of Proofs}\label{sec:overview}

\subsubsection{Hardness of Approximating the Orthogonality Dimension of Hypergraphs}

Theorem~\ref{thm:intro-NP-hard-gen} is proved in two steps.
In the first, we show that approximating the orthogonality dimension of $k$-uniform hypergraphs becomes harder as the uniformity parameter $k$ grows, and in the second we prove the hardness result for $4$-uniform hypergraphs. By combining the two, Theorem~\ref{thm:intro-NP-hard-gen} follows. We elaborate below on each of these two steps.

\paragraph{The uniformity reduction.}
Our goal is to show that for every $k_1 \leq k_2$, one can efficiently transform a given $k_1$-uniform hypergraph $H_1$ to a $k_2$-uniform hypergraph $H_2$ so that $\od(H_1)=\od(H_2)$.
We borrow a reduction used in~\cite{GuruswamiHS02} for hypergraph coloring and prove that it preserves the orthogonality dimension.
For simplicity of presentation, let us consider here the case of $k_1=2$ and $k_2=4$. Given an $n$-vertex graph $G=(V,E)$ we construct a $4$-uniform hypergraph $H$ whose vertex set consists of $\ell$ copies $V_1, \ldots, V_\ell$ of $V$.
For $i \in [\ell]$, let $E_i$ denote the collection of $2$-subsets of $V_i$ that correspond to the edges of $G$.
The hyperedges of $H$ are defined as all possible unions of pairs of sets picked from distinct collections $E_i$ and $E_j$.

As a warm-up, we observe that for a sufficiently large $\ell$, say $\ell =n+1$, we have $\chi(G) = \chi(H)$. Indeed, if $G$ is $c$-colorable then the $c$-coloring of $G$ applied to each of the copies of $V$ in $H$ implies that $H$ is $c$-colorable. On the other hand, if $G$ is not $c$-colorable then for every coloring of $H$ by $c$ colors, every graph $(V_i,E_i)$ contains a monochromatic edge. By $\ell > c$, there are $i \neq j$ and sets $e_1 \in E_i$ and $e_2 \in E_j$ such that all vertices of $e_1 \cup e_2$ share the same color. This implies the existence of a monochromatic hyperedge in $H$, hence $H$ is not $c$-colorable.

We next show that for a sufficiently large $\ell$ we have $\od(G) = \od(H)$. The first direction is equally easy, namely, if $G$ has a $t$-dimensional orthogonal representation then by assigning its vectors to every copy of $V$ in $H$ we get a $t$-dimensional orthogonal representation of $H$. For the other direction, assume that $G$ satisfies $\od(G) > t$, and suppose for the sake of contradiction that $\od(H) \leq t$, i.e., there exists a $t$-dimensional orthogonal representation $(u_v)_{v \in V(H)}$ of $H$. By $\od(G) > t$, for every $i \in [\ell]$ there are two vertices $a_i,b_i \in V_i$ adjacent in $G$ whose vectors are not orthogonal, that is, $\langle u_{a_i}, u_{b_i} \rangle \neq 0$. Now, it suffices to show that for some $i \neq j$ the four vectors $u_{a_i}, u_{b_i}, u_{a_j}, u_{b_j}$ are pairwise non-orthogonal, as this would imply a contradiction to the fact that $\{a_i, b_i, a_j, b_j\}$ is a hyperedge of $H$. It is not difficult to see that for some $i \neq j$ the vectors $u_{a_i}$ and $u_{b_j}$ are not orthogonal. Indeed, let $M_1 \in \R^{\ell \times t}$ be the matrix whose rows are the vectors $u_{a_i}$ for $i \in [\ell]$, and let $M_2 \in \R^{\ell \times t}$ be the matrix whose rows are the vectors $u_{b_i}$ for $i \in [\ell]$. Consider the matrix $M \in \R^{\ell \times \ell}$ defined by $M = M_1 \cdot M_2^T$, and notice that its diagonal entries are all nonzero (because $\langle u_{a_i}, u_{b_i} \rangle \neq 0$ for every $i$) and that its rank is at most $t$. Assuming that $\ell > t$, the matrix $M$ must have some nonzero non-diagonal entry (otherwise its rank is $\ell$), implying that $\langle u_{a_i}, u_{b_j} \rangle \neq 0$ for some $i \neq j$. This, however, still does not complete the argument, since it might be the case that for these indices $i$ and $j$, one of the inner products $\langle u_{a_i}, u_{a_j} \rangle$, $\langle u_{b_i}, u_{a_j} \rangle$, and $\langle u_{b_i}, u_{b_j} \rangle$ is zero, avoiding the contradiction.

To overcome this difficulty, we use a symmetrization argument showing that the assumption $\od(H) \leq t$ implies that $H$ has some $t'$-dimensional orthogonal representation $(w_v)_{v \in V(H)}$, where $t'$ is not too large, with the following symmetry property: For every $i$ and $j$, if $\langle w_{a_i}, w_{b_j} \rangle \neq 0$ then the inner products $\langle w_{a_i}, w_{a_j} \rangle$, $\langle w_{b_i}, w_{a_j} \rangle$, and $\langle w_{b_i}, w_{b_j} \rangle$ are all nonzero.
With such an orthogonal representation, applying the above argument with $\ell > t'$ would certainly imply a contradiction and complete the proof. We achieve the symmetry property for $t'=t^4$ using vector tensor products. Namely, we assign to every vertex $a_i$ the vector $w_{a_i} = u_{a_i} \otimes u_{b_i} \otimes u_{a_i} \otimes u_{b_i}$, to every vertex $b_i$ the vector $w_{b_i} = u_{a_i}^{\otimes 2} \otimes u_{b_i}^{\otimes 2}$, and to every other vertex $v$ the vector $w_v = u_v^{\otimes 4}$. It is straightforward to verify that $(w_v)_{v \in V(H)}$ forms a $t'$-dimensional orthogonal representation of $H$. Moreover, by standard properties of tensor products we have
\[\langle w_{a_i}, w_{b_j} \rangle =
\langle u_{a_i}, u_{a_j} \rangle \cdot \langle u_{b_i}, u_{a_j} \rangle \cdot \langle u_{a_i}, u_{b_j} \rangle \cdot \langle u_{b_i}, u_{b_j} \rangle,\]
which can be used to obtain that if $\langle w_{a_i}, w_{b_j} \rangle \neq 0$ then the other three inner products $\langle w_{a_i}, w_{a_j} \rangle$, $\langle w_{b_i}, w_{a_j} \rangle$, and $\langle w_{b_i}, w_{b_j} \rangle$ are nonzero as well.
This completes the proof sketch for $k_1=2$ and $k_2=4$.
For the proof of the general case, we generalize the tensor-based argument to $k$-tuples of vertices (see Lemma~\ref{lemma:TensorORk}) and extend the matrix reasoning applied above using bounds on off-diagonal Ramsey numbers (see Lemma~\ref{lemma:Ramsey} and Remark~\ref{remark:Ramsey}).

\paragraph{Hardness for $4$-uniform hypergraphs.}

We next consider the hardness of approximating the orthogonality dimension of $4$-uniform hypergraphs.
A significant difficulty in proving such a result lies at the challenge of proving strong lower bounds on the orthogonality dimension. For the sake of comparison, in most hardness proofs for hypergraph coloring the lower bound on the chromatic number of the hypergraph $H$ constructed by the reduction is shown by an upper bound on its independence number $\alpha(H)$ and the standard inequality $\chi(H) \geq \frac{|V(H)|}{\alpha(H)}$. This approach cannot be used for the orthogonality dimension, which in certain cases can be exponentially smaller than this ratio (see Proposition~\ref{prop:od_vs_chi}).
Exceptions of this approach, where the lower bound on the chromatic number is not proved via the independence number, are the works of Dinur et al.~\cite{DinurRS05} and Bhangale~\cite{Bhangale18} on which we elaborate next.

A standard technique in proving hardness of approximation results is to reduce from the Label Cover problem, in which given a collection of constraints over a set of variables the goal is to decide whether there exists an assignment that satisfies all the constraints or any assignment satisfies only a small fraction of them (see Section~\ref{sec:LabelCover}). In such reductions, every variable over a domain $[R]$ is encoded via an error-correcting code known as the long code, and the constraints are replaced by ``inner'' constraints designed for the specific studied problem. One way to view the long code is as the graph whose vertices are all subsets of $[R]$ where two sets are adjacent if they are disjoint~\cite{DinurSafra05}. In the hardness proof of~\cite{DinurRS05} for the chromatic number of $3$-uniform hypergraphs, this graph was replaced by the induced subgraph that consists only of subsets of a given size (i.e., a Kneser graph), where a label $\alpha \in [R]$ is encoded by the $2$-coloring of the vertices according to whether the sets contain $\alpha$ or not. The analysis of~\cite{DinurRS05} is crucially based on the large chromatic number of Kneser graphs~\cite{LovaszKneser} and on the property that every coloring of Kneser graphs with number of colors smaller than their chromatic number enforces a large color class that includes a monochromatic edge. The latter property was proved in~\cite{DinurRS05} using the chromatic number of the Schrijver graph~\cite{SchrijverKneser78}, a vertex-critical subgraph of the Kneser graph.
The approach of~\cite{DinurRS05} was recently extended by Bhangale~\cite{Bhangale18}, who used in his long code construction only the vertices of the Schrijver graph. The fact that this subgraph has much fewer vertices and yet large chromatic number has led to improved hardness factors. However, for the analysis to work the ``inner'' constraints had to include four vertices, and this is the reason that the result was obtained for $4$-uniform hypergraphs (and not for $3$-uniform hypergraphs as in~\cite{DinurRS05}).

In the current work we prove the hardness of approximating the orthogonality dimension of $4$-uniform hypergraphs using the reduction applied in~\cite{Bhangale18} for hypergraph coloring. While we achieve the same hardness factors as in~\cite{Bhangale18}, the analysis relies on several different ideas and tools.
This includes the aforementioned symmetrization argument for orthogonal representations (see Lemma~\ref{lemma:TensorORk}), a lower bound of Golovnev et al.~\cite{GolovnevRW17} on the sparsity of low rank matrices with nonzero entries on the diagonal (see Lemma~\ref{lemma:GRW}), and the orthogonality dimension of Schrijver graphs determined in~\cite{Haviv18topo} (see Theorem~\ref{thm:SchrijverGraph}).

\subsubsection{Hardness of Approximating the Orthogonality Dimension of Graphs}

Theorem~\ref{thm:hard_F} relates the hardness of approximating the orthogonality dimension of graphs to orthogonal subspace representations.
Our starting point is the $\NP$-hardness of deciding whether an input graph $G$ satisfies $\od(G) \leq 3$~\cite{Peeters96}. Following an approach of Garey and Johnson~\cite{GareyJ76a}, our reduction constructs a graph $G'$ defined as the lexicographic product of some fixed graph $F$ and the input graph $G$. Namely, we replace every vertex of $F$ by a copy of $G$ and replace every edge of $F$ by a complete bipartite graph between the vertex sets associated with its endpoints (see Definition~\ref{def:lexi}).
We then show that if $\od(G) \leq 3$ then $G'$ has a $\od_3(F)$-dimensional orthogonal representation, whereas if $\od(G) \geq 4$ the orthogonality dimension of $G'$ is at least $\od_4(F)$.
It would be interesting to figure out the best hardness factors that Theorem~\ref{thm:hard_F} can yield (see Question~\ref{question:k=4}).
We note, though, that our approach is limited to multiplicative hardness gaps bounded by $2$, as it is easy to see that every graph $F$ satisfies~ $\od_4(F) \leq \od_3(F)+\od_1(F) \leq 2 \cdot \od_3(F)$.

\subsubsection{Coloring Graphs with Orthogonality Dimension Three}

Consider the problem in which given an $n$-vertex graph $G$ satisfying $\od(G) \leq 3$, the goal is to find an orthogonal representation of $G$ of as low dimension as possible. Employing an approach of~\cite{ChlamtacH14}, we attempt to find a coloring of $G$ with a small number of colors, as this in particular gives an orthogonal representation of the same dimension.
As mentioned before, for every $c \geq 3$ there are known efficient algorithms for coloring $n$-vertex $c$-colorable graphs, however, our only guarantee on $G$ is that its orthogonality dimension is at most $3$.
Interestingly, it follows from a theorem of Kochen and Specker~\cite{KochenS67} (see also~\cite{GodsilZ88}) that the largest possible chromatic number of such a graph is $4$.
It follows that given an $n$-vertex graph $G$ with $\od(G) \leq 3$, one can simply apply the efficient algorithm of~\cite{HalperinNZ02} for coloring $4$-colorable graphs to obtain a coloring of $G$ by $\widetilde{O}(n^{\gamma})$ colors where $\gamma = 7/19 \approx 0.368$.

To improve on this bound, we show an efficient algorithm that finds a large independent set in a given graph $G$ satisfying $\od(G) \leq 3$.
We consider two cases according to the maximum degree in the graph.
If $G$ has a vertex of large degree then the algorithm finds a large independent set in its neighborhood. This can be done since the assumption $\od(G) \leq 3$ implies that the neighborhood of every vertex of $G$ is $2$-colorable (just as for $3$-colorable graphs).
Otherwise, in case that all the degrees in $G$ are small, we use an algorithm of Karger et al.~\cite{KargerMS98} based on a semidefinite relaxation of the chromatic number, called the {\em vector chromatic number}. Our analysis relies on a result by Lov\'asz~\cite{Lovasz79} relating the (strict) vector chromatic number of graphs to their orthogonality dimension.
Now, by repeatedly omitting independent sets in $G$, we obtain a coloring that uses $\widetilde{O}(n^{1/4})$ colors.
This can be slightly improved to $\widetilde{O}(n^{0.2413})$ by applying the refined analysis of Arora et al.~\cite{AroraCC06} for the rounding algorithm of~\cite{KargerMS98}.

As already mentioned, our algorithm can handle any graph $G$ that satisfies $\od_k(G) \leq 3k$ for some integer $k$, rather than for $k=1$ (see Theorem~\ref{thm:alg_od_3k}). The generalized analysis involves a connection, recently proved by Bukh and Cox~\cite{BukhC18}, between the (strict) vector chromatic number and the graph parameters $\od_k$.

\subsection{Outline}

The rest of the paper is organized as follows.
In Section~\ref{sec:preliminaries}, we provide some background on the orthogonality dimension and on the Kneser and Schrijver graphs.
In Section~\ref{sec:uniformity}, we present and analyze the uniformity reduction.
In Section~\ref{sec:hyper_hard}, we prove our hardness result for $4$-uniform hypergraphs and derive Theorem~\ref{thm:intro-NP-hard-gen}. In the final Section~\ref{sec:graph_hard_alg}, we prove our hardness and algorithmic results on the orthogonality dimension of graphs, confirming Theorems~\ref{thm:hard_F} and~\ref{thm:alg_od_3}.

\section{Preliminaries}\label{sec:preliminaries}

\subsection{Orthogonality Dimension}

We define the orthogonality dimension of hypergraphs over a general field.
\begin{definition}
A $t$-dimensional {\em orthogonal representation} of a hypergraph $H=(V,E)$ over a field $\Fset$ is an assignment of a vector $u_v \in \Fset^t$ with $\langle u_v,u_v \rangle \neq 0$ to every vertex $v \in V$, such that for every hyperedge $e \in E$ there are two vertices $v,v' \in e$ satisfying $\langle u_v, u_{v'} \rangle = 0$. The {\em orthogonality dimension} of a hypergraph $H=(V,E)$ over $\Fset$, denoted by $\od(H, \Fset)$, is the smallest integer $t$ for which there exists a $t$-dimensional orthogonal representation of $H$ over $\Fset$. For the real field $\R$, we let $\od(H)$ stand for $\od(H,\R)$.
\end{definition}

The following proposition shows that there are graphs whose orthogonality dimension is exponentially smaller than their chromatic number.

\begin{proposition}\label{prop:od_vs_chi}
There exists an explicit family of graphs $G_t$ such that $\od(G_t) \leq t$ and $\chi(G_t) \geq 2^{\Omega(t)}$.
\end{proposition}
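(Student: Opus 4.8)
The plan is to realize the gap using the Frankl--Rödl forbidden intersections theorem. Assume first that $t$ is divisible by $4$ and let $G_t$ be the (clearly explicit) graph whose vertex set is $\binom{[t]}{t/2}$, the collection of all $(t/2)$-subsets of $[t]$, where two sets $A$ and $B$ are adjacent if and only if $|A\cap B| = t/4$. I will show $\od(G_t)\le t$ directly, and $\chi(G_t)\ge 2^{\Omega(t)}$ by bounding the independence number of $G_t$ via Frankl--Rödl.

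For the upper bound on $\od(G_t)$, I would assign to each vertex $A$ the sign vector $u_A\in\{-1,+1\}^t\subseteq\R^t$ defined by $(u_A)_i=+1$ for $i\in A$ and $(u_A)_i=-1$ for $i\notin A$. Each $u_A$ is nonzero, and using $|A|=|B|=t/2$ one gets the elementary identity
\[\langle u_A,u_B\rangle = 4\,|A\cap B|-t,\]
so $u_A$ and $u_B$ are orthogonal precisely when $|A\cap B|=t/4$, i.e.\ precisely when $A$ and $B$ are adjacent. Hence $(u_A)_A$ is a $t$-dimensional orthogonal representation of $G_t$ and $\od(G_t)\le t$.

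For the lower bound on $\chi(G_t)$, note that an independent set in $G_t$ is exactly a family $\calF\subseteq\binom{[t]}{t/2}$ no two of whose members intersect in exactly $t/4$ elements. By the Frankl--Rödl theorem (applied with forbidden intersection size $t/4$), there is an absolute constant $\epsilon>0$ such that every such family has size at most $(2-\epsilon)^t$ once $t$ is a sufficiently large multiple of $4$; thus $\alpha(G_t)\le(2-\epsilon)^t$. Combining this with $|V(G_t)|=\binom{t}{t/2}\ge 2^t/(t+1)$ and $\chi(G_t)\ge |V(G_t)|/\alpha(G_t)$ yields
\[\chi(G_t)\ \ge\ \frac{2^t}{(t+1)(2-\epsilon)^t}\ =\ \frac{1}{t+1}\Big(\frac{2}{2-\epsilon}\Big)^t\ =\ 2^{\Omega(t)}.\]
For $t$ not divisible by $4$ I would simply let $G_t$ be a copy of $G_{t'}$ for $t'$ the largest multiple of $4$ with $t'\le t$, so that $\od(G_t)\le t'\le t$ and $\chi(G_t)\ge 2^{\Omega(t')}=2^{\Omega(t)}$ since $t'\ge t-3$.

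The inner-product identity and the counting are routine; the single nontrivial ingredient is the Frankl--Rödl upper bound on intersection-avoiding families, which I would invoke as a black box. I would also flag the minor but genuine point behind the divisibility restriction: for $t\equiv 2\pmod 4$ the sign vectors of even weight already form an independent set of size $2^{t-1}$ in the analogous graph, so the argument must be run along multiples of $4$ and then padded — which is the only place a little care is needed.
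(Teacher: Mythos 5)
Your proof is correct and follows essentially the same route as the paper: the same graph on $(t/2)$-subsets of $[t]$ with adjacency at intersection size $t/4$, the same $\pm1$ sign-vector orthogonal representation, and the same Frankl--R\"odl bound on the independence number combined with $\chi\ge |V|/\alpha$. The only difference is that you explicitly handle the case $t\not\equiv 0\pmod 4$ by padding, a detail the paper leaves implicit.
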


\begin{proof}
Let $t$ be an integer divisible by $4$.
Consider the graph $G_t=(V,E)$ whose vertices are all the $\frac{t}{2}$-subsets of $[t]$ where two sets $A,B \in V$ are adjacent if their intersection size satisfies $|A \cap B| = \frac{t}{4}$. Notice that by $|A|=|B|=\frac{t}{2}$, this condition is equivalent to $|A \bigtriangleup B| = \frac{t}{2}$. Assign to every vertex $A$ the vector $u_A \in \{ \pm 1\}^t$ where $(u_A)_i = +1$ if $i \in A$ and $(u_A)_i = -1$ otherwise. We claim that $(u_A)_{A \in V}$ is an orthogonal representation of $G_t$. Indeed, for every adjacent vertices $A,B \in V$ we have
\[\langle u_A, u_B \rangle = (-1) \cdot |A \bigtriangleup B| + (t - |A \bigtriangleup B|) = t-2 \cdot |A \bigtriangleup B|=0.\]
This implies that $\od(G_t) \leq t$. On the other hand, by a celebrated result of Frankl and R{\"o}dl~\cite{FranklR87}, $\alpha(G_t) \leq 2^{c \cdot t}$ for some $c<1$, implying that
\[\chi(G_t) \geq \frac{|V|}{\alpha(G_t)} \geq \frac{\binom{t}{t/2}}{2^{c \cdot t}} \geq 2^{(1-c-o(1)) \cdot t},\]
completing the proof.
\end{proof}

\subsection{Kneser and Schrijver Graphs}\label{sec:Kneser}

For integers $d \geq 2s$, the {\em Kneser graph} $K(d,s)$ is the graph whose vertices are all the $s$-subsets of $[d]$, where two sets are adjacent if they are disjoint.

A set $A \subseteq [d]$ is said to be {\em stable} if it does not contain two consecutive elements modulo $d$ (that is, if $i \in A$ then $i+1 \notin A$, and if $d \in A$ then $1 \notin A$). In other words, a stable subset of $[d]$ is an independent set in the cycle $C_d$ with the numbering from $1$ to $d$ along the cycle.
For integers $d \geq 2s$, the {\em Schrijver graph} $S(d,s)$ is the graph whose vertices are all the stable $s$-subsets of $[d]$, where two sets are adjacent if they are disjoint.

The orthogonality dimension of the Kneser and Schrijver graphs was determined in~\cite{Haviv18topo} using topological methods.
\begin{theorem}[\cite{Haviv18topo}]\label{thm:SchrijverGraph}
For every $d \geq 2s$,~ $\od(K(d,s)) = \od(S(d,s)) = d-2s+2$.
\end{theorem}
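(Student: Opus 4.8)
The plan is to establish the equality $\od(K(d,s)) = \od(S(d,s)) = d-2s+2$ by sandwiching all three quantities. Since $S(d,s)$ is an induced subgraph of $K(d,s)$, we immediately have $\od(S(d,s)) \le \od(K(d,s))$, so it suffices to prove two things: an upper bound $\od(K(d,s)) \le d-2s+2$ and a matching lower bound $\od(S(d,s)) \ge d-2s+2$. The upper bound is the easier direction. First I would exhibit an explicit $(d-2s+2)$-dimensional orthogonal representation of $K(d,s)$; this is essentially classical (it underlies the standard upper bound $\chi(K(d,s)) \le d-2s+2$). One concrete route: fix $d-2s+1$ generic ``directions'' and assign to each $s$-set $A \subseteq [d]$ a vector built from the characteristic information of $A$ in such a way that disjoint sets receive orthogonal vectors; alternatively, one uses the known fact that $K(d,s)$ admits a homomorphism-type structure reducing dimension from $d$ to $d-2s+2$ by exploiting that any two disjoint $s$-sets live inside a common $2s$-set. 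The cleanest presentation is probably to directly write down the vectors and verify the inner-product condition by a short computation on the overlap sizes.

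The lower bound $\od(S(d,s)) \ge d-2s+2$ is where the real content lies, and I expect this to be the main obstacle. The obstruction is exactly the one flagged in the introduction: one cannot lower-bound $\od$ via the independence number, since $\od$ can be exponentially smaller than $|V|/\alpha$. Instead the argument must be genuinely topological, mirroring Lov\'asz's proof that $\chi(K(d,s)) \ge d-2s+2$ but upgraded to control orthogonality dimension rather than chromatic number. The key step is to associate to a hypothetical $t$-dimensional orthogonal representation of $S(d,s)$ a continuous map that, together with the $\Z_2$-symmetry $A \mapsto$ (complement-type involution) on stable sets, contradicts a Borsuk--Ulam-type statement when $t < d-2s+2$. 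Concretely, I would: (i) suppose $(u_A)_{A \in V(S(d,s))}$ is an orthogonal representation in $\R^t$; (ii) normalize each $u_A$ to the unit sphere $S^{t-1}$ and use the pair $\{u_A, -u_A\}$ to get a map into the projective space, or set up a $\Z_2$-equivariant map on a suitable sphere; (iii) invoke the topological lower bound on $S(d,s)$ — its $\Z_2$-index / coindex is $d-2s$, the same as for $K(d,s)$, by Schrijver's vertex-criticality and B\'ar\'any-type arguments — to force $t-1 \ge d-2s+1$, i.e. $t \ge d-2s+2$. The delicate part is arranging the equivariant map correctly: adjacency in $S(d,s)$ gives orthogonality $\langle u_A, u_B\rangle = 0$, and one needs to convert this combinatorial orthogonality into the nonexistence of an equivariant map $S^{d-2s+1} \to S^{t-2}$ (or the appropriate space), which is precisely where the index/coindex machinery and the structure of stable sets under the cyclic shift enter.

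Assembling the pieces: the upper bound gives $\od(K(d,s)) \le d-2s+2$; the induced-subgraph inequality gives $\od(S(d,s)) \le \od(K(d,s)) \le d-2s+2$; and the topological lower bound gives $\od(S(d,s)) \ge d-2s+2$. Chaining these forces all three quantities to equal $d-2s+2$, which is the claim. I would remark that the lower-bound argument is where one should cite and adapt the machinery of~\cite{Haviv18topo} rather than reprove it from scratch, since the topological index computation for Schrijver graphs is the technical heart; the rest is bookkeeping. The one subtlety to watch is the boundary case $d = 2s$, where $d-2s+2 = 2$ and both graphs are perfect matchings (so $\od = 2$ trivially), which should be checked separately to make sure the topological argument degrades gracefully.
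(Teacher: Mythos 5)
This statement is not proved in the paper at all: it is quoted verbatim from~\cite{Haviv18topo} as an external result (``The orthogonality dimension of the Kneser and Schrijver graphs was determined in~\cite{Haviv18topo} using topological methods''), so there is no internal proof to compare against. Your outline is consistent with what the cited work actually does: the upper bound is immediate from the paper's inequality~\eqref{eq:chi_vs_od} together with $\chi(K(d,s)) \leq d-2s+2$ (you could state it that cleanly rather than gesturing at ``generic directions''), the induced-subgraph inequality $\od(S(d,s)) \leq \od(K(d,s))$ is correct, and the lower bound for $S(d,s)$ is indeed the topological heart, obtained in~\cite{Haviv18topo} via a Borsuk--Ulam-type index argument applied to an orthogonal representation rather than a coloring. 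Two caveats. First, your step (iii) does not constitute a proof of the lower bound; you explicitly black-box the equivariant-map construction and the index computation for $S(d,s)$, which is exactly the nontrivial content --- so your write-up is a citation with extra scaffolding, which is acceptable here only because the paper itself does the same. Second, your dimension bookkeeping is internally inconsistent: the nonexistence of an equivariant map $S^{d-2s+1} \to S^{t-2}$ would force $t \geq d-2s+3$, not $t \geq d-2s+2$; the correct target is a map whose existence would require $t-2 \geq d-2s$, matching your stated conclusion $t \geq d-2s+2$. If you were to actually carry out the argument, that off-by-one is precisely where sketches of Borsuk--Ulam proofs tend to go wrong, so it would need care. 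The boundary case $d=2s$ observation is fine.
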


The number of vertices in $K(d,s)$ is clearly $\binom{d}{s}$. We need the following simple bound, given in~\cite{DinurRS05}, on the number of vertices in $S(d,s)$.
\begin{lemma}[\cite{DinurRS05}]\label{lemma:SchrijverV}
For every $d \geq 2s$, the number of vertices in $S(d,s)$ is at most $\binom{d}{d-2s}$.
\end{lemma}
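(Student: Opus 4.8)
The plan is to count the vertices of $S(d,s)$ — the stable $s$-subsets of $[d]$ — by exhibiting an injection from this set into the family of $2s$-subsets of $[d]$; since $\binom{d}{2s}=\binom{d}{d-2s}$, this will give the asserted bound. The relevant range is $d>2s$ (when $d=2s$ the graph $S(2s,s)$ is a single edge and is not used elsewhere), so assume $d>2s$. For a stable $s$-set $A \subseteq [d]$, write $A+1=\{\,a+1 \bmod d : a\in A\,\}$ for its cyclic shift and set $\phi(A)=A\cup(A+1)$. The first step is to verify that $|\phi(A)|=2s$: stability says that $a\in A$ forces $a+1\notin A$, so the shift $a\mapsto a+1$ maps $A$ into its complement; hence $A\cap(A+1)=\emptyset$ and $|A+1|=|A|=s$, so $\phi(A)$ has exactly $2s$ elements and $\phi$ maps the vertex set of $S(d,s)$ into $\binom{[d]}{2s}$. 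It remains to show that $\phi$ is injective, which is the heart of the argument.

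To reconstruct $A$ from $B:=\phi(A)$ I would use the cyclic run structure of $B$. Since $|B|=2s<d$, the set $B$ is a proper subset of $[d]$, so each maximal block of consecutive elements of $B$ (indices taken mod $d$) has a well-defined left endpoint. Fix such a block $\{r,r+1,\dots,r+\ell-1\}$, so that $r-1\notin B$ and $r+\ell\notin B$. From $r-1\notin B\supseteq A+1$ we get $r\notin A+1$, and since $r\in B=A\cup(A+1)$ this forces $r\in A$; by stability $r+1\notin A$, so $r+1$ belongs to $B$ only because $r\in A$, i.e.\ $r+1\in(A+1)\setminus A$. Iterating this along the block shows, by induction on the offset, that the membership pattern alternates: the element at even offset from $r$ lies in $A$, and the one at odd offset lies in $(A+1)\setminus A$. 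In particular, if $r+i$ is the last element of the block then $i$ must be odd, because an element of $A$ is always immediately followed inside $B$ by its shift; thus $\ell$ is even and $A$ meets this block in exactly $\{r,r+2,\dots,r+\ell-2\}$. Doing this for every maximal block recovers $A$ from $B$, so $\phi$ is injective, and therefore $|V(S(d,s))| \le \binom{d}{2s} = \binom{d}{d-2s}$.

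The main obstacle is the injectivity step, and inside it the structural claim that every maximal run of $\phi(A)$ has even length, with the forced alternating pattern $A,A{+}1,A,A{+}1,\dots$ within it; this is exactly where one uses stability together with the fact that each run is bounded on the left by an element outside $\phi(A)$ (which is why $d>2s$ is needed). Everything else — the cardinality count and the elementary identity $\binom{d}{2s}=\binom{d}{d-2s}$ — is routine. As an alternative, one could instead establish the exact count $\frac{d}{d-s}\binom{d-s}{s}$ for the number of stable $s$-subsets of $[d]$ (the number of independent $s$-sets in the cycle $C_d$) and then verify $\frac{d}{d-s}\binom{d-s}{s}\le\binom{d}{2s}$ directly, but the injection above is cleaner and avoids a separate inequality.
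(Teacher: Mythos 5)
Your proof is correct, and it is essentially the standard argument: the paper itself imports this lemma from the cited reference without proof, and the proof there encodes a stable $s$-set $A$ by the $(d-2s)$-set $[d]\setminus(A\cup(A+1))$, which is just the complement of your map $\phi(A)=A\cup(A+1)$; the injectivity argument via the forced alternating pattern inside each maximal run of $\phi(A)$ is the same in substance. Two small remarks. First, your restriction to $d>2s$ is not merely convenient but necessary: at $d=2s$ the graph $S(2s,s)$ has exactly two vertices (the two sides of the bipartition of $C_{2s}$) while $\binom{d}{d-2s}=1$, so the inequality as literally stated fails at the boundary; since every invocation in the paper has $d-2s\geq t-1\geq 1$, this is harmless, but it is worth stating explicitly that the bound is being proved for $d>2s$ rather than silently assuming it. Second, in the step establishing $r\in A$ you write ``from $r-1\notin B\supseteq A+1$ we get $r\notin A+1$''; the inclusion you actually need there is $B\supseteq A$, which gives $r-1\notin A$ and hence $r\notin A+1$ — the conclusion is right, but the cited inclusion is the wrong one of the two. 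Everything else, including the parity argument showing each run has even length and $A$ occupies exactly the even offsets, checks out.
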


\section{The Uniformity Reduction}\label{sec:uniformity}

In this section we reduce the problem of approximating the orthogonality dimension of $k_1$-uniform hypergraphs over a field $\Fset$ to approximating it on $k_2$-uniform hypergraphs, where $k_1 \leq k_2$.

\subsection{Ramsey Numbers}

For integers $s$ and $t$, the {\em Ramsey number} $R(s,t)$ is defined as the smallest integer $n$ such that every $n$-vertex graph contains a clique of size $s$ or an independent set of size $t$ (or both).
We need the following well-known upper bound on Ramsey numbers due to Erd{\"{o}}s and Szekeres~\cite{ErdosS35}.

\begin{theorem}[\cite{ErdosS35}]\label{thm:Ramsey_ES}
For all integers $s$ and $t$,~ $R(s,t) \leq \binom{t+s-2}{s-1}$.
\end{theorem}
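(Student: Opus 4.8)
The plan is to prove the bound by induction on $s+t$, using the classical Ramsey recurrence together with Pascal's identity.

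First I would dispose of the base cases. When $s=1$, every graph (even the one on a single vertex) contains a clique of size $1$, so $R(1,t)=1=\binom{t-1}{0}$; symmetrically $R(s,1)=1=\binom{s-1}{s-1}$. (One could equally take $s=2$ or $t=2$ as the base, since $R(2,t)=t=\binom{t}{1}$ and $R(s,2)=s=\binom{s}{s-1}$, but $s=1$ or $t=1$ makes the induction cleanest.)

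The heart of the argument is the recurrence $R(s,t)\le R(s-1,t)+R(s,t-1)$ for $s,t\ge 2$. To prove it, set $n=R(s-1,t)+R(s,t-1)$ and let $G$ be an arbitrary graph on vertex set $[n]$. Fix a vertex $v$ and partition the other $n-1$ vertices into the set $A$ of neighbors of $v$ and the set $B$ of non-neighbors of $v$. Since $|A|+|B|=n-1=R(s-1,t)+R(s,t-1)-1$, at least one of $|A|\ge R(s-1,t)$ or $|B|\ge R(s,t-1)$ holds. In the first case, the subgraph induced on $A$ contains either an independent set of size $t$ (done), or a clique of size $s-1$, which together with $v$ gives a clique of size $s$. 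In the second case, by symmetry, the subgraph induced on $B$ contains either a clique of size $s$ (done), or an independent set of size $t-1$, which together with $v$ gives an independent set of size $t$. In every case $G$ has the required clique or independent set, so $R(s,t)\le n$.

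Finally I would close the induction: assuming the bound for all pairs with smaller sum, for $s,t\ge 2$ we get
\[
R(s,t)\le R(s-1,t)+R(s,t-1)\le \binom{t+s-3}{s-2}+\binom{t+s-3}{s-1}=\binom{t+s-2}{s-1},
\]
the last step being Pascal's identity. This is exactly the claimed bound. I do not expect any genuine obstacle here — this is a textbook argument; the only points needing a moment of care are the pigeonhole step guaranteeing that $v$ has either many neighbors or many non-neighbors, and verifying that the indices of the binomial coefficients are arranged so that Pascal's identity matches the Ramsey recurrence precisely.
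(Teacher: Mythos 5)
Your argument is correct and is the standard Erd\H{o}s--Szekeres proof: the recurrence $R(s,t)\le R(s-1,t)+R(s,t-1)$ via the neighbor/non-neighbor pigeonhole, followed by induction and Pascal's identity, with the base cases handled properly. The paper itself states this theorem as a cited classical result without proof, so there is nothing to compare against beyond noting that your write-up is the canonical derivation and is complete.
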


The following lemma relates Ramsey numbers to nonzero patterns of matrices with nonzero entries on the diagonal.

\begin{lemma}\label{lemma:Ramsey}
Let $\Fset$ be a field, and let $M$ be an $n \times n$ matrix over $\Fset$ with nonzero entries on the diagonal.
Denote $t = {\rank}_\Fset(M)$ and suppose that $n \geq R(s,t+1)$.
Then there exists a set $C \subseteq [n]$ of size $|C|=s$ such that for every $i,i' \in C$ it holds that $M_{i,i'} \neq 0$ or $M_{i',i} \neq 0$ (or both).
\end{lemma}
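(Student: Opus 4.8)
The plan is to translate the nonzero pattern of $M$ into a graph and then invoke the definition of the Ramsey number directly. Concretely, I would define an auxiliary graph $\Gamma$ on the vertex set $[n]$ in which two distinct indices $i,i'$ are declared adjacent if and only if $M_{i,i'} \neq 0$ or $M_{i',i} \neq 0$. With this definition, a set $C \subseteq [n]$ satisfies the conclusion of the lemma exactly when $C$ is a clique in $\Gamma$, so the task reduces to producing a clique of size $s$ in $\Gamma$.

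Since $n \geq R(s,t+1)$, the definition of the Ramsey number guarantees that the $n$-vertex graph $\Gamma$ contains either a clique of size $s$ or an independent set of size $t+1$ (if it contains a larger independent set, pass to a subset of size exactly $t+1$). In the first case we are done, so the remaining work is to rule out the second case. Suppose $I \subseteq [n]$ is an independent set in $\Gamma$ with $|I| = t+1$. By the definition of $\Gamma$, for all distinct $i,i' \in I$ we have $M_{i,i'} = 0$ and $M_{i',i} = 0$; together with the hypothesis that the diagonal entries of $M$ are nonzero, this means that the principal submatrix $M[I,I]$ is a diagonal matrix all of whose diagonal entries are nonzero. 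Hence $\rank_\Fset\!\big(M[I,I]\big) = |I| = t+1$.

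This is the desired contradiction: the rank of any submatrix of $M$ is at most $\rank_\Fset(M) = t < t+1$. Therefore the independent-set alternative cannot occur, so $\Gamma$ must contain a clique $C$ of size $s$, and this $C$ is exactly the set required by the lemma.

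I do not anticipate a real obstacle in this argument; the only points needing a line of justification are the monotonicity of rank under passing to submatrices and the precise meaning of $R(s,t+1)$ (clique of size $s$ \emph{or} independent set of size $t+1$), both of which are standard. Note that only the \emph{definition} of the Ramsey number is used here; if one later wants the explicit dependence on $s$ and $t$ (as in the applications in Section~\ref{sec:uniformity}), one can substitute the bound of Theorem~\ref{thm:Ramsey_ES}, but it plays no role in establishing the lemma itself.
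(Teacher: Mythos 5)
Your proof is correct and follows essentially the same route as the paper: build the auxiliary graph on $[n]$ whose edges record the nonzero off-diagonal pattern, note that an independent set of size $t+1$ would yield a nonsingular $(t+1)\times(t+1)$ principal diagonal submatrix contradicting $\rank_\Fset(M)=t$, and conclude via the definition of $R(s,t+1)$ that a clique of size $s$ exists. No gaps.
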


\begin{proof}
Let $M \in \Fset^{n \times n}$ be a matrix with nonzero entries on the diagonal such that $t= {\rank}_\Fset(M)$ and $n \geq R(s,t+1)$.
Define a graph $G$ on the vertex set $[n]$ where two distinct vertices $i$ and $i'$ are adjacent if $M_{i,i'} \neq 0$ or $M_{i',i} \neq 0$ (or both).
Observe that any principal sub-matrix of $M$ that corresponds to an independent set in $G$ is diagonal and has nonzero entries on the diagonal. This implies that the size of any independent set in $G$ does not exceed the rank $t$ of $M$, that is, $\alpha(G) < t+1$. Since $G$ has at least $R(s,t+1)$ vertices, it follows that $G$ contains a clique $C$ of size $s$. This set $C$ satisfies the requirement of the lemma.
\end{proof}

\subsection{Symmetrization Lemma}

We turn to prove our symmetrization lemma for orthogonal representations, which is used in the analysis of the uniformity reduction (Section~\ref{sec:symmetry}) and in our hardness proof for $4$-uniform hypergraphs (Section~\ref{sec:hyper_hard}).
The lemma says, roughly speaking, that given an orthogonal representation of a hypergraph $H$ and a collection $A$ of pairwise disjoint $k$-tuples of vertices of $H$, one can construct another orthogonal representation of $H$ with a polynomially related dimension such that the following property holds: For every two $k$-tuples $a$ and $b$ in $A$ such that the vectors associated with the first vertex of $a$ and the second vertex of $b$ are not orthogonal, it holds that all the vectors associated with the vertices of $a$ are not orthogonal to all the vectors associated with the vertices of $b$. The lemma is used to prove lower bounds on the orthogonality dimension of hypergraphs.
Indeed, in order to show that an assignment of vectors to the vertices does not form an orthogonal representation, one has to show that the vectors associated with the vertices of some hyperedge are pairwise non-orthogonal.
For a hyperedge that consists of vectors from some $k$-tuples of $A$, the lemma essentially allows us to consider only the first two vertices of every such $k$-tuple.

In the proof of the lemma we use the notion of tensor product of vectors.
Recall that for a field $\Fset$, the tensor product of the vectors $u \in \Fset^n$ and $u' \in \Fset^{n'}$, denoted by $u \otimes u'$, is a vector in $\Fset^{n \cdot n'}$ with coordinates corresponding to all products $u_i \cdot u'_{i'}$ for $i \in [n]$ and $i' \in [n']$. We let $u^{\otimes k}$ denote the tensor product of the vector $u$ with itself $k$ times.
It is well known and easy to verify that for every $u,w \in \Fset^n$ and $u',w' \in \Fset^{n'}$ it holds that $\langle u \otimes u', w \otimes w' \rangle = \langle u,w \rangle \cdot \langle u',w' \rangle$.
We also need the following notation.
For a set $V$ and a $k$-tuple $a \in V^k$ we let $a[i]$ stand for the $i$th component of $a$, that is, $a = (a[1],a[2],\ldots,a[k])$.

\begin{lemma}\label{lemma:TensorORk}
Let $k \geq 2$ be an integer, let $H=(V,E)$ be a hypergraph with a $t$-dimensional orthogonal representation $(u_v)_{v \in V}$ over a field $\Fset$, and let $A \subseteq V^k$ be a collection of pairwise disjoint $k$-tuples of vertices of $H$.
Suppose that for every $a \in A$ and $i,i' \in [k]$, $\langle u_{a[i]} , u_{a[i']} \rangle \neq 0$.
Then there exists a $t^{k^2}$-dimensional orthogonal representation $(w_v)_{v \in V}$ of $H$ over $\Fset$ such that
\begin{enumerate}
  \item for every $a \in A$ and $i,i' \in [k]$, $\langle w_{a[i]} , w_{a[i']} \rangle \neq 0$, and
  \item for every (distinct) $a,b \in A$ such that $\langle w_{a[1]} , w_{b[2]} \rangle \neq 0$ it holds that $\langle w_{a[i]} , w_{b[i']} \rangle \neq 0$ for all $i,i' \in [k]$.
\end{enumerate}
\end{lemma}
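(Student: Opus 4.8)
The plan is to define $(w_v)_{v\in V}$ by taking, for each vertex, a suitable tensor product of $k^2$ copies of vectors from the original representation, chosen so that the inner product $\langle w_{a[i]},w_{b[i']}\rangle$ factors as a product over all $k^2$ pairs $(j,j')\in[k]\times[k]$ of inner products of the form $\langle u_{a[j]},u_{b[j']}\rangle$. Concretely, index the $k^2$ tensor slots by pairs $(j,j')\in[k]^2$. For a vertex that is the $i$th component of some $k$-tuple $a\in A$, I would set $w_{a[i]}$ to be the tensor product over all $(j,j')\in[k]^2$ of the vector $u_{a[f_{i}(j,j')]}$, where $f_i$ is chosen so that when we pair the $(j,j')$-slot of $w_{a[i]}$ against the $(j,j')$-slot of $w_{b[i']}$ we get exactly $\langle u_{a[j]},u_{b[j']}\rangle$ for the relevant $(i,i')$. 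A clean way to enforce this: let the $(j,j')$-slot of $w_{a[i]}$ hold $u_{a[j]}$ if $i=1$ and $u_{a[j']}$ if $i=2$, and more generally one picks, for each $i$, a ``profile'' vector in $[k]^{k^2}$ whose $(j,j')$-coordinate is $j$ or $j'$ in a pattern ensuring the factorization; the key point is that every $k$-tuple $a\in A$ is pairwise disjoint from every other, so these assignments never conflict. For a vertex $v$ not appearing in any $k$-tuple of $A$, simply put $w_v=u_v^{\otimes k^2}$.

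Next I would verify that $(w_v)_{v\in V}$ is a genuine $t^{k^2}$-dimensional orthogonal representation of $H$ over $\Fset$. Each $w_v$ is a tensor product of $k^2$ vectors each of which is some $u_{v'}$ with $\langle u_{v'},u_{v'}\rangle\neq 0$; if $v\notin\bigcup A$ then all factors are $u_v$, so $\langle w_v,w_v\rangle=\langle u_v,u_v\rangle^{k^2}\neq 0$, and if $v=a[i]$ for some $a\in A$ then every factor is $u_{a[j]}$ for some $j\in[k]$, and by hypothesis $\langle u_{a[j]},u_{a[j']}\rangle\neq 0$ for all $j,j'$, so again $\langle w_v,w_v\rangle\neq 0$. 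For a hyperedge $e\in E$, pick $v,v'\in e$ with $\langle u_v,u_{v'}\rangle=0$; since $\langle w_v,w_{v'}\rangle$ is a product of factors, at least one of which equals $\langle u_v,u_{v'}\rangle$... wait — this needs care: the factorization $\langle w_v,w_{v'}\rangle=\prod_{(j,j')}\langle u_{x(j,j')},u_{y(j,j')}\rangle$ only holds when $v,v'$ lie in the same or in two tuples of $A$; if, say, $v\notin\bigcup A$ then $w_v=u_v^{\otimes k^2}$ and $\langle w_v,w_{v'}\rangle=\prod\langle u_v,u_{y(j,j')}\rangle$, which again vanishes as soon as one factor is $\langle u_v,u_{v'}\rangle$, i.e. as soon as the profile of $v'$ uses index corresponding to $v'$ itself — which is automatic when $v'\notin\bigcup A$ and requires choosing the profiles so that some slot of any $w_{a[i]}$ is $u_{a[i]}$ itself. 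I would therefore impose the mild additional design constraint that for each $i$, the profile of $w_{a[i]}$ takes the value $i$ in at least one slot; this guarantees $\langle w_v,w_{v'}\rangle=0$ whenever $\langle u_v,u_{v'}\rangle=0$, in all cases.

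Finally I would check the two asserted properties. Property~1 is immediate from the hypothesis $\langle u_{a[i]},u_{a[i']}\rangle\neq 0$ for all $i,i'\in[k]$: every factor of $\langle w_{a[i]},w_{a[i']}\rangle$ is of the form $\langle u_{a[j]},u_{a[j']}\rangle$, hence nonzero, hence the product is nonzero. Property~2 is the heart of the lemma. For distinct $a,b\in A$, the disjointness gives $\langle w_{a[i]},w_{b[i']}\rangle=\prod_{(j,j')\in[k]^2}\langle u_{a[\sigma_i(j,j')]},u_{b[\tau_{i'}(j,j')]}\rangle$ for appropriate profile functions $\sigma_i,\tau_{i'}$, and the design ensures this product ranges over \emph{all} pairs $\langle u_{a[p]},u_{b[q]}\rangle$, $(p,q)\in[k]^2$ — the same set of $k^2$ factors regardless of $i,i'$, just permuted. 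Hence $\langle w_{a[i]},w_{b[i']}\rangle\neq 0$ for some pair $(i,i')$ if and only if $\langle u_{a[p]},u_{b[q]}\rangle\neq 0$ for every $(p,q)$, if and only if it holds for all pairs $(i,i')$; in particular $\langle w_{a[1]},w_{b[2]}\rangle\neq 0$ forces $\langle w_{a[i]},w_{b[i']}\rangle\neq 0$ for all $i,i'$. The main obstacle, and the only genuinely delicate part, is choosing the profile functions $\sigma_i\colon[k]^2\to[k]$ (one per component $i\in[k]$) so that simultaneously (a) for every ordered pair $(i,i')$ the multiset $\{(\sigma_i(j,j'),\sigma_{i'}(j,j')) : (j,j')\in[k]^2\}$ equals $[k]\times[k]$, and (b) each $\sigma_i$ hits the value $i$. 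The choice $\sigma_i(j,j') = $ ``$j$ if $i$ is odd, $j'$ if $i$ is even'' already works for $k=2$ (this is exactly the $w_{a_i}=u_{a_i}\otimes u_{b_i}\otimes u_{a_i}\otimes u_{b_i}$, $w_{b_i}=u_{a_i}^{\otimes 2}\otimes u_{b_i}^{\otimes 2}$ construction in the introduction), and I expect the general construction to take $\sigma_i(j,j')$ to depend on $i$ through a fixed family of $k$ distinct ``coordinate-selection patterns''; verifying (a)–(b) for the chosen family is a finite combinatorial check that I would carry out explicitly in the full proof.
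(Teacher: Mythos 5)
Your overall strategy --- tensoring $k^2$ copies of the $u$-vectors according to per-position ``profile'' functions so that inner products factor slot by slot --- is exactly the mechanism of the paper's proof, and your $k=2$ instance coincides with the paper's construction. The gap is precisely the part you defer as ``a finite combinatorial check.'' Your condition (a) demands that for every pair $(i,i')$ the map $(j,j')\mapsto(\sigma_i(j,j'),\sigma_{i'}(j,j'))$ be a bijection of $[k]^2$. A family of $k$ functions $[k]^2\to[k]$ that are pairwise jointly bijective is the same as an orthogonal array of strength $2$ and index $1$ with $k$ columns over a $k$-letter alphabet, whose maximum number of columns is $2$ plus the maximum number of mutually orthogonal Latin squares of order $k$; for $k=6$ this maximum is $3$, so no such family of $6$ functions exists and condition (a) is unsatisfiable. (If ``every ordered pair'' is meant to include $i=i'$, it is impossible already for $k=2$.) In addition, condition (b) alone does not secure the verification that $(w_v)$ is an orthogonal representation when $v=a[i]$ and $v'=b[i']$ lie in two distinct tuples: to kill $\langle w_{a[i]},w_{b[i']}\rangle$ when $\langle u_{a[i]},u_{b[i']}\rangle=0$ you need a single slot at which $\sigma_i$ takes the value $i$ and, simultaneously, $\sigma_{i'}$ takes the value $i'$ --- not merely each separately --- so there too you are implicitly leaning on (a).

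The repair is to notice that the lemma is asymmetric and needs much less than your ``iff'': only the single pair $(1,2)$ must realize all $k^2$ factors $\langle u_{a[p]},u_{b[q]}\rangle$, because Property~2 is a one-way implication, and once all $k^2$ of these are known to be nonzero, any profile whatsoever makes every other $\langle w_{a[i]},w_{b[i']}\rangle$ a product of nonzero factors. Accordingly the paper sets $w_{a[1]}=(u_{a[1]}\otimes\cdots\otimes u_{a[k]})^{\otimes k}$ (profile $\sigma_1(p,q)=q$), $w_{a[2]}=u_{a[1]}^{\otimes k}\otimes\cdots\otimes u_{a[k]}^{\otimes k}$ (profile $\sigma_2(p,q)=p$), and $w_v=u_v^{\otimes k^2}$ for every other vertex, including the components $a[3],\dots,a[k]$. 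Only $(\sigma_1,\sigma_2)$ is jointly bijective, which is all that is used, and every $w_v$ carries $u_v$ itself in the common $(k+1)$th slot, which yields the orthogonality preservation uniformly in all cases. With this choice of profiles your argument closes.
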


\begin{proof}
For a hypergraph $H=(V,E)$ and a collection $A \subseteq V^k$ of pairwise disjoint $k$-tuples of vertices of $H$, let $(u_v)_{v \in V}$ be a $t$-dimensional orthogonal representation of $H$ over $\Fset$ as in the lemma.
We assign to the vertices of $H$ $t^{k^2}$-dimensional vectors $(w_v)_{v \in V}$ over $\Fset$ as follows.
For every $a \in A$, define
\begin{eqnarray*}
w_{a[1]} &=& (u_{a[1]} \otimes u_{a[2]} \otimes \cdots \otimes u_{a[k]})^{\otimes k}, \\
w_{a[2]} &=& u_{a[1]}^{\otimes k} \otimes u_{a[2]}^{\otimes k} \otimes \cdots \otimes u_{a[k]}^{\otimes k}.
\end{eqnarray*}
For every other vertex $v \in V$, i.e., a vertex that does not appear in the first two coordinates of the $k$-tuples of $A$, we define $w_v = u_v^{ \otimes k^2}$.
Note that the assignment is well defined since the $k$-tuples of $A$ are pairwise disjoint.
(In fact, we could weaken the condition of pairwise disjointness in the lemma and allow $a[i]=b[i']$ for $a,b \in A$ and $i,i' \geq 3$.)

We first observe that the assignment $(w_v)_{v \in V}$ is an orthogonal representation of $H$ over $\Fset$.
For every vertex $v \in V$, the vector $w_v$ is a tensor product of $k^2$ vectors of the form $u_{v'}$ with $v' \in V$. This implies that $\langle w_v, w_v \rangle$ is a product of inner products of the form $\langle u_{v'}, u_{v'} \rangle$ with $v' \in V$, and since they are all nonzero it follows that $\langle w_v, w_v \rangle \neq 0$.
Now, since $(u_v)_{v \in V}$ is an orthogonal representation of $H$, it suffices to show that for every two vertices $v,v' \in V$, it holds that $\langle w_{v}, w_{v'} \rangle = 0$ whenever $\langle u_{v}, u_{v'} \rangle = 0$.
Consider two vertices $v,v' \in V$ such that $\langle u_{v}, u_{v'} \rangle = 0$.
Notice that $w_v$ and $w_{v'}$ are tensor products of $k^2$ vectors, the $(k+1)$th of which is $u_v$ and $u_{v'}$ respectively, hence $\langle w_{v}, w_{v'} \rangle = 0$.

We next verify that the orthogonal representation $(w_v)_{v \in V}$ of $H$ over $\Fset$ satisfies the properties required by the lemma.
For the first item, observe that for every $a \in A$ each vector $w_{a[i]}$ is a tensor product of $k^2$ vectors of the form $u_{a[j]}$ with $j \in [k]$. Hence, the inner product $\langle w_{a[i]} , w_{a[i']} \rangle$ for $i,i' \in [k]$ is a product of inner products of the form $\langle u_{a[j]} , u_{a[j']} \rangle$ with $j,j' \in [k]$, which are all nonzero by assumption, hence $\langle w_{a[i]} , w_{a[i']} \rangle \neq 0$.
For the second item, consider distinct $a,b \in A$ such that $\langle w_{a[1]} , w_{b[2]} \rangle \neq 0$.
Observe that
\begin{eqnarray*}
\langle w_{a[1]}, w_{b[2]} \rangle = \prod_{(j,j') \in [k] \times [k]}{ \langle u_{a[j]}, u_{b[j']} \rangle},
\end{eqnarray*}
which implies that $\langle u_{a[j]} , u_{b[j']} \rangle \neq 0$ for all $j,j' \in [k]$.
Since every inner product $\langle w_{a[i]} , w_{b[i']} \rangle$ for $i,i' \in [k]$ is a product of inner products of the form $\langle u_{a[j]} , u_{b[j']} \rangle$ with $j,j' \in [k]$, we derive that $\langle w_{a[i]} , w_{b[i']} \rangle \neq 0$, and we are done.
\end{proof}

\subsection{The Reduction}\label{sec:symmetry}

We are ready to prove the main result of this section.

\begin{theorem}\label{thm:reductions}
Let $k_2 \geq k_1 \geq 2$ be constants.
For every parameter $m=m(n) \leq n$ there exists a polynomial time reduction that given a $k_1$-uniform hypergraph $H_1$ on $n$ vertices outputs a $k_2$-uniform hypergraph $H_2$ on $n \cdot m^{O(1)}$ vertices such that for every field $\Fset$,
\begin{enumerate}
  \item\label{itm:1_reduction} $\od(H_2, \Fset) \leq \od(H_1, \Fset)$, and
  \item\label{itm:2_reduction} if $\od(H_2,\Fset) \leq m$ then $\od(H_1, \Fset) \leq \od(H_2, \Fset)$.
\end{enumerate}
\end{theorem}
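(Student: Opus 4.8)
The plan is to adapt the copy-based reduction used by Guruswami et al.~\cite{GuruswamiHS02} for hypergraph coloring so that it preserves the orthogonality dimension over every field. Put $q=\lceil k_2/k_1\rceil$ and $r=q\cdot k_1-k_2\in\{0,1,\ldots,k_1-1\}$, and fix $\ell=\binom{m^{k_1^2}+q-1}{q-1}$, which is $m^{O(1)}$ because $k_1$ and $k_2$ are constants. Given a $k_1$-uniform hypergraph $H_1=(V,E)$ on $n$ vertices, the reduction outputs the hypergraph $H_2$ whose vertex set is $V_1\cup\cdots\cup V_\ell$ for disjoint copies $V_1,\ldots,V_\ell$ of $V$; letting $E_i$ denote the copy of $E$ living on $V_i$, the hyperedges of $H_2$ are all the sets $(e_{i_1}\cup\cdots\cup e_{i_q})\setminus F$ with $i_1<\cdots<i_q$ in $[\ell]$, $e_{i_j}\in E_{i_j}$ for each $j$, and $F\subseteq e_{i_q}$ with $|F|=r$. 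As the copies are disjoint, each such set has $q\cdot k_1-r=k_2$ vertices, so $H_2$ is $k_2$-uniform; it has $\ell\cdot n=n\cdot m^{O(1)}$ vertices and, since $m\le n$, it is produced in time polynomial in $n$. Item~\ref{itm:1_reduction} then comes almost for free: if $k_2>k_1$ then $q\ge2$ and every hyperedge of $H_2$ contains the copy $e_{i_1}\in E_{i_1}$ of a hyperedge of $H_1$ in its entirety (as $F\subseteq e_{i_q}$ and $i_q\ne i_1$), while if $k_2=k_1$ each hyperedge of $H_2$ is itself such a copy; hence any $t$-dimensional orthogonal representation of $H_1$ over $\Fset$, applied to all of the copies $V_1,\ldots,V_\ell$, is a $t$-dimensional orthogonal representation of $H_2$ over $\Fset$, giving $\od(H_2,\Fset)\le\od(H_1,\Fset)$.

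For item~\ref{itm:2_reduction}, assume $\od(H_2,\Fset)\le m$, set $t=\od(H_2,\Fset)$, and suppose toward a contradiction that $\od(H_1,\Fset)>t$. Fix a $t$-dimensional orthogonal representation $(u_v)_{v\in V(H_2)}$ of $H_2$ over $\Fset$. For each $i\in[\ell]$ the restriction of $(u_v)$ to $V_i$ assigns to the copy of $H_1$ vectors with nonzero self inner product but cannot be an orthogonal representation of it, as that would force $\od(H_1,\Fset)\le t$; hence there is a hyperedge $e_i\in E_i$ all of whose vertices are pairwise non-orthogonal under $(u_v)$. Regard each $e_i$ as an arbitrarily ordered $k_1$-tuple. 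The tuples $e_1,\ldots,e_\ell$ are pairwise disjoint, so Lemma~\ref{lemma:TensorORk}, applied with $k=k_1$ and $A=\{e_1,\ldots,e_\ell\}$, yields a $t^{k_1^2}$-dimensional orthogonal representation $(w_v)_{v\in V(H_2)}$ of $H_2$ over $\Fset$ such that (i) $\langle w_{e_i[j]},w_{e_i[j']}\rangle\ne0$ for every $i\in[\ell]$ and all $j,j'\in[k_1]$, and (ii) whenever $\langle w_{e_i[1]},w_{e_{i'}[2]}\rangle\ne0$ one has $\langle w_{e_i[j]},w_{e_{i'}[j']}\rangle\ne0$ for all $j,j'\in[k_1]$.

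Now form the matrix $M\in\Fset^{\ell\times\ell}$ with $M_{i,i'}=\langle w_{e_i[1]},w_{e_{i'}[2]}\rangle$; its diagonal entries are nonzero by~(i), and $\rank_\Fset(M)\le t^{k_1^2}$ because $M$ is the product of the $\ell\times t^{k_1^2}$ matrix whose rows are the vectors $w_{e_i[1]}$ and the $t^{k_1^2}\times\ell$ matrix whose columns are the vectors $w_{e_{i'}[2]}$. Since $t\le m$, Theorem~\ref{thm:Ramsey_ES} together with the monotonicity of Ramsey numbers gives $\ell=\binom{m^{k_1^2}+q-1}{q-1}\ge R(q,m^{k_1^2}+1)\ge R(q,t^{k_1^2}+1)$, so Lemma~\ref{lemma:Ramsey} provides a set $C\subseteq[\ell]$ of size $q$ with $M_{i,i'}\ne0$ or $M_{i',i}\ne0$ for all $i,i'\in C$. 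In either case, applying~(ii) to the appropriately ordered pair shows that every vector of $e_i$ is non-orthogonal to every vector of $e_{i'}$; combined with~(i), this means the vertices of $\bigcup_{i\in C}e_i$ are pairwise non-orthogonal under $(w_v)$. But writing $C=\{i_1<\cdots<i_q\}$ and choosing any $F\subseteq e_{i_q}$ with $|F|=r$, the set $(e_{i_1}\cup\cdots\cup e_{i_q})\setminus F$ is a hyperedge of $H_2$ contained in this pairwise non-orthogonal set, contradicting the fact that $(w_v)$ is an orthogonal representation of $H_2$. Therefore $\od(H_1,\Fset)\le t=\od(H_2,\Fset)$, which is item~\ref{itm:2_reduction}.

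The step I expect to be the main obstacle is precisely the one carried out in the last paragraph: each copy $V_i$ on its own supplies a hyperedge $e_i$ on which the representation $(u_v)$ fails, but a hyperedge of $H_2$ combines $q$ such hyperedges, so one must additionally guarantee non-orthogonality between vectors coming from distinct copies. Lemma~\ref{lemma:TensorORk} is what allows us to monitor a single inner product per ordered pair of copies rather than all $k_1^2$ of them, and Lemma~\ref{lemma:Ramsey}, via the rank bound on $M$ and off-diagonal Ramsey numbers, is what extracts $q$ copies that are mutually cross-non-orthogonal. The fact that $k_1$ need not divide $k_2$ is only a bookkeeping nuisance, handled by deleting $r$ vertices from one of the $q$ parts, which leaves both directions of the argument untouched.
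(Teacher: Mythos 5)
Your proposal is correct and follows essentially the same route as the paper's proof: the same choice of $\ell=\binom{m^{k_1^2}+q-1}{q-1}$ copies, the same hyperedge construction from~\cite{GuruswamiHS02}, and the same soundness argument combining Lemma~\ref{lemma:TensorORk} with the rank/Ramsey argument of Lemma~\ref{lemma:Ramsey} to extract $q$ mutually cross-non-orthogonal copies. The only differences are cosmetic (you delete a set $F$ of size $r$ from $e_{i_q}$ where the paper keeps a subset $e'_{i_q}$ of the complementary size).
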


\begin{proof}
Let $k_2 \geq k_1 \geq 2$ be constants and put $s = \lceil k_2/k_1 \rceil$.
For a given parameter $m=m(n) \leq n$ consider the reduction that given a $k_1$-uniform hypergraph $H_1=(V,E)$ on $n$ vertices outputs the $k_2$-uniform hypergraph $H_2=(V',E')$ defined as follows. Its vertex set is defined as $V' = V_1 \cup \cdots \cup V_\ell$ where each $V_i$ is a copy of the vertex set $V$ of $H_1$ and
\begin{eqnarray}\label{def:ell}
\ell = \binom{m^{k_1^2}+s-1}{s-1}.
\end{eqnarray}
Let $E_i$ denote the collection of $k_1$-subsets of $V_i$ that correspond to the hyperedges in the hypergraph $H_1$.
A hyperedge of $H_2$ is given by the union of $s-1$ sets picked from different $E_i$'s together with additional $k_2-(s-1) \cdot k_1 \leq k_1$ vertices chosen from a set picked from another $E_i$.
More precisely, for every distinct indices $i_1,\ldots,i_s$, choices of $e_{i_j} \in E_{i_j}$ for $j \in [s]$, and a set $e'_{i_s} \subseteq e_{i_s}$ of size $|e'_{i_s}| = k_2-(s-1) \cdot k_1$ we include in $H_2$ the hyperedge $e_{i_1} \cup \cdots \cup e_{i_{s-1}} \cup e'_{i_s}$.
Note that $H_2$ is a $k_2$-uniform hypergraph on $n \cdot \ell = n \cdot O(m^{k_1^2 \cdot (s-1)})$ vertices and that it can be constructed in polynomial running time.

To prove Item~\ref{itm:1_reduction} of the theorem, assume that there exists a $t$-dimensional orthogonal representation of $H_1$ over $\Fset$. For the hypergraph $H_2$, assign the same vectors to each of the $\ell$ copies of the vertex set $V$. Since every hyperedge of $H_2$ contains a $k_1$-subset from some $E_i$, two of the vectors associated with its vertices are orthogonal. It follows that this assignment is a $t$-dimensional orthogonal representation of $H_2$ over $\Fset$, hence $\od(H_2, \Fset) \leq \od(H_1, \Fset)$.

To prove Item~\ref{itm:2_reduction}, denote $t = \od(H_2, \Fset) \leq m$ and let $(u_v)_{v \in V'}$ be a $t$-dimensional orthogonal representation of $H_2$ over $\Fset$. Assume by contradiction that $\od(H_1,\Fset) > t$. Then, the restriction $(u_v)_{v \in V_i}$ of the given orthogonal representation to any $V_i$ does not form an orthogonal representation of the hypergraph $(V_i,E_i)$. This implies that for every $i \in [\ell]$ there exists a $k_1$-tuple $a_i \in V_i^{k_1}$ whose vertices form a hyperedge in $(V_i,E_i)$ such that $\langle u_{a_i[j]}, u_{a_i[j']} \rangle \neq 0$ for all $j,j' \in [k_1]$.

Applying Lemma~\ref{lemma:TensorORk} to the orthogonal representation $(u_v)_{v \in V'}$ of $H_2$ over $\Fset$ with the pairwise disjoint $k_1$-tuples $\{ a_i \mid i \in [\ell] \}$, we obtain a $t^{k_1^2}$-dimensional orthogonal representation $(w_v)_{v \in V'}$ of $H_2$ over $\Fset$ that satisfies the following properties:
\begin{enumerate}
  \item\label{itm:1''} $\langle w_{a_i[j]}, w_{a_i[j']} \rangle \neq 0$ for every $i \in [\ell]$ and $j,j' \in [k_1]$, and
  \item\label{itm:2''} for every (distinct) $i,i' \in [\ell]$ such that $\langle w_{a_i[1]}, w_{a_{i'}[2]} \rangle \neq 0$ it holds that $\langle w_{a_i[j]}, w_{a_{i'}[j']} \rangle \neq 0$ for all $j,j' \in [k_1]$.
\end{enumerate}
Let $M_1 \in \Fset^{\ell \times t^{k_1^2}}$ be the matrix whose rows are the vectors $w_{a_i[1]}$ for $i \in [\ell]$, and let $M_2 \in \Fset^{\ell \times t^{k_1^2}}$ be the matrix whose rows are the vectors $w_{a_i[2]}$ for $i \in [\ell]$. Consider the matrix $M \in \Fset^{\ell \times \ell}$ defined by $M = M_1 \cdot M_2^T$, and notice that $M_{i,i'} = \langle w_{a_i[1]}, w_{a_{i'}[2]} \rangle$ for every $i,i' \in [\ell]$ and that its rank is at most $t^{k_1^2}$. Property~\eqref{itm:1} of the orthogonal representation $(w_v)_{v \in V'}$ implies that the diagonal entries of $M$ are all nonzero.
By the Erd{\"{o}}s-Szekeres bound on Ramsey numbers (Theorem~\ref{thm:Ramsey_ES}) and the definition of $\ell$ in~\eqref{def:ell}, we have $\ell \geq R(s,m^{k_1^2}+1) \geq R(s,t^{k_1^2}+1)$. Hence, we can apply Lemma~\ref{lemma:Ramsey} to obtain that there exists a set $C \subseteq [\ell]$ of size $|C| = s$ such that for every $i,i' \in C$ it holds that $M_{i,i'} \neq 0$ or $M_{i',i} \neq 0$.

To complete the proof recall that the hypergraph $H_2$ includes a hyperedge whose vertices all appear in the $k_1$-tuples $a_i$ with $i \in C$.
We will get a contradiction by proving that the vectors assigned to these vertices by the orthogonal representation $(w_v)_{v \in V'}$ of $H_2$ are pairwise non-orthogonal.
Indeed, by Property~\eqref{itm:1''} of $(w_v)_{v \in V'}$, we have $\langle w_{a_i[j]}, w_{a_i[j']} \rangle \neq 0$ for every $i \in C$ and $j,j' \in [k_1]$.
In addition, for distinct $i,i' \in C$ it holds that $M_{i,i'} \neq 0$ or $M_{i',i} \neq 0$, that is, $\langle w_{a_i[1]}, w_{a_{i'}[2]} \rangle \neq 0$ or $\langle w_{a_{i'}[1]}, w_{a_i[2]} \rangle \neq 0$. By Property~\eqref{itm:2''} of $(w_v)_{v \in V'}$ we have $\langle w_{a_i[j]}, w_{a_{i'}[j']} \rangle \neq 0$ for all $i,i' \in C$ and $j,j' \in [k_1]$, and we are done.
\end{proof}

\begin{remark}\label{remark:Ramsey}
We note that the use of Ramsey numbers in the proof of Theorem~\ref{thm:reductions} is not essential. As pointed out to us by an anonymous reviewer, to prove the assertion of the theorem it suffices to analyze the reduction for the case $k_2=k_1+1$ which can be repeatedly applied to imply the general result. For this special case, the Ramsey numbers are not needed as demonstrated in the discussion in Section~\ref{sec:overview}.
Yet, we have decided to present here the direct analysis for general $k_1$ and $k_2$ since it shows that the reduction used in~\cite{GuruswamiHS02} for hypergraph coloring perfectly preserves the orthogonality dimension, and because the running time of this reduction is slightly better than that of repeatedly applying the reduction for $k_2=k_1+1$.
\end{remark}

We derive the following corollary.

\begin{corollary}\label{cor:reductions}
Let $k_2 \geq k_1 \geq 2$ be constants.
There exists a polynomial time reduction that given a $k_1$-uniform hypergraph $H_1$ on $n$ vertices outputs a $k_2$-uniform hypergraph $H_2$ on $n^{O(1)}$ vertices such that for every field $\Fset$,~ $\od(H_1, \Fset) = \od(H_2, \Fset)$.
\end{corollary}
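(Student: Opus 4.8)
The plan is to derive the corollary directly from Theorem~\ref{thm:reductions} by choosing the free parameter $m$ large enough that the promise appearing in its second item becomes vacuous. Concretely, I would instantiate Theorem~\ref{thm:reductions} with the choice $m = m(n) = n$.

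The only ingredient needed beyond Theorem~\ref{thm:reductions} is the trivial upper bound $\od(H_1, \Fset) \leq n$, valid for every $n$-vertex hypergraph $H_1$ and every field $\Fset$. To see it, assign to the $n$ vertices of $H_1$ the $n$ distinct standard basis vectors $e_v \in \Fset^n$. Then $\langle e_v, e_v \rangle = 1 \neq 0$ for every $v$, and since every hyperedge contains at least $k_1 \geq 2$ distinct vertices, two of them receive orthogonal vectors; hence this assignment is an $n$-dimensional orthogonal representation of $H_1$ over $\Fset$.

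Now, given a $k_1$-uniform hypergraph $H_1$ on $n$ vertices, I would apply the reduction of Theorem~\ref{thm:reductions} with $m = n$ to obtain a $k_2$-uniform hypergraph $H_2$ on $n \cdot m^{O(1)} = n \cdot n^{O(1)} = n^{O(1)}$ vertices. This construction runs in polynomial time, since for $m = n$ the parameter $\ell$ used in the proof of Theorem~\ref{thm:reductions} (equation~\eqref{def:ell}) equals $\binom{n^{k_1^2}+s-1}{s-1}$, which is polynomial in $n$ for the constants $k_1$ and $s = \lceil k_2/k_1 \rceil$. Fix any field $\Fset$. By Item~\ref{itm:1_reduction} of Theorem~\ref{thm:reductions} together with the bound above, $\od(H_2, \Fset) \leq \od(H_1, \Fset) \leq n = m$, so the hypothesis of Item~\ref{itm:2_reduction} is satisfied, giving $\od(H_1, \Fset) \leq \od(H_2, \Fset)$. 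Combining the two inequalities yields $\od(H_1, \Fset) = \od(H_2, \Fset)$. Since the reduction of Theorem~\ref{thm:reductions} does not depend on $\Fset$, the same hypergraph $H_2$ witnesses this equality simultaneously for all fields.

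I do not expect a genuine obstacle here; the only point requiring a moment's care is that the trivial bound $\od(H_1, \Fset) \leq n$ must be established over an \emph{arbitrary} field, so that the hypothesis $\od(H_2, \Fset) \leq m$ needed to invoke Item~\ref{itm:2_reduction} holds automatically once $m$ is set to $n$, and that the reduction remains polynomial-time under this choice of $m$.
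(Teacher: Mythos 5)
Your proposal is correct and follows essentially the same route as the paper: instantiate Theorem~\ref{thm:reductions} with $m(n)=n$, use the trivial bound $\od(H_1,\Fset)\leq n$ together with Item~\ref{itm:1_reduction} to verify the hypothesis of Item~\ref{itm:2_reduction}, and combine the two inequalities. Your explicit verification of $\od(H_1,\Fset)\leq n$ over an arbitrary field via standard basis vectors is a detail the paper leaves implicit, but it is the same argument.
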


\begin{proof}
Apply Theorem~\ref{thm:reductions} with $m(n)=n$.
By Item~\ref{itm:1_reduction}, we have $\od(H_2, \Fset) \leq \od(H_1, \Fset) \leq n$. Combining this with Item~\ref{itm:2_reduction}, we get that $\od(H_1, \Fset) \leq \od(H_2, \Fset)$, and we are done.
\end{proof}

\section{The Orthogonality Dimension of Hypergraphs}\label{sec:hyper_hard}

We prove the following hardness result for the orthogonality dimension of $4$-uniform hypergraphs over the real field.

\begin{theorem}\label{thm:NP-hard-gen}
~
\begin{enumerate}
  \item\label{itm:1-hard} There exists a constant $\delta>0$ for which it is $\NP$-hard to decide whether an input $n$-vertex $4$-uniform hypergraph $H$ satisfies $\od(H) \leq 2$ or $\od(H) \geq \log ^\delta n$.
  \item\label{itm:2-hard} Assuming $\NP \nsubseteq \DTIME(n^{O(\log \log n)})$, for every constant $c>0$ there is no polynomial time algorithm that decides whether an input $n$-vertex $4$-uniform hypergraph $H$ satisfies $\od(H) \leq 2$ or $\od(H) \geq c \cdot \frac{\log n}{\log \log n}$.
\end{enumerate}
\end{theorem}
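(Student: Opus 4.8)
The plan is to run the reduction of Bhangale~\cite{Bhangale18} from the Label Cover problem to coloring $4$-uniform hypergraphs, but to replace its combinatorial soundness analysis --- which lower bounds the chromatic number --- by one that lower bounds the orthogonality dimension. The completeness direction, the choice of the underlying PCP, and the counting of vertices are taken from~\cite{Bhangale18}, so that the two regimes of the statement emerge exactly as there: plain $\NP$-hardness with gap $\log^\delta n$ from a polynomial-size PCP as in~\cite{Bhangale18}, and quasi-$\NP$-hardness with gap $\Omega(\log n/\log\log n)$ from the PCP theorem of Moshkovitz and Raz~\cite{MoshkovitzR10} under $\NP \nsubseteq \DTIME(n^{O(\log\log n)})$. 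Concretely, starting from a suitable smooth and regular Label Cover instance $\Phi$, the reduction creates for every variable a block of vertices indexed by the stable $s$-subsets of $[d]$, that is, a copy of the vertex set of the Schrijver graph $S(d,s)$, with $d$ a growing function of the alphabet size and $s$ close to $d/2$ so that, by Lemma~\ref{lemma:SchrijverV}, the total number of vertices $n$ meets the claimed bound; its $4$-element hyperedges encode the long-code consistency test between two blocks that share a common neighbor in $\Phi$. If $\Phi$ is satisfiable then, exactly as in~\cite{Bhangale18}, the hypergraph $H$ is $2$-colorable, and since a hypergraph has orthogonality dimension at most $2$ if and only if it is $2$-colorable, $\od(H) \le 2$.

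The substance is the soundness direction: assuming $\od(H) \le t$ for $t$ below the intended gap, we must produce a labeling of $\Phi$ satisfying a $1/\poly(t)$ fraction of its constraints, contradicting its soundness. Fix a $t$-dimensional orthogonal representation $(u_v)$ of $H$. For every block $B_w$ the vectors $\{u_v : v \in B_w\}$ cannot form an orthogonal representation of the Schrijver graph carried by that block, because its orthogonality dimension equals $d-2s+2 > t$ by Theorem~\ref{thm:SchrijverGraph}; hence there is an edge of that Schrijver graph, a pair $\{A_w,B_w\}$ of disjoint stable $s$-subsets, whose assigned vectors are \emph{non-orthogonal}. This is the analogue of the ``monochromatic edge inside a large color class'' step of~\cite{DinurRS05,Bhangale18}: the pair $\{A_w,B_w\}$ determines a short list of candidate labels for the variable $w$, and we decode a labeling by picking, for each variable, a candidate from its list uniformly at random.

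The point where the new ingredients enter is that an orthogonal representation, unlike a coloring, does not a priori control inner products \emph{across} distinct blocks, so knowing that $\{A_w,B_w\}$ is non-orthogonal inside $B_w$ does not yet produce a ``bad'' hyperedge. This is exactly what the symmetrization lemma is for: applying Lemma~\ref{lemma:TensorORk} with $k=2$ to the collection of pairwise disjoint $2$-tuples $\{(A_w,B_w)\}_w$ yields a $t^{4}$-dimensional orthogonal representation $(w'_v)$ of $H$ in which each pair $\{A_w,B_w\}$ remains non-orthogonal and, crucially, for any two blocks a single non-orthogonality $\langle w'_{A_{w_1}}, w'_{B_{w_2}}\rangle \ne 0$ forces \emph{all four} cross inner products among $A_{w_1},B_{w_1}$ and $A_{w_2},B_{w_2}$ to be nonzero. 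Now fix a variable $u$ of $\Phi$ and consider the matrix indexed by the variables sharing a constraint with $u$ whose $(w_1,w_2)$ entry is $\langle w'_{A_{w_1}}, w'_{B_{w_2}}\rangle$: by the first item of Lemma~\ref{lemma:TensorORk} its diagonal has no zero entry, and its rank is at most $t^{4}$, so by the sparsity bound of Golovnev et al.~\cite{GolovnevRW17} (Lemma~\ref{lemma:GRW}) at least a $1/\poly(t)$ fraction of its entries are nonzero. For each such nonzero off-diagonal entry $(w_1,w_2)$, the second item of Lemma~\ref{lemma:TensorORk} makes the four vectors $w'_{A_{w_1}}, w'_{B_{w_1}}, w'_{A_{w_2}}, w'_{B_{w_2}}$ pairwise non-orthogonal, so by the design of the long-code test the candidate lists of $w_1$ and $w_2$ must be consistent through the projections of the two constraints at $u$ --- otherwise these four vertices would span a hyperedge of $H$ containing no orthogonal pair, contradicting that $(w'_v)$ is an orthogonal representation. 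Since the list sizes are bounded, the random decoded labeling therefore satisfies a $1/\poly(t)$ fraction of the constraints of $\Phi$; taking the soundness of $\Phi$ below this threshold, and setting $d,s$ so that $d-2s+2$ is the target gap, completes the argument.

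The main obstacle is the quantitative accounting in this last step: one must track, through the composition of the Schrijver-graph lower bound, the tensor symmetrization --- which inflates the dimension only polynomially, from $t$ to $t^{4}$, and is therefore harmless --- and the sparsity estimate of~\cite{GolovnevRW17}, that the recovered fraction of constraints stays at least $1/\poly(t)$, matching what the soundness of the chosen Label Cover instance can refute. The Ramsey-type bound used in Section~\ref{sec:uniformity} is too weak here, as it would only produce a constant-size set of pairwise non-orthogonal blocks rather than a constant (in $t$) \emph{fraction} of the block pairs through each variable; this is precisely why the stronger sparsity bound is needed. Everything outside this soundness argument --- the reduction itself, the PCP choices, the vertex count via Lemma~\ref{lemma:SchrijverV}, and the completeness --- is inherited from~\cite{Bhangale18}.
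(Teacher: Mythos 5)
Your soundness argument is essentially the paper's own proof: the same Bhangale reduction with blocks given by copies of the Schrijver graph $S(R,s)$ with $s=\lceil (R-t)/2\rceil$, the lower bound $\od(S(R,s))\ge t+1$ from Theorem~\ref{thm:SchrijverGraph} used to extract a non-orthogonal disjoint pair $\{A^{(x)},B^{(x)}\}$ in each block, the symmetrization Lemma~\ref{lemma:TensorORk} with $k=2$, the rank-$t^4$ matrix of cross inner products with nonzero diagonal, and Lemma~\ref{lemma:GRW} to extract a dense row. Your closing remark on why the Ramsey bound of Section~\ref{sec:uniformity} would only give a constant-size clique rather than a constant fraction of pairs is exactly the right observation.

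Two points need fixing. First, you have the two PCPs swapped. Item~\ref{itm:1-hard} requires soundness error below $1/(4t^6)$ with $t=\log^\delta n$, i.e., sub-constant soundness from a polynomial-size reduction, which is precisely what Moshkovitz--Raz (Theorem~\ref{thm:PCP_MR}) provides; the PCP theorem plus parallel repetition (Theorem~\ref{thm:PCP}) only reaches soundness $2^{-\Omega(\ell)}$ at size $n^{O(\ell)}$, so taking $\ell=\Theta(\log\log n)$ forces the quasi-polynomial blowup and hence the assumption $\NP\nsubseteq\DTIME(n^{O(\log\log n)})$ in Item~\ref{itm:2-hard}. As written, your assignment would cap Item~\ref{itm:1-hard} at a constant gap, and for Item~\ref{itm:2-hard} the block size $R^t$ with $R=\exp(1/\eps^{O(1)})$ under Moshkovitz--Raz would degrade the gap well below $\log N/\log\log N$. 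Second, your decoding only specifies the $U$-side: each $x$ gets a uniformly random label from $E(x)=[R]\setminus(A^{(x)}\cup B^{(x)})$ with $|E(x)|\le t$. You still need a rule for the $V$-side variables, which have no Schrijver blocks and no lists. The paper fixes $z$, uses the dense row of the matrix to find a single $x\in U_z$ whose projected list $\phi_{x\to z}(E(x))$ intersects a $1/(4t^4)$ fraction of the lists $\phi_{y\to z}(E(y))$, and then a pigeonhole step (Claim~\ref{claim:sets}) to pick one $\beta=\rho(z)$ lying in a $1/(4t^5)$ fraction of them; combined with the random $U$-side choice this gives the $1/(4t^6)$ bound. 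This consolidation step is what turns ``many pairs of lists intersect'' into an actual assignment and should be spelled out.
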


\noindent
Theorem~\ref{thm:intro-NP-hard-gen} follows by combining Theorem~\ref{thm:NP-hard-gen} with Corollary~\ref{cor:reductions}.

\subsection{Label Cover}\label{sec:LabelCover}

Theorem~\ref{thm:NP-hard-gen} is proved by a reduction from the Label Cover problem, defined as follows.

\begin{definition}\label{def:LabelCover}
In the {\em Label Cover} problem the input $\calL = (U,V,E,R,L,\phi)$ consists of a (bi-regular) graph $(U,V,E)$ where every vertex of $U$ is associated with a variable with range $[R]$ and every vertex of $V$ is associated with a variable with range $[L]$. Every edge $(x,z) \in E$ is associated with a projection constraint $\phi_{x \rightarrow z}:[R] \rightarrow [L]$. Given an instance $\calL$ the goal is to find an assignment $\rho$ to the variables of $U \cup V$ that maximizes the number of edges $(x,z) \in E$ such that $\phi_{x \rightarrow z}(\rho(x)) = \rho(z)$.
\end{definition}

The following theorem follows from the PCP theorem~\cite{AroraLMSS98,AroraS98} combined with the parallel repetition theorem~\cite{Raz98}.
\begin{theorem}\label{thm:PCP}
For every $\ell = \ell(n)$ there exists a reduction from $3$-SAT to Label Cover that given a $3$-SAT instance $\varphi$ of size $n$ outputs in running time $n^{O(\ell)}$ a Label Cover instance $\calL$ of size $n^{O(\ell)}$ with range size $2^{O(\ell)}$ such that
\begin{itemize}
  \item if $\varphi$ is satisfiable then there exists an assignment to $\calL$ that satisfies all the constraints, and
  \item if $\varphi$ is not satisfiable then every assignment to $\calL$ satisfies at most $2^{-\Omega(\ell)}$ fraction of the constraints.
\end{itemize}
\end{theorem}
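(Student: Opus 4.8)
The plan is to derive this statement by composing three standard ingredients, each available off the shelf: the PCP theorem, the reduction of a PCP to a constant-gap Label Cover instance, and the parallel repetition theorem for gap amplification. I would first invoke the PCP theorem~\cite{AroraLMSS98,AroraS98} in its ``hardness of gap constraint satisfaction'' form: there is a polynomial time reduction from $3$-SAT to instances of a constraint satisfaction problem over a constant-size alphabet (e.g.\ a gap version of $3$-SAT, or a constraint graph with $O(1)$-size labels) such that satisfiable instances map to fully satisfiable ones, while unsatisfiable instances map to instances in which no assignment satisfies more than a $(1-\epsilon_0)$ fraction of the constraints, for some absolute constant $\epsilon_0>0$.

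Next I would convert such a gapped constraint satisfaction instance into a bi-regular projection Label Cover instance $\calL_0=(U,V,E,R_0,L_0,\phi)$ with constant ranges $R_0,L_0=O(1)$, via the usual clause/variable construction: the vertices in $U$ correspond to the constraints, each carrying a variable ranging over the constantly many local assignments satisfying that constraint; the vertices in $V$ correspond to the underlying variables; an edge $(x,z)$ joins a constraint $x$ to a variable $z$ occurring in it, and the projection $\phi_{x\to z}$ reads off the value assigned to $z$ by a given satisfying local assignment. Completeness is immediate, and for soundness one uses that every violated constraint forces at least one incident edge to be unsatisfied, so the constant gap is preserved up to a constant factor. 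A routine clean-up step makes the bipartite graph $(U,V,E)$ bi-regular (e.g.\ by duplicating vertices along a bounded-degree expander), changing the gap by at most a constant and preserving the projection structure. The outcome is a polynomial time reduction from $3$-SAT to a bi-regular projection Label Cover instance with $O(1)$ ranges, perfect completeness on YES instances, and soundness at most $1-\epsilon_0'$ on NO instances for some constant $\epsilon_0'>0$.

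Then I would amplify the gap by taking the $\ell$-fold parallel repetition $\calL=\calL_0^{\otimes\ell}$: its constraint graph is the $\ell$-fold tensor product of $(U,V,E)$, its ranges become $R_0^\ell$ and $L_0^\ell$, both of size $2^{O(\ell)}$, and each product constraint is again a projection because a coordinate-wise product of projection maps is a projection. Constructing $\calL$ amounts to enumerating $\ell$-tuples of edges, so both the output size and the running time are $|E|^{O(\ell)}=n^{O(\ell)}$, and completeness is preserved coordinate-wise. For soundness I would invoke the parallel repetition theorem~\cite{Raz98}: since $\calL_0$ has value at most $1-\epsilon_0'<1$ and constant answer length, the value of its $\ell$-fold repetition is at most $(1-\epsilon_0')^{\Omega(\ell)}=2^{-\Omega(\ell)}$, which is exactly the claimed NO-case bound.

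The only genuinely deep ingredient is this last soundness amplification, i.e.\ the parallel repetition theorem; since it is cited rather than reproved, the remaining work is bookkeeping — checking that the projection property and bi-regularity survive each transformation, and that the parameters come out as stated, namely range sizes $2^{O(\ell)}$ and instance size and running time $n^{O(\ell)}$. The subtle point to watch is that the exponent in the $2^{-\Omega(\ell)}$ soundness should not degrade with $\ell$; this holds because the quantity governing Raz's bound is the answer length of the \emph{base} game $\calL_0$, which is an absolute constant, so the resulting bound is uniform in $\ell$.
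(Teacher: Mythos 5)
Your proposal is correct and follows exactly the route the paper intends: the paper gives no proof of this theorem, stating only that it ``follows from the PCP theorem combined with the parallel repetition theorem,'' and your write-up is the standard elaboration of that one-line derivation (PCP $\to$ constant-gap projection Label Cover $\to$ $\ell$-fold parallel repetition), with the parameters and the key subtlety about the constant answer length of the base game handled correctly.
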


We also need the following result of~\cite{MoshkovitzR10} which gives better parameters for sub-constant soundness error.

\begin{theorem}[\cite{MoshkovitzR10}]\label{thm:PCP_MR}
For every $\eps = \eps(n)$ there exists a reduction from $3$-SAT to Label Cover that given a $3$-SAT instance $\varphi$ of size $n$ outputs in running time $\poly (n, \frac{1}{\eps})$ a Label Cover instance $\calL$ of size $n^{1+o(1)} \cdot \frac{1}{\eps^{O(1)}}$ with range size $\exp( \frac{1}{\eps^{O(1)}})$ such that
\begin{itemize}
  \item if $\varphi$ is satisfiable then there exists an assignment to $\calL$ that satisfies all the constraints, and
  \item if $\varphi$ is not satisfiable then every assignment to $\calL$ satisfies at most $\eps$ fraction of the constraints.
\end{itemize}
\end{theorem}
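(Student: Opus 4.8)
The plan is to produce the claimed Label Cover instance in two stages rather than in one shot: first use arithmetization together with a sub-constant-error low-degree (agreement) test to build a projection game of near-linear size whose soundness error is already $\eps$ but whose label alphabet is large (polynomial in $n$), and then reduce the alphabet to $\exp(\poly(1/\eps))$ by a single round of robust PCP composition, carefully preserving both the projection structure and the near-linear size. It is worth being explicit about why the obvious route fails: to push a constant soundness error of a classical $2$-query Label Cover instance (obtained from the PCP theorem of~\cite{AroraLMSS98,AroraS98}, or from a quasi-linear-size PCP) down to $\eps$ via parallel repetition~\cite{Raz98} one needs $k=\Theta(\log(1/\eps))$ repetitions; naive $k$-fold repetition multiplies the size into $n^{\Theta(k)}=n^{\poly\log(1/\eps)}$, which is \emph{not} $n^{1+o(1)}$, even though the alphabet size $|\Sigma|^{k}=\exp(\poly(1/\eps))$ would be fine. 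The whole difficulty is to realize the error reduction by a \emph{derandomized} construction whose size stays near-linear.

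\textbf{Step 1 (algebraic encoding).} Fix a field $\Fset$ with $|\Fset|=q$, a dimension $m$, and a subcube $H\subseteq\Fset$ with $|H|=h$, chosen so that $h^{m}\ge n$ and $m\cdot h^{O(1)}\le q$; taking $q=\poly(1/\eps)$, $h=\poly(1/\eps)$, and $m=(1+o(1))\tfrac{\log n}{\log h}$ makes $q^{m}=n^{1+o(1)}$. Embed the variables and clauses of $\varphi$ into $H^{m}$ and write the standard zero-on-a-subcube/sum-check polynomial identity over $\Fset^{m}$ that is satisfied by a degree-$d$ polynomial $P$, with $d=O(mh)$, if and only if $\varphi$ is satisfiable. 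The honest proof is the table of values of $P$ (and of the auxiliary sum-check polynomials) on all of $\Fset^{m}$, of size $q^{m}=n^{1+o(1)}$.

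\textbf{Step 2 (the outer projection game --- main obstacle).} Build a bipartite constraint graph whose right side is the set of points of $\Fset^{m}$ (label alphabet $\Fset$, of size $q=\poly(1/\eps)$) and whose left side is a family $\calM$ of low-dimensional ``manifolds'' (affine planes, or low-degree curves) in $\Fset^{m}$ (label alphabet = restrictions of degree-$d$ polynomials, of size $q^{O(d)}$); join each manifold to the points it contains, with the projection constraint ``the polynomial written on the manifold, evaluated at the point, equals the field element written at the point''. Completeness is immediate from the honest $P$. For soundness one invokes the sub-constant-error analysis of the manifold-vs-point low-degree/agreement test (Raz--Safra, Arora--Sudan, in the sharpened form of~\cite{MoshkovitzR10}): a proof accepted with probability exceeding $\eps$ is $(1-O(\eps))$-close to a genuine degree-$d$ polynomial, which by the sum-check constraints must encode a satisfying assignment, contradicting unsatisfiability of $\varphi$. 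The point of using a \emph{derandomized} family $\calM$ --- rather than all $\Theta(q^{2m})=n^{2+o(1)}$ affine planes --- is to keep $|\calM|=n^{1+o(1)}$ while still supporting the agreement-testing argument: one needs $\calM$ to be pseudorandom enough (good sampling/intersection properties) that the soundness analysis goes through. Constructing this near-linear-size agreement test with sub-constant error is the technical heart of the theorem and the step I expect to be the main obstacle; everything else is comparatively standard parameter bookkeeping.

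\textbf{Step 3 (alphabet reduction and bi-regularization).} The outer game has alphabet $q^{O(d)}=\poly(n)$, not $\exp(\poly(1/\eps))$. Reduce it by composing with an inner $2$-query projection PCP for the predicate ``this pair of symbols satisfies the outer projection constraint'', where the inner verifier has size depending only on the outer alphabet and constant soundness error --- e.g.\ a Hadamard-code or short Reed--Muller-code based inner verifier. Doing this \emph{robustly}, so that the composed object is again a projection game and the small inner error survives composition, is precisely why one phrases everything as Label Cover with projection constraints throughout; choosing the inner error small enough collapses the alphabet to $\exp(\poly(1/\eps))$ after a single level of composition. A final bi-regularization (duplicating edges and vertices to equalize degrees on the two sides, changing the size by only a constant factor and leaving the value unchanged) yields a Label Cover instance as in the statement: size $n^{1+o(1)}\cdot(1/\eps)^{O(1)}$, range size $\exp(1/\eps^{O(1)})$, computable in time $\poly(n,1/\eps)$, with perfect completeness for satisfiable $\varphi$ and value at most $\eps$ otherwise. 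The genuine obstacle --- and the contribution of~\cite{MoshkovitzR10} --- is to obtain all three properties \emph{simultaneously}: sub-constant error $\eps$, an alphabet depending only on $1/\eps$, and near-linear size; the algebraic route of Steps 1--2 naturally gives the first and third but an $n$-dependent alphabet, composition/repetition naturally gives the second at the cost of the third, and reconciling them hinges on the near-linear-size derandomized agreement test together with a size-efficient composition.
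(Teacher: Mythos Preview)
The paper does not prove this theorem at all: it is quoted as a black-box result from Moshkovitz and Raz~\cite{MoshkovitzR10} and is used only as an input to the reduction in Section~\ref{sec:hyper_hard}. There is therefore nothing in the paper to compare your proposal against.

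That said, your sketch is a reasonable high-level outline of how the Moshkovitz--Raz theorem is actually established: arithmetize $3$-SAT, run a manifold-vs-point low-degree/agreement test to get a projection game with sub-constant error and near-linear size but large alphabet, and then compose with an inner PCP to shrink the alphabet to $\exp(\poly(1/\eps))$ while preserving the projection property and the size bound. You are also right that the derandomized near-linear-size agreement test with sub-constant error is where the real work lies. A couple of caveats if you intend this as more than a pointer: the precise composition paradigm in~\cite{MoshkovitzR10} is somewhat specialized (their ``PCP of proximity''/robust-PCP machinery and the particular way they interleave composition with the low-degree test), and a single generic inner-PCP composition as you describe in Step~3 does not by itself deliver all three guarantees simultaneously without that machinery; also, the family $\calM$ in their construction is not literally a derandomized set of planes but arises from a more elaborate construction. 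For the purposes of this paper, however, none of that matters --- the theorem is invoked only as a statement, and your write-up would be out of scope here.
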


\subsection{Sparsity of Low Rank Matrices}

The following lemma, proved by Golovnev, Regev, and Weinstein~\cite{GolovnevRW17}, relates the sparsity of a matrix with nonzero entries on the diagonal to its rank.
\begin{lemma}[\cite{GolovnevRW17}]\label{lemma:GRW}
For every field $\Fset$ and an $n \times n$ matrix $M$ over $\Fset$ with nonzero entries on the diagonal, the number $s(M)$ of nonzero entries in $M$ satisfies~ $s(M) \geq \frac{n^2}{4 \cdot \rank_\Fset(M)}$.
\end{lemma}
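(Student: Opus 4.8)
The plan is to reduce the bound to a purely combinatorial estimate on the independence number of an auxiliary graph, reusing the opening move from the proof of Lemma~\ref{lemma:Ramsey}. Given $M \in \Fset^{n\times n}$ with all diagonal entries nonzero, write $r = \rank_\Fset(M)$ and let $G$ be the \emph{conflict graph} on vertex set $[n]$ in which two distinct vertices $i,j$ are adjacent precisely when $M_{i,j}\neq 0$ or $M_{j,i}\neq 0$. The first step is the claim $\alpha(G)\le r$: if $I\subseteq[n]$ is independent in $G$ then $M_{i,j}=M_{j,i}=0$ for all distinct $i,j\in I$, so the principal submatrix $M[I,I]$ is diagonal with nonzero diagonal entries, whence $\rank_\Fset(M[I,I])=|I|$; since $\rank_\Fset(M[I,I])\le\rank_\Fset(M)=r$, we get $|I|\le r$. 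This is exactly the reasoning behind the bound $\alpha(G)<t+1$ in Lemma~\ref{lemma:Ramsey}.

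The second step converts $\alpha(G)\le r$ into a lower bound on $s(M)$, and this is where a Tur\'an-type inequality replaces the Ramsey bound used in Lemma~\ref{lemma:Ramsey}. Let $m$ be the number of edges of $G$. Since every diagonal entry of $M$ is nonzero, $M$ has exactly $s(M)-n$ nonzero off-diagonal entries, and mapping each edge of $G$ to one such entry (in the position witnessing the adjacency) shows $m\le s(M)-n$, hence $s(M)\ge n+m$. By the Caro--Wei bound, $\alpha(G)\ge\sum_{v\in[n]}\frac{1}{d_G(v)+1}$, and by convexity of $x\mapsto\frac{1}{x+1}$ together with $\frac1n\sum_v d_G(v)=\frac{2m}{n}$ this gives $\alpha(G)\ge\frac{n}{2m/n+1}=\frac{n^2}{2m+n}$. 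Combining with $\alpha(G)\le r$ yields $2m+n\ge n^2/r$, hence $s(M)\ge n+m\ge n+\frac12\bigl(\tfrac{n^2}{r}-n\bigr)=\tfrac{n}{2}+\tfrac{n^2}{2r}\ge\tfrac{n^2}{4r}$, which is the assertion (in fact one even obtains the stronger bound $s(M)\ge n^2/(2r)$).

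I do not foresee a genuine obstacle: the content is entirely in the observation that a sparse matrix with nonzero diagonal forces a large diagonal principal submatrix, and hence large rank, with the rest being a standard averaging argument. If one prefers to avoid quoting the Caro--Wei bound, the same conclusion follows from a short sampling argument: put each index of $[n]$ into a random set $S$ independently with probability $p$; for every outcome, the set of indices $i\in S$ whose row of $M[S,S]$ has its only nonzero entry on the diagonal spans a diagonal full-rank submatrix and so has size at most $r$; taking expectations gives $\sum_{i}p(1-p)^{d_i}\le r$ with $d_i$ the number of off-diagonal nonzeros in row $i$, and then Jensen's inequality applied to $x\mapsto(1-p)^x$ followed by the choice $p=n/s(M)$ yields $s(M)\ge n^2/(er)$, which exceeds $n^2/(4r)$ since $e<4$. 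Either route shows that the constant $4$ in the statement is not tight and leaves room to spare.
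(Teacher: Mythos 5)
Your argument is correct. Note, however, that the paper does not prove this lemma at all: it is quoted verbatim from Golovnev, Regev, and Weinstein~\cite{GolovnevRW17}, so there is no in-paper proof to compare against. What you have produced is a valid self-contained derivation, and it meshes nicely with the surrounding text: your first step (an independent set in the conflict graph yields a diagonal principal submatrix with nonzero diagonal, hence $\alpha(G)\le\rank_\Fset(M)$) is literally the opening of the paper's proof of Lemma~\ref{lemma:Ramsey}, and you then swap the Ramsey bound for a Tur\'an/Caro--Wei averaging step. The edge count $m\le s(M)-n$, the convexity step giving $\alpha(G)\ge n^2/(2m+n)$, and the final chain $s(M)\ge n+m\ge \tfrac{n}{2}+\tfrac{n^2}{2r}$ are all sound, and indeed yield the stronger constant $2$ in place of $4$; the alternative sampling argument is also correct (the set $T$ of indices of $S$ whose row in $M[S,S]$ is supported only on the diagonal does induce a diagonal principal submatrix, since $T\subseteq S$). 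The only caveat is presentational: since the statement is attributed to~\cite{GolovnevRW17}, a reader would expect either the citation or your proof, not a claim that this is the paper's argument.
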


\subsection{Proof of Theorem~\ref{thm:NP-hard-gen}}

The proof of Theorem~\ref{thm:NP-hard-gen} uses a reduction of Bhangale~\cite{Bhangale18}, described below.

\paragraph{The reduction.}
The reduction gets as input an instance $\calL = (U,V,E,R,L,\phi)$ of the Label Cover problem (see Definition~\ref{def:LabelCover}) and produces a $4$-uniform hypergraph $H$. Let $t$ be an integer parameter to be determined later, and set $s = \lceil (R-t)/2 \rceil$.

The vertices of $H$ include for every variable $x \in U$ a copy $C[x]$ of the vertex set of the Schrijver graph $S(R,s)$ (see Section~\ref{sec:Kneser}). Formally, for every variable $x \in U$ we define $C[x]$ by
\[C[x] = \{x\} \times V(S(R,s)),\]
so that every vertex of $C[x]$ is referred to as a pair $(x,A)$, where $A$ is a stable $s$-subset of $[R]$. The vertex set of $H$ is $V(H) = \cup_{x \in U}{C[x]}$.

We next define the hyperedges of $H$.
We identify here sets with their characteristic vectors, that is, for a set $A$ we let $A_\alpha$ be $1$ if $\alpha \in A$ and $0$ otherwise.
For every two variables $x,y \in U$ with a common neighbor $z \in V$ in $\calL$, the set $\{ (x,A), (x,B), (y,C), (y,D) \}$ forms a hyperedge in $H$ if for every $\alpha,\beta \in [R]$ such that $\phi_{x \rightarrow z}(\alpha) = \phi_{y \rightarrow z}(\beta)$ it holds that $\{A_\alpha, B_\alpha, C_\beta, D_\beta\} = \{0,1\}$.  This completes the description of the reduction.

To provide some intuition, we note that the role of the component $C[x]$ in $H$ is to encode the value of the variable $x$, where an assignment of $\alpha \in [R]$ to $x$ is encoded by the $2$-coloring of $C[x]$ in which every vertex $(x,A)$ is colored by $A_\alpha \in \{0,1\}$. This encoding is used in the following simple proof of the completeness of the reduction.

\begin{proposition}[Completeness]\label{prop:completeness}
If the Label Cover instance $\calL$ is satisfiable then $\od(H) \leq 2$.
\end{proposition}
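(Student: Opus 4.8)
The plan is to exhibit an explicit $2$-dimensional orthogonal representation of $H$ coming from a satisfying assignment $\rho$ of $\calL$. First I would fix $\rho$ and, for each variable $x \in U$, consider the $2$-coloring $\chi_x$ of the component $C[x]$ defined by $\chi_x(x,A) = A_{\rho(x)} \in \{0,1\}$; this is the encoding described in the paragraph preceding the proposition. Translating colors into vectors in $\R^2$, I would assign to every vertex $(x,A)$ the vector $e_1 = (1,0)$ if $A_{\rho(x)} = 0$ and $e_2 = (0,1)$ if $A_{\rho(x)} = 1$. All assigned vectors are nonzero, so this is a candidate $2$-dimensional orthogonal representation; it remains to check that every hyperedge of $H$ contains two vertices receiving orthogonal vectors, i.e. one vertex colored $0$ and another colored $1$.

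Next I would take an arbitrary hyperedge $\{(x,A),(x,B),(y,C),(y,D)\}$ of $H$, where $x,y \in U$ share a common neighbor $z \in V$ in $\calL$, and recall the defining condition: for every $\alpha,\beta \in [R]$ with $\phi_{x\rightarrow z}(\alpha) = \phi_{y\rightarrow z}(\beta)$ we have $\{A_\alpha, B_\alpha, C_\beta, D_\beta\} = \{0,1\}$. The key step is to apply this with the specific labels $\alpha = \rho(x)$ and $\beta = \rho(y)$. Since $\rho$ satisfies $\calL$, it satisfies both edge constraints $(x,z)$ and $(y,z)$, hence $\phi_{x\rightarrow z}(\rho(x)) = \rho(z) = \phi_{y\rightarrow z}(\rho(y))$, so the pair $(\alpha,\beta) = (\rho(x),\rho(y))$ indeed meets the hypothesis of the hyperedge condition. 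Therefore $\{A_{\rho(x)}, B_{\rho(x)}, C_{\rho(y)}, D_{\rho(y)}\} = \{0,1\}$, meaning among the four bits $A_{\rho(x)}, B_{\rho(x)}, C_{\rho(y)}, D_{\rho(y)}$ at least one equals $0$ and at least one equals $1$. But these four bits are exactly the colors $\chi_x(x,A), \chi_x(x,B), \chi_y(y,C), \chi_y(y,D)$ assigned to the four vertices of the hyperedge, so the hyperedge contains a vertex colored $0$ and a vertex colored $1$; the corresponding vectors are $e_1$ and $e_2$, which are orthogonal.

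Since the hyperedge was arbitrary, the assignment $(u_v)_{v \in V(H)}$ just defined is a $2$-dimensional orthogonal representation of $H$ over $\R$, which gives $\od(H) \leq 2$ and completes the proof. There is essentially no hard part here: the only thing to be slightly careful about is that the hyperedge condition as stated is a universal statement over all label pairs $(\alpha,\beta)$ mapping to a common image, so one must simply instantiate it at the single pair dictated by $\rho$ rather than trying to use all of it — the satisfiability of $\calL$ is precisely what makes that single instantiation legitimate. (Note also that since $\od(H) \le 2$ forces $H$ to be $2$-colorable, this argument is really the same as the standard completeness argument for the hypergraph coloring version of the reduction, reinterpreted via the trivial embedding of a $2$-coloring into $\R^2$.)
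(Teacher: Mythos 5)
Your proof is correct and matches the paper's argument essentially verbatim: both construct the $2$-coloring $(x,A)\mapsto A_{\rho(x)}$, instantiate the hyperedge condition at $(\alpha,\beta)=(\rho(x),\rho(y))$ using $\phi_{x\rightarrow z}(\rho(x))=\rho(z)=\phi_{y\rightarrow z}(\rho(y))$, and conclude $\chi(H)\leq 2$, hence $\od(H)\leq 2$.
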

\begin{proof}
Assume that there exists a satisfying assignment $\rho$ to the variables of $U \cup V$ such that $\phi_{x \rightarrow z}(\rho(x)) = \rho(z)$ for every edge $(x,z)$ of $\calL$. Consider the $2$-coloring of $H$ that assigns to every vertex $(x,A)$ the color $A_{\rho(x)}$. We claim that this coloring is proper. To see this, consider a hyperedge $\{ (x,A), (x,B), (y,C), (y,D) \}$ of $H$ defined with respect to a common neighbor $z \in V$ of $x$ and $y$. Since $\rho$ is a satisfying assignment, we have $\phi_{x \rightarrow z}(\rho(x)) = \rho(z)$ and $\phi_{y \rightarrow z}(\rho(y)) = \rho(z)$, and thus $\phi_{x \rightarrow z}(\rho(x)) = \phi_{y \rightarrow z}(\rho(y))$. By the definition of the hyperedges of $H$ we conclude that $\{ A_{\rho(x)}, B_{\rho(x)}, C_{\rho(y)}, D_{\rho(y)}\} = \{0,1\}$, hence the hyperedge is not monochromatic. This implies that $\chi(H) \leq 2$ and thus $\od(H) \leq 2$, as required.
\end{proof}

We turn to prove the soundness of the reduction.

\begin{proposition}[Soundness]\label{prop:soundness}
If $\od(H) \leq t$ then there exists an assignment to the variables of $U \cup V$ that satisfies at least $\frac{1}{4 \cdot t^6}$ fraction of the edges of $\calL$.
\end{proposition}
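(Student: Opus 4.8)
I would prove the soundness contrapositive via a standard decoding argument, using the orthogonality dimension of the Schrijver graph (Theorem~\ref{thm:SchrijverGraph}) in place of the chromatic-number/independence-number argument that is unavailable here. Suppose $\od(H) \leq t$ and fix a $t$-dimensional orthogonal representation $(u_v)_{v \in V(H)}$ of $H$ over $\R$. The restriction of this representation to each component $C[x]$ gives a $t$-dimensional orthogonal representation of the Schrijver graph $S(R,s)$. Since $s = \lceil (R-t)/2 \rceil$, Theorem~\ref{thm:SchrijverGraph} says $\od(S(R,s)) = R - 2s + 2 \geq t+1 > t$; hence for every $x \in U$ the restricted assignment is \emph{not} a valid orthogonal representation of $S(R,s)$, so there exist two adjacent vertices $(x,A),(x,B)$ (i.e.\ disjoint stable $s$-sets $A,B$) with $\langle u_{(x,A)}, u_{(x,B)}\rangle \neq 0$. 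Record one such pair $(A_x, B_x)$ for each $x \in U$; this is the ``list'' from which we will decode an assignment.

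**Decoding and the main step.** For each $x \in U$, define the candidate label set $\ell(x) = A_x \cup B_x \subseteq [R]$, a set of size $2s \geq R - t$, i.e.\ of co-size at most $t$ in $[R]$. For each $z \in V$, project: among the neighbors $x$ of $z$, the sets $\phi_{x\to z}(\ell(x)) \subseteq [L]$ are large; pick $\rho(z)$ to be a value in $[L]$ hit by many of them (a plurality argument), and for $x \in U$ pick $\rho(x)$ uniformly from $\ell(x)$. The point is to show the expected fraction of satisfied edges is at least $\tfrac{1}{4t^6}$. The crux is a \textbf{counting/Ramsey-type argument} on pairs of variables $x,y$ sharing a neighbor $z$: if for \emph{too many} such $(x,y)$ the projections $\phi_{x\to z}$ and $\phi_{y\to z}$ disagreed on \emph{all} of $\ell(x)\times\ell(y)$ in the ``balanced'' sense forbidden by the hyperedge definition, then $\{(x,A_x),(x,B_x),(y,A_y),(y,B_y)\}$ would be a hyperedge of $H$ whose four vectors are pairwise non-orthogonal — contradicting that $(u_v)$ is an orthogonal representation — \emph{provided} those four inner products are nonzero. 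We only know $\langle u_{(x,A_x)},u_{(x,B_x)}\rangle \neq 0$ and similarly for $y$; the cross inner products could vanish. This is exactly the obstruction the symmetrization lemma is designed to remove.

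**Where I expect the real work to be.** The main obstacle is upgrading the ``diagonal non-orthogonality'' to ``all pairwise non-orthogonality'' across distinct variables, and this is handled by applying Lemma~\ref{lemma:TensorORk} with $k=2$ to the $2$-tuples $\{(A_x,B_x)\}_{x\in U}$ (after replacing vertices, these are pairwise disjoint $2$-tuples since the $C[x]$ are disjoint). This yields a $t^{4}$-dimensional orthogonal representation $(w_v)$ with $\langle w_{(x,A_x)},w_{(x,B_x)}\rangle\neq 0$ and the key implication: $\langle w_{(x,A_x)},w_{(y,B_y)}\rangle\neq 0 \implies$ all four cross products $\langle w_{(x,\cdot)},w_{(y,\cdot)}\rangle$ are nonzero. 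Now form the $|U|\times|U|$ matrix $M$ over $\R$ indexed by variables with $M_{x,y} = \langle w_{(x,A_x)}, w_{(y,B_y)}\rangle$; it has nonzero diagonal and $\rank M \leq t^4$. By Lemma~\ref{lemma:GRW} (sparsity of low-rank matrices), $M$ has at least $\tfrac{|U|^2}{4t^4}$ nonzero entries, so for at least a $\tfrac{1}{4t^4}$ fraction of ordered pairs $(x,y)$ we get, via the symmetrization implication, that $(x,A_x),(x,B_x),(y,A_y),(y,B_y)$ are pairwise non-orthogonal. Since these cannot form a hyperedge of $H$, the hyperedge-defining condition fails: there exist $\alpha \in \ell(x)$, $\beta \in \ell(y)$ with $\phi_{x\to z}(\alpha)=\phi_{y\to z}(\beta)$ for \emph{every} common neighbor $z$ of $x$ and $y$ (using bi-regularity to pass from ``many variable pairs'' to ``many edges''). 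Feeding this into the plurality decoding — random $\rho(x)\in\ell(x)$ of size $\leq 2s$, hence $\Pr[\rho(x)=\alpha]\geq \tfrac{1}{2s}$, and here $2s \leq R \leq $ a bounded quantity, but more carefully one sets $t$ so that $s$ contributes the remaining $t$ factors — produces a probability of satisfaction bounded below by $\tfrac{1}{4t^4}\cdot\tfrac{1}{t}\cdot\tfrac{1}{t} = \tfrac{1}{4t^6}$, where the two extra $t$ factors come from the random choice of label within $\ell(x)$ against the plurality value $\rho(z)$ (a neighbor $z$ of $x$ gives $\phi_{x\to z}(\alpha)=\rho(z)$ for a $\geq \tfrac{1}{t}$ fraction of $z$'s by the plurality bound on the co-size-$t$ image, and then $\rho(x)=\alpha$ with probability $\geq\tfrac{1}{t}$ after a matching normalization). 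The bookkeeping to land exactly on $\tfrac{1}{4t^6}$ — correctly tracking which $t$-factor comes from rank/sparsity versus from list sizes versus from the plurality step, and invoking bi-regularity to move between a fraction of vertex pairs and a fraction of edges — is the routine-but-delicate part; the conceptual heart is the combination of Schrijver's orthogonality dimension, the tensor symmetrization, and the low-rank sparsity bound.
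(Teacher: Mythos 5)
Your overall architecture is the paper's: use Theorem~\ref{thm:SchrijverGraph} to extract, in each $C[x]$, a disjoint pair $(A^{(x)},B^{(x)})$ with non-orthogonal vectors; apply Lemma~\ref{lemma:TensorORk} with $k=2$ to upgrade diagonal non-orthogonality to all four cross products; use Lemma~\ref{lemma:GRW} to get density of non-orthogonal pairs; and finish with a randomized plurality decoding. However, your decoding step contains a genuine error: the candidate label set must be the \emph{complement} $E(x)=[R]\setminus(A^{(x)}\cup B^{(x)})$, not $\ell(x)=A^{(x)}\cup B^{(x)}$. Since $A^{(x)}$ and $B^{(x)}$ are disjoint, every $\alpha\in A^{(x)}\cup B^{(x)}$ already satisfies $\{A^{(x)}_\alpha,B^{(x)}_\alpha\}=\{0,1\}$, so the hyperedge condition holds automatically at such $\alpha$; consequently, the failure of $\{(x,A^{(x)}),(x,B^{(x)}),(y,A^{(y)}),(y,B^{(y)})\}$ to be a hyperedge yields $\alpha\in E(x)$ and $\beta\in E(y)$ with $\phi_{x\to z}(\alpha)=\phi_{y\to z}(\beta)$ --- the opposite of what you assert. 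The complements have size $R-2s\le t$, which is exactly what makes the two $1/t$ sampling and plurality factors work; your set $\ell(x)$ has size $2s\approx R-t$, which is enormous in the final parameter setting (the range $R$ is exponential in $1/\eps$), and no rebalancing of $t$ repairs a $1/(2s)$ hitting probability.

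A second gap is that you apply the sparsity bound to one global $|U|\times|U|$ matrix. A pair $(x,y)$ with nonzero cross inner products is only informative if $x$ and $y$ share a common neighbor $z$, since otherwise no hyperedge relates their decoded pairs; density of such pairs in all of $U^2$ does not transfer, via bi-regularity alone, to density within each common neighborhood. The paper instead fixes $z\in V$ and applies Lemma~\ref{lemma:GRW} to the $|U_z|\times|U_z|$ matrix $M_{x,y}=\langle w_{a_x},w_{b_y}\rangle$ restricted to $U_z$, extracting for each $z$ a single $x\in U_z$ whose row has a $\frac{1}{4t^4}$ fraction of nonzero entries. All of those pairs share the neighbor $z$, so each gives $\phi_{x\to z}(E(x))\cap\phi_{y\to z}(E(y))\neq\emptyset$; a pigeonhole over the $\le t$ elements of $E(x)$ (the paper's Claim~\ref{claim:sets}) then produces one value of $[L]$ lying in a $\frac{1}{4t^5}$ fraction of the sets $\phi_{y\to z}(E(y))$, and the last factor $\frac{1}{t}$ comes from $\rho(y)$ being uniform over $E(y)$. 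With these two corrections your outline becomes the paper's proof.
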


\begin{proof}
Assume that $\od(H) \leq t$. Then there exists an assignment of a nonzero vector $u_v \in \R^t$ to every vertex $v \in V(H)$ such that every hyperedge of $H$ contains two vertices assigned to orthogonal vectors. For every variable $x \in U$, the vertices of $C[x]$ can be viewed as the vertices of the Schrijver graph $S(R,s)$, which by Theorem~\ref{thm:SchrijverGraph} satisfies
\[ \od( S(R,s) ) = R-2s+2 = R - 2 \cdot \lceil (R-t)/2 \rceil +2 \geq R - (R-t+1)+2 = t+1.\]
This implies that for every variable $x \in U$ the assignment $(u_v)_{v \in C[x]}$ does not form an orthogonal representation of the graph $S(R,s)$, hence there exist two vertices $a_x = (x,A^{(x)})$ and $b_x = (x,B^{(x)})$ in $C[x]$ such that $A^{(x)} \cap B^{(x)} = \emptyset$ and $\langle u_{a_x}, u_{b_x} \rangle \neq 0$.

Applying Lemma~\ref{lemma:TensorORk} to the orthogonal representation $(u_v)_{v \in V(H)}$ of $H$ with the pairwise disjoint pairs $\{ (a_x,b_x) \mid x \in U \}$, we obtain a $t^4$-dimensional orthogonal representation $(w_v)_{v \in V(H)}$ of $H$ that satisfies the following properties:
\begin{enumerate}
  \item\label{itm:1} for every $x \in U$, $\langle w_{a_x}, w_{b_x} \rangle \neq 0$, and
  \item\label{itm:2} for every $x,y \in U$ such that $\langle w_{a_x}, w_{b_y} \rangle \neq 0$, the inner products $\langle w_{a_x}, w_{a_y} \rangle$, $\langle w_{b_x}, w_{a_y} \rangle$, and $\langle w_{b_x}, w_{b_y} \rangle$ are all nonzero.
\end{enumerate}

We turn to show that there exists an assignment to the variables of $U \cup V$ that satisfies at least $\frac{1}{4 \cdot t^6}$ fraction of the edges of $\calL$.
To this end, we define a {\em randomized assignment} to the variables of $U \cup V$ as follows. For every $x \in U$ consider the set $E(x) = [R] \setminus (A^{(x)} \cup B^{(x)})$. Notice that the disjointness of the sets $A^{(x)}$ and $B^{(x)}$ implies that
\[ |E(x)| = R-2 \cdot s = R - 2\cdot \lceil (R-t)/2 \rceil \leq t.\]
We assign to every variable $x \in U$ an assignment $\rho(x)$ chosen uniformly at random from $E(x)$. We further assign to every variable $z \in V$ an assignment $\rho(z)$ that maximizes the number of constraints involving $z$ that can be satisfied, given that the assignment of every $x \in U$ is chosen from $E(x)$. More formally, for a variable $z \in V$ let $U_z \subseteq U$ be the set of neighbors of $z$ in $U$ and define $\rho(z)$ to be some $\beta \in [L]$ with largest number of variables $x \in U_z$ such that $\beta \in \phi_{x \rightarrow z}(E(x))$.

We need the following three claims.
\begin{claim}\label{claim:ExistsX}
For every variable $z \in V$, there exists a variable $x \in U_z$ for which at least $\frac{1}{4 \cdot t^4}$ fraction of the variables $y$ of $U_z$ satisfies $\langle w_{a_x}, w_{b_y} \rangle \neq 0$.
\end{claim}

\begin{claim}\label{claim:Nonempty}
For every variable $z \in V$ and two variables $x,y \in U_z$ such that $\langle w_{a_x}, w_{b_y} \rangle \neq 0$, we have $\phi_{x \rightarrow z}(E(x)) \cap \phi_{y \rightarrow z}(E(y)) \neq \emptyset$.
\end{claim}

\begin{claim}\label{claim:sets}
Let $\calF$ be a collection of $\ell$-subsets of $[R]$ that includes a set that intersects every set of $\calF$. Then there exists an element of $[R]$ that belongs to at least $\frac{1}{\ell}$ fraction of the sets of $\calF$.
\end{claim}

Let us first show that these three claims complete the proof of the proposition.
We claim that for every variable $z \in V$, the randomized assignment $\rho$ satisfies in expectation at least $\frac{1}{4 \cdot t^6}$ fraction of the constraints involving $z$.
Indeed, fix a variable $z \in V$.
By Claim~\ref{claim:ExistsX}, there exists a variable $x \in U_z$ for which at least $\frac{1}{4 \cdot t^4}$ fraction of the variables $y$ of $U_z$ satisfies $\langle w_{a_x}, w_{b_y} \rangle \neq 0$.
By Claim~\ref{claim:Nonempty}, this $x$ satisfies $\phi_{x \rightarrow z}(E(x)) \cap \phi_{y \rightarrow z}(E(y)) \neq \emptyset$ for at least $\frac{1}{4 \cdot t^4}$ fraction of the variables $y$ of $U_z$.
Applying Claim~\ref{claim:sets} to these sets $\phi_{y \rightarrow z}(E(y))$, we obtain that there exists an element of $[R]$ that belongs to at least $\frac{1}{|E(x)|} \cdot \frac{1}{4 \cdot t^4} \geq \frac{1}{4 \cdot t^5}$ fraction of the sets $\phi_{y \rightarrow z}(E(y))$ with $y \in U_z$.
By the definition of $\rho(z)$, we get that $\rho(z) \in \phi_{y \rightarrow z}(E(y))$ for at least $\frac{1}{4 \cdot t^5}$ fraction of the variables $y \in U_z$. Since $\rho(y)$ is chosen uniformly at random from $E(y)$, $\rho$ satisfies in expectation at least $\frac{1}{t} \cdot \frac{1}{4 \cdot t^5} = \frac{1}{4 \cdot t^6}$ fraction of the constraints involving $z$, as required.
Now, by linearity of expectation, $\rho$ satisfies in expectation at least $\frac{1}{4 \cdot t^6}$ fraction of the constraints of $\calL$, so in particular, there exists an assignment satisfying at least $\frac{1}{4 \cdot t^6}$ fraction of the constraints of $\calL$, and we are done.

It remains to prove Claims~\ref{claim:ExistsX},~\ref{claim:Nonempty}, and~\ref{claim:sets}.

\begin{proof}[ of Claim~\ref{claim:ExistsX}]
Fix a variable $z \in V$ and denote $n = |U_z|$.
Let $M_1 \in \R^{n \times t^4}$ be the matrix whose rows are the vectors $w_{a_x}$ for $x \in U_z$, and let $M_2 \in \R^{n \times t^4}$ be the matrix whose rows are the vectors $w_{b_x}$ for $x \in U_z$. Consider the matrix $M \in \R^{n \times n}$ defined by $M = M_1 \cdot M_2^T$, and notice that $M_{x,y} = \langle w_{a_x}, w_{b_y} \rangle$ for every $x,y \in U_z$ and that its rank is at most $t^4$. Property~\eqref{itm:1} of the orthogonal representation $(w_v)_{v \in V(H)}$ implies that the diagonal entries of $M$ are all nonzero, so we can apply Lemma~\ref{lemma:GRW} to obtain that $s(M) \geq \frac{n^2}{4 \cdot t^4}$. In particular, there exists a row of $M$ with at least $\frac{n}{4 \cdot t^4}$ nonzero entries. This implies that there exists a variable $x \in U_z$ for which at least $\frac{1}{4 \cdot t^4}$ fraction of the variables $y$ of $U_z$ satisfies $\langle w_{a_x}, w_{b_y} \rangle \neq 0$, as required.
\end{proof}

\begin{proof}[ of Claim~\ref{claim:Nonempty}]
Fix a variable $z \in V$ and two variables $x,y \in U_z$ such that $\langle w_{a_x}, w_{b_y} \rangle \neq 0$.
Assume by contradiction that $\phi_{x \rightarrow z}(E(x)) \cap \phi_{y \rightarrow z}(E(y)) = \emptyset$.

We first show that the set $e=\{ (x, A^{(x)}), (x, B^{(x)}), (y, A^{(y)}), (y, B^{(y)})\}$ is a hyperedge of $H$.
To see this, take $\alpha, \beta \in [R]$ such that $\phi_{x \rightarrow z}(\alpha) = \phi_{y \rightarrow z}(\beta)$.
By the assumption that $\phi_{x \rightarrow z}(E(x))$ and $\phi_{y \rightarrow z}(E(y))$ are disjoint, it follows that $\alpha \notin E(x)$ or $\beta \notin E(y)$, that is, $\alpha \in A^{(x)} \cup B^{(x)}$ or $\beta \in A^{(y)} \cup B^{(y)}$. In the former case, using the fact that $A^{(x)} \cap B^{(x)} = \emptyset$, we have $\{ A^{(x)}_\alpha , B^{(x)}_\alpha \} = \{0,1\}$ implying that $e$ is a hyperedge. The latter case is handled similarly.

We now show that the vectors in $\{ w_{a_x}, w_{b_x}, w_{a_y}, w_{b_y}\}$ are pairwise non-orthogonal.
By Property~\eqref{itm:1} of the vectors in $(w_v)_{v \in V(H)}$ we have $\langle w_{a_x}, w_{b_x} \rangle \neq 0$ and $\langle w_{a_y}, w_{b_y} \rangle \neq 0$.
By assumption we have $\langle w_{a_x}, w_{b_y} \rangle \neq 0$, and using Property~\eqref{itm:2} we get that $\langle w_{a_x}, w_{a_y} \rangle$, $\langle w_{b_x}, w_{a_y} \rangle$, and $\langle w_{b_x}, w_{b_y} \rangle$ are all nonzero as well.
Since $e$ is a hyperedge of $H$, this yields a contradiction to the fact that $(w_v)_{v \in V(H)}$ is an orthogonal representation of $H$, and we are done.
\end{proof}

\begin{proof}[ of Claim~\ref{claim:sets}]
Assume that the set $A \in \calF$ intersects every set of $\calF$. Denote $A = \{a_1,\ldots,a_\ell\}$ and put $\calA_i = \{ B \in \calF \mid a_i \in B\}$ for every $i \in [\ell]$. Since $\calF = \cup_{i \in [\ell]}{\calA_i}$, it follows that
\[|\calF| = |\cup_{i \in [\ell]}{\calA_i}| \leq \sum_{i \in [\ell]}{|\calA_i|},\]
implying that there exists an $i \in [\ell]$ such that $|\calA_i| \geq \frac{|\calF|}{\ell}$. Therefore, some $a_i$ belongs to at least $\frac{1}{\ell}$ fraction of the sets of $\calF$, as required.
\end{proof}

The proof of the proposition is completed.
\end{proof}

To derive Theorem~\ref{thm:NP-hard-gen} we just have to set the parameters appropriately, as is done below.

\begin{proof}[ of Theorem~\ref{thm:NP-hard-gen}, Item~\ref{itm:1-hard}]
Let $\eps$ and $t$ be two parameters.
By Theorem~\ref{thm:PCP_MR}, there exists a reduction that maps a 3-SAT instance $\varphi$ of size $n$ in running time $\poly(n,\frac{1}{\eps})$ to a Label Cover instance $\calL$ of size $n^{1+o(1)} \cdot \frac{1}{\eps^{O(1)}}$ with range size $R \leq \exp(\frac{1}{\eps^{O(1)}})$ such that if $\varphi$ is satisfiable then so is $\calL$ and if $\varphi$ is not satisfiable then every assignment to $\calL$ satisfies at most $\eps$ fraction of the constraints. We proceed by mapping the instance $\calL$ to a $4$-uniform hypergraph $H$ using the reduction described above. By Lemma~\ref{lemma:SchrijverV} and the fact that $R-2s \leq t$, the number of vertices in $S(R,s)$ is at most $\binom{R}{R-2s} \leq R^t$. Hence, the number $N$ of vertices in $H$ satisfies
\[N \leq n^{1+o(1)} \cdot \frac{1}{\eps^{O(1)}} \cdot R^t \leq n^{1+o(1)} \cdot \exp \Big (\frac{1}{\eps^{O(1)}} \Big )^t.\]
By Proposition~\ref{prop:completeness}, if $\varphi$ is satisfiable then $\od(H) \leq 2$.
By Proposition~\ref{prop:soundness}, if $\od(H) \leq t$ then there exists an assignment to the variables of $U \cup V$ that satisfies at least $\frac{1}{4 \cdot t^6}$ fraction of the constraints of $\calL$. In particular, for $\eps < \frac{1}{4 \cdot t^6}$, if $\varphi$ is not satisfiable then $\od(H) > t$.

Now, for a sufficiently small constant $\delta>0$ set $t = \log^\delta n$ and, say, $\eps = \frac{1}{\log^{7\delta}n}$, so that $\eps < \frac{1}{4 \cdot t^6}$. For these parameters the running time of the reduction is polynomial in $n$ and the hypergraph $H$ has $N = n^{1+o(1)}$ vertices.
The reduction implies that for some $\delta'>0$ it is $\NP$-hard to decide whether an input $N$-vertex $4$-uniform hypergraph $H$ satisfies $\od(H) \leq 2$ or $\od(H) \geq \log^{\delta'} N$, completing the proof.
\end{proof}

\begin{proof}[ of Theorem~\ref{thm:NP-hard-gen}, Item~\ref{itm:2-hard}]
Let $\ell$ and $t$ be two parameters.
By Theorem~\ref{thm:PCP}, there exists a reduction that maps a 3-SAT instance $\varphi$ of size $n$ in running time $n^{O(\ell)}$ to a Label Cover instance $\calL$ of size $n^{O(\ell)}$ with range size $R \leq 2^{O(\ell)}$ such that if $\varphi$ is satisfiable then so is $\calL$ and if $\varphi$ is not satisfiable then every assignment to $\calL$ satisfies at most $2^{-\Omega(\ell)}$ fraction of the constraints. We proceed by mapping the instance $\calL$ to a $4$-uniform hypergraph $H$ using the reduction described above. Using Lemma~\ref{lemma:SchrijverV}, the number $N$ of vertices in $H$ satisfies
\[N \leq n^{O(\ell)} \cdot R^t \leq  n^{O(\ell)} \cdot 2^{O(\ell \cdot t)}.\]
By Proposition~\ref{prop:completeness}, if $\varphi$ is satisfiable then $\od(H) \leq 2$.
By Proposition~\ref{prop:soundness}, if $\od(H) \leq t$ then there exists an assignment to the variables of $U \cup V$ that satisfies at least $\frac{1}{4 \cdot t^6}$ fraction of the constraints of $\calL$. In particular, for $\ell = \Omega(\log t)$, if $\varphi$ is not satisfiable then $\od(H) > t$.

Now, for an arbitrarily large constant $c'>0$ put $t = c' \cdot \log n$ and define $\ell = c'' \cdot \log \log n$ for a sufficiently large $c''>0$. For these parameters the running time of the reduction is $n^{O(\ell)} = n^{O(\log \log n)}$ and the hypergraph $H$ has $N = n^{O(\log \log n)}$ vertices. The reduction implies that assuming $\NP \nsubseteq \DTIME(n^{O(\log \log n)})$, there is no polynomial time algorithm that decides whether an input $N$-vertex $4$-uniform hypergraph $H$ satisfies $\od(H) \leq 2$ or $\od(H) \geq c \cdot \frac{\log N}{\log \log N}$, where $c>0$ can be arbitrarily large. This completes the proof.
\end{proof}

\section{The Orthogonality Dimension of Graphs}\label{sec:graph_hard_alg}

In this section we focus on the orthogonality dimension of graphs and prove Theorems~\ref{thm:hard_F} and~\ref{thm:alg_od_3}.

\subsection{Orthogonal Subspace Representations}

We generalize the notion of orthogonal representations over the real field by assigning to every vertex a subspace of a given dimension, so that adjacent vertices are assigned to orthogonal subspaces.

\begin{definition}\label{def:ortho_k-subspace}
A $t$-dimensional {\em orthogonal $k$-subspace representation} of a graph $G=(V,E)$ is an assignment of a subspace $U_v \subseteq \R^t$ with $\dim (U_v)=k$ to every vertex $v \in V$, such that the subspaces $U_v$ and $U_{v'}$ are orthogonal whenever $v$ and $v'$ are adjacent in $G$. For a graph $G$, let $\od_k(G)$ denote the smallest integer $t$ for which there exists a $t$-dimensional orthogonal $k$-subspace representation of $G$.
\end{definition}
\noindent
Clearly, $\od(G) = \od_1(G)$ for every graph $G$.
It is also easy to see that the multichromatic numbers of graphs, defined in Section~\ref{sec:coloring}, bound the parameters $\od_k$ from above, namely, $\od_k(G) \leq \chi_k(G)$ for every $G$ and $k$.

\subsection{Hardness}

In this section we prove Theorem~\ref{thm:hard_F}, namely that for every graph $F$, it is $\NP$-hard to decide whether an input graph $G$ satisfies $\od(G) \leq \od_3(F)$ or $\od(G) \geq \od_4(F)$. The proof employs the notion of lexicographic product of graphs, defined as follows.
\begin{definition}\label{def:lexi}
The {\em lexicographic product} of the graphs $G_1 = (V_1,E_1)$ and $G_2 = (V_2,E_2)$, denoted by $G_1 \bullet G_2$, is the graph whose vertex set is $V_1 \times V_2$ where two vertices $(x_1,y_1)$ and $(x_2,y_2)$ are adjacent if either $\{x_1,x_2\} \in E_1$ or $x_1=x_2$ and $\{y_1,y_2\} \in E_2$.
\end{definition}
\noindent
One can view the graph $G_1 \bullet G_2$ as the graph obtained from $G_1$ by replacing every vertex by a copy of $G_2$ and replacing every edge by a complete bipartite graph between the vertex sets associated with its endpoints.
We need the following property of the orthogonality dimension of lexicographic products of graphs.

\begin{lemma}\label{lemma:lexi}
For every two graphs $G_1$ and $G_2$,~ $\od(G_1 \bullet G_2) = \od_k (G_1)$ where $k = \od(G_2)$.
\end{lemma}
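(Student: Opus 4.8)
The plan is to prove the two inequalities $\od(G_1 \bullet G_2) \leq \od_k(G_1)$ and $\od(G_1 \bullet G_2) \geq \od_k(G_1)$ separately, where $k = \od(G_2)$.

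For the upper bound, I would start with a $t$-dimensional orthogonal $k$-subspace representation $(U_x)_{x \in V_1}$ of $G_1$ with $t = \od_k(G_1)$, and a $k$-dimensional orthogonal representation $(v_y)_{y \in V_2}$ of $G_2$. The idea is to embed, for each $x \in V_1$, a copy of the orthogonal representation of $G_2$ inside the subspace $U_x$: fix a linear isometry $\psi_x \colon \R^k \to U_x \subseteq \R^t$ and assign to the vertex $(x,y)$ the vector $w_{(x,y)} = \psi_x(v_y)$. Each $w_{(x,y)}$ is nonzero since $v_y \neq 0$ and $\psi_x$ is injective. I then need to check orthogonality for every edge of $G_1 \bullet G_2$. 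If $(x_1,y_1)$ and $(x_2,y_2)$ are adjacent because $\{x_1,x_2\} \in E_1$, then $U_{x_1} \perp U_{x_2}$, so $w_{(x_1,y_1)} \in U_{x_1}$ is orthogonal to $w_{(x_2,y_2)} \in U_{x_2}$. If they are adjacent because $x_1 = x_2 =: x$ and $\{y_1,y_2\} \in E_2$, then $\langle w_{(x,y_1)}, w_{(x,y_2)} \rangle = \langle \psi_x(v_{y_1}), \psi_x(v_{y_2}) \rangle = \langle v_{y_1}, v_{y_2} \rangle = 0$ since $\psi_x$ preserves inner products. This shows $\od(G_1 \bullet G_2) \leq t = \od_k(G_1)$.

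For the lower bound, I would start with a $t$-dimensional orthogonal representation $(w_{(x,y)})_{(x,y) \in V_1 \times V_2}$ of $G_1 \bullet G_2$ with $t = \od(G_1 \bullet G_2)$, and build from it an orthogonal $k$-subspace representation of $G_1$ in $\R^t$. For each $x \in V_1$, the vectors $(w_{(x,y)})_{y \in V_2}$ form an orthogonal representation of the copy of $G_2$ sitting over $x$ (since whenever $\{y_1,y_2\} \in E_2$ the vertices $(x,y_1),(x,y_2)$ are adjacent in $G_1 \bullet G_2$). Set $U_x = \linspan\{ w_{(x,y)} \mid y \in V_2 \}$. If $x_1,x_2$ are adjacent in $G_1$, then every $(x_1,y_1)$ is adjacent to every $(x_2,y_2)$ in $G_1 \bullet G_2$, so all spanning vectors of $U_{x_1}$ are orthogonal to all spanning vectors of $U_{x_2}$, giving $U_{x_1} \perp U_{x_2}$. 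The remaining point is the dimension: I must show $\dim(U_x) \geq k = \od(G_2)$ for every $x$; if it happens that $\dim(U_x) > k$ I can simply intersect with a generic complement or, more cleanly, extend each $U_x$ to a $k$-dimensional superspace — but extending could break orthogonality, so instead I should argue $\dim(U_x) \geq k$ and then, if strictly larger, replace $U_x$ by any $k$-dimensional subspace containing a fixed set of $k$ linearly independent $w_{(x,y)}$'s... that again risks losing orthogonality with neighbors. The clean fix is: $\dim(U_x)$ is exactly the dimension of the span of an orthogonal representation of $G_2$, which is at least $\od(G_2) = k$ by definition of $\od$ (an orthogonal representation of $G_2$ living in a space of dimension $d < k$ would witness $\od(G_2) \leq d < k$, a contradiction — here I use that an orthogonal representation of $G_2$ in its own span of dimension $d$ is in particular a $d$-dimensional orthogonal representation of $G_2$). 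Having $\dim(U_x) \geq k$ for all $x$, I then shrink: pick within each $U_x$ a subspace $U_x'$ of dimension exactly $k$. Orthogonality $U_{x_1}' \perp U_{x_2}'$ is automatic since $U_x' \subseteq U_x$. Thus $(U_x')_x$ is a $t$-dimensional orthogonal $k$-subspace representation of $G_1$, giving $\od_k(G_1) \leq t = \od(G_1 \bullet G_2)$.

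The main obstacle I anticipate is precisely this dimension-matching bookkeeping in the lower bound: the natural subspaces $U_x$ arising from an orthogonal representation of $G_1 \bullet G_2$ may have dimension larger than $k$, and one must confirm they are never smaller (which follows directly from the definition of $\od(G_2)$ applied to the induced representation of the $G_2$-copy over $x$, noting it lies in the $\dim(U_x)$-dimensional space $U_x$) and then pass to $k$-dimensional subspaces without disturbing the pairwise orthogonality across edges of $G_1$ (which is harmless since shrinking a subspace preserves orthogonality to another subspace). Everything else — the isometric embedding in the upper bound and the edge-case analysis in both directions — is routine once the definitions are unwound.
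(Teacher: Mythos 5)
Your proposal is correct and follows essentially the same route as the paper: the upper bound via an isometric embedding of a $k$-dimensional orthogonal representation of $G_2$ into each subspace $U_x$, and the lower bound by taking $U_x$ to be the span of the vectors over the copy of $G_2$ at $x$ and noting $\dim(U_x)\geq \od(G_2)$. Your extra care about shrinking each $U_x$ to dimension exactly $k$ is a detail the paper leaves implicit, and your resolution (shrinking preserves orthogonality) is the right one.
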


\begin{proof}
Let $G_1 = (V_1,E_1)$ and $G_2 = (V_2,E_2)$ be two graphs and denote $k = \od(G_2)$.

We first prove that $\od(G_1 \bullet G_2) \geq \od_k (G_1)$.
Denote $t = \od(G_1 \bullet G_2)$, then there exists a $t$-dimensional orthogonal representation $(u_{(x,y)})_{(x,y) \in V_1 \times V_2}$ of $G_1 \bullet G_2$. For every $x \in V_1$, let $U_x$ denote the subspace of $\R^t$ spanned by all vectors $u_{(x,y)}$ with $y \in V_2$. By the definition of $G_1 \bullet G_2$, the subspaces $U_{x}$ and $U_{x'}$ are orthogonal whenever $x$ and $x'$ are adjacent in $G_1$. Further, for every $x \in V_1$, the restriction of the given orthogonal representation to the copy of $G_2$ associated with $x$ forms an orthogonal representation of $G_2$, so by $k = \od(G_2)$ it follows that $\dim(U_x) \geq k$. This implies that there exists a $t$-dimensional orthogonal $k$-subspace representation of $G_1$, hence $\od_k(G_1) \leq t$, as required.

We next prove that $\od(G_1 \bullet G_2) \leq \od_k (G_1)$.
Denote $t = \od_k(G_1)$, then there exists a $t$-dimensional orthogonal $k$-subspace representation $(U_x)_{x \in V_1}$ of $G_1$.
By $k = \od(G_2)$, there exists a $k$-dimensional orthogonal representation $(u_y)_{y \in V_2}$ of $G_2$.
For every $x \in V_1$, the fact that $\dim(U_x)=k$ implies that there exists an orthogonal linear transformation $T_x$ from $\R^k$ onto $U_x$.
We assign to every vertex $(x,y) \in V_1 \times V_2$ of $G_1 \bullet G_2$ the nonzero vector $w_{(x,y)} = T_x(u_y) \in \R^t$. We claim that this is a $t$-dimensional orthogonal representation of $G_1 \bullet G_2$. To see this, let $(x_1,y_1)$ and $(x_2,y_2)$ be two adjacent vertices in $G_1 \bullet G_2$. If $x_1$ and $x_2$ are adjacent in $G_1$ then the subspaces $U_{x_1}$ and $U_{x_2}$ are orthogonal, hence the vectors $w_{(x_1,y_1)} \in U_{x_1}$ and $w_{(x_2,y_2)} \in U_{x_2}$ are orthogonal as well. Otherwise, $x_1=x_2$ and the vertices $y_1$ and $y_2$ are adjacent in $G_2$. This implies that the vectors $u_{y_1}$ and $u_{y_2}$ are orthogonal, and since $T_{x_1}$ preserves inner products it follows that $T_{x_1}(u_{y_1})$ and $T_{x_1}(u_{y_2})$ are orthogonal, and we are done.
\end{proof}

Equipped with Lemma~\ref{lemma:lexi}, we are ready to prove Theorem~\ref{thm:hard_F}.

\begin{proof}[ of Theorem~\ref{thm:hard_F}]
Fix a graph $F$. We reduce from the $\NP$-hard problem of deciding whether an input graph $G$ satisfies $\od(G) \leq 3$~\cite{Peeters96}. The reduction maps an input graph $G$ to the lexicographic product $G' = F \bullet G$ of $F$ and $G$.
The graph $G'$ can clearly be constructed in polynomial time.
The correctness of the reduction follows from Lemma~\ref{lemma:lexi}.
Indeed, we have $\od(G') = \od_k(F)$ for $k = \od(G)$, so if $\od(G) \leq 3$ then $\od(G') \leq \od_3(F)$ and if $\od(G) \geq 4$ then $\od(G') \geq \od_4(F)$.
\end{proof}

\subsection{Algorithm}

Before presenting our algorithm, we provide some background on the vector chromatic number of graphs.

\subsubsection{Vector Chromatic Number}

Consider the following two relaxations of the chromatic number of graphs, due to Karger, Motwani, and Sudan~\cite{KargerMS98}.

\begin{definition}\label{def:chi_v}
For a graph $G=(V,E)$ the {\em vector chromatic number} of $G$, denoted by $\vchrom(G)$, is the minimal real value of $\kappa > 1$ for which there exists an assignment of a unit real vector $u_v$ to every vertex $v \in V$ such that $\langle u_v, u_{v'} \rangle \leq -\frac{1}{\kappa -1}$ whenever $v$ and $v'$ are adjacent in $G$.
\end{definition}

\begin{definition}
For a graph $G=(V,E)$ the {\em strict vector chromatic number} of $G$, denoted by $\svchrom(G)$, is the minimal real value of $\kappa > 1$
for which there exists an assignment of a unit real vector $u_v$ to every vertex $v \in V$ such that $\langle u_v, u_{v'} \rangle = -\frac{1}{\kappa -1}$ whenever $v$ and $v'$ are adjacent in $G$.
\end{definition}
\noindent
It is well known and easy to verify that for every graph $G$, $\vchrom(G) \leq \svchrom(G) \leq \chi(G)$.
Karger et al.~\cite{KargerMS98} have obtained the following algorithmic result.

\begin{theorem}[\cite{KargerMS98}]\label{thm:KMS}
There exists a randomized polynomial time algorithm that given an $n$-vertex graph $G$ with maximum degree at most $\Delta$ and $\vchrom(G)\leq \kappa$ for some $\kappa \geq 2$, finds an independent set of size $\widetilde{\Omega}(\frac{n}{\Delta^{1-2/\kappa}})$.
\end{theorem}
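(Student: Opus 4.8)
The plan is to follow the random-projection rounding of Karger, Motwani, and Sudan. First solve the semidefinite program defining $\vchrom$ to obtain, in polynomial time (up to negligible numerical error), unit vectors $(u_v)_{v\in V}$ in $\R^n$ with $\langle u_v,u_{v'}\rangle\le -\tfrac{1}{\kappa-1}$ for every edge $\{v,v'\}$. Draw $r\in\R^n$ with i.i.d.\ standard Gaussian coordinates, pick a threshold $\tau>0$ to be fixed later, and set $S=\{v:\langle u_v,r\rangle\ge\tau\}$. By rotational symmetry of the Gaussian, each $\langle u_v,r\rangle$ is a standard normal, so writing $N(\tau)=\Pr_{g\sim\mathcal N(0,1)}[g\ge\tau]$ we get $\mathbb{E}[|S|]=n\,N(\tau)$. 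The set $S$ need not be independent, so I would bound the number of edges it spans and delete one endpoint of each.

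The technical core is a tail bound for a pair of negatively correlated Gaussians: if $u,u'$ are unit vectors with $\rho:=\langle u,u'\rangle<0$, then $\Pr[\langle u,r\rangle\ge\tau \text{ and }\langle u',r\rangle\ge\tau]\le \poly(\tau)\cdot N(\tau)^{2/(1+\rho)}$. I would prove this by writing $\langle u',r\rangle=\rho\,\langle u,r\rangle+\sqrt{1-\rho^2}\,Z$ for an independent standard normal $Z$ and estimating the resulting two-dimensional integral, or equivalently by invoking the classical asymptotics of the bivariate normal orthant probability; the extra $\poly(\tau)$ factor (a genuine positive power of $\tau$ precisely when $\rho<0$) is exactly what turns the final $\Omega$ into a $\widetilde\Omega$. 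Every edge satisfies $\rho\le -\tfrac{1}{\kappa-1}$, and for $\kappa>2$ this gives $\tfrac{1}{1+\rho}\ge \tfrac{\kappa-1}{\kappa-2}$ (the case $\kappa=2$ is immediate: then adjacent vectors are antipodal, so $S$ is automatically independent). Hence the probability that a fixed edge has both endpoints in $S$ is at most $\poly(\tau)\cdot N(\tau)^{2(\kappa-1)/(\kappa-2)}$, and since $G$ has at most $n\Delta/2$ edges, the expected number of edges spanned by $S$ is at most $\tfrac{n\Delta}{2}\poly(\tau)\,N(\tau)^{2(\kappa-1)/(\kappa-2)}$.

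It remains to choose $\tau$. Since $\tfrac{2(\kappa-1)}{\kappa-2}-1=\tfrac{\kappa}{\kappa-2}$, taking $N(\tau)$ on the order of $\big(C\,\Delta\,\poly(\tau)\big)^{-(\kappa-2)/\kappa}=\poly(\tau)^{-1}\Delta^{-(1-2/\kappa)}$ for a suitable constant $C$ — achievable with $\tau=\Theta(\sqrt{\log\Delta})$, so that the $\poly(\tau)$ factor is merely polylogarithmic in $\Delta$ and the defining relation has a mild fixed point — forces the expected number of edges inside $S$ to be at most $\tfrac12\mathbb{E}[|S|]$. Deleting one endpoint from each such edge yields an independent set $I\subseteq S$ with $\mathbb{E}[|I|]\ge \tfrac12\mathbb{E}[|S|]=\tfrac12 n\,N(\tau)=\widetilde\Omega\big(n/\Delta^{1-2/\kappa}\big)$. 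Finally, since $|I|\le n$ deterministically, a reverse-Markov argument shows a single run already produces $|I|=\Omega(\mathbb{E}[|I|])$ with probability $\widetilde\Omega(\Delta^{-(1-2/\kappa)})$, so repeating the rounding $\poly(n)$ independent times (recall $\Delta\le n$) and returning the best independent set found succeeds with high probability in polynomial time. The only real obstacle is the correlated-Gaussian tail estimate of the second paragraph; everything else is routine first- and second-moment bookkeeping.
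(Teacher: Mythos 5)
The paper does not prove this statement; it is imported verbatim from Karger, Motwani, and Sudan \cite{KargerMS98} as a black box, so there is no internal proof to compare against. Your reconstruction is essentially the standard KMS argument and is correct in all its essentials: threshold rounding of the vector coloring along a random Gaussian direction, a first-moment bound $\mathbb{E}[|S|]=nN(\tau)$, a second-moment bound on the edges inside $S$ via the correlated-tail estimate, the choice $\tau=\Theta(\sqrt{\log\Delta})$ exploiting $\tfrac{2(\kappa-1)}{\kappa-2}-1=\tfrac{\kappa}{\kappa-2}$, deletion of one endpoint per surviving edge, and reverse Markov plus repetition to convert the expectation bound into a high-probability guarantee. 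The only place you deviate from \cite{KargerMS98} is the proof of the pairwise tail bound: KMS avoid the bivariate orthant asymptotics entirely by noting that $\langle u,r\rangle\ge\tau$ and $\langle u',r\rangle\ge\tau$ force $\langle u+u',r\rangle\ge 2\tau$, where $\|u+u'\|^2=2+2\rho\le \tfrac{2(\kappa-2)}{\kappa-1}$, so the joint event is dominated by a single Gaussian tail $N\bigl(\tau\sqrt{2/(1+\rho)}\bigr)\le\poly(\tau)\,N(\tau)^{2/(1+\rho)}$; this is cleaner than, but equivalent to, your conditional decomposition. Two minor caveats worth flagging if you write this up in full: the SDP is only solved to additive accuracy $\epsilon$, so the edge constraints hold as $\langle u_v,u_{v'}\rangle\le-\tfrac{1}{\kappa-1}+\epsilon$, which perturbs the exponents negligibly; and for $\kappa$ very close to $2$ the exponent $2/(1+\rho)$ blows up and $\tau$ becomes small, but there the claimed bound $\widetilde\Omega(n/\Delta^{1-2/\kappa})$ is correspondingly weak (and the paper only ever invokes the theorem with $\kappa=3$).
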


Note that the well-known Lov\'{a}sz $\vartheta$-function, introduced in~\cite{Lovasz79}, is known to satisfy $\vartheta(G)=\svchrom(\overline{G})$ for every graph $G$~\cite{KargerMS98}.
Combining this with a result of~\cite{Lovasz79}, it follows that the orthogonality dimension forms an upper bound on the strict vector chromatic number, that is, $\od(G) \geq \svchrom(G)$ for every graph $G$.
This was recently generalized by Bukh and Cox as follows (see~\cite[Proposition~23]{BukhC18}).

\begin{lemma}[\cite{BukhC18}]\label{lemma:BukhC}
For every graph $G$ and an integer $k$,~ $\od_k(G) \geq k \cdot \svchrom(G)$.
\end{lemma}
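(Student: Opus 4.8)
The plan is to take an optimal orthogonal $k$-subspace representation of $G$ and convert it into a feasible solution for the strict vector chromatic program. Let $t = \od_k(G)$ and let $(U_v)_{v \in V}$ be a $t$-dimensional orthogonal $k$-subspace representation of $G$, so each $U_v \subseteq \R^t$ has $\dim(U_v) = k$ and $U_v \perp U_{v'}$ whenever $v \sim v'$. For each vertex $v$, let $P_v \in \R^{t \times t}$ denote the orthogonal projection matrix onto $U_v$; this is a symmetric matrix with $\trace(P_v) = k$ and $P_v^2 = P_v$. The key observation is that when $v \sim v'$, orthogonality of the subspaces $U_v$ and $U_{v'}$ gives $P_v P_{v'} = 0$, and hence $\langle P_v, P_{v'} \rangle = \trace(P_v P_{v'}) = 0$ under the trace (Frobenius) inner product on symmetric matrices.

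First I would embed the projection matrices into $\R^{t^2}$ (or into the space of $t\times t$ symmetric matrices with the Frobenius inner product) and normalize: set $x_v = P_v / \|P_v\|_F = P_v / \sqrt{k}$, so that each $x_v$ is a unit vector and, for adjacent $v, v'$, $\langle x_v, x_{v'} \rangle = 0$. This alone only shows $\vchrom(G) \le$ something; to get the strict vector chromatic bound with the correct constant, I would instead center the $x_v$'s around their common component. Concretely, write $P_v = \tfrac{k}{t} I + Q_v$ where $\trace(Q_v) = 0$, so the $Q_v$ live in the $(t^2-1)$-dimensional (or $\binom{t+1}{2}-1$-dimensional) space of traceless symmetric matrices. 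One computes $\|Q_v\|_F^2 = \|P_v\|_F^2 - \tfrac{k^2}{t} = k - \tfrac{k^2}{t} = k(1 - k/t) = k(t-k)/t$, and for adjacent $v,v'$, $\langle Q_v, Q_{v'}\rangle = \langle P_v, P_{v'}\rangle - \tfrac{k^2}{t} = -\tfrac{k^2}{t}$. Normalizing $y_v = Q_v / \|Q_v\|_F$ yields unit vectors with
\[
\langle y_v, y_{v'} \rangle = \frac{-k^2/t}{k(t-k)/t} = \frac{-k}{t-k}
\qquad \text{for all } v \sim v'.
\]
Thus $(y_v)_{v \in V}$ is a strict vector coloring with inner product $-\frac{1}{(t/k) - 1}$, which by definition witnesses $\svchrom(G) \le t/k = \od_k(G)/k$. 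Rearranging gives $\od_k(G) \ge k \cdot \svchrom(G)$, as desired.

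The main obstacle, and the step requiring the most care, is verifying that the normalization is legitimate — specifically that $Q_v \neq 0$, i.e.\ that $t > k$. If $t = k$ then $U_v = \R^t$ for every $v$, forcing $G$ to have no edges at all, in which case $\svchrom(G) = 1$ and the inequality $\od_k(G) = k \ge k \cdot 1$ holds trivially; so one should dispatch the edgeless case separately (or simply note that any graph with an edge has $\od_k(G) \ge 2k > k$). A secondary point worth stating cleanly is the identification of the ambient space: projection matrices onto $k$-dimensional subspaces of $\R^t$ embed isometrically into $\R^{t^2}$ via vectorization, so the $y_v$ are honest unit vectors in a Euclidean space and Definition~\ref{def:chi_v}'s requirements are literally met. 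No deep input is needed beyond these linear-algebra identities; the content is entirely the observation that subspace-orthogonality translates to a fixed negative correlation of the traceless parts of the projections, with the precise constant $-k/(t-k)$ falling out of the trace computation.
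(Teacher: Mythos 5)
Your proof is correct, and it is worth noting that the paper does not actually prove this lemma: it imports it verbatim from Bukh and Cox~\cite{BukhC18} (Proposition~23 there), so there is no internal argument to compare against. Your construction is the natural direct one and all the computations check out: $P_v$ symmetric idempotent of trace $k$ gives $\|P_v\|_F^2=k$; subspace orthogonality gives $\trace(P_vP_{v'})=0$ on edges; and the traceless parts $Q_v=P_v-\tfrac{k}{t}I$ satisfy $\|Q_v\|_F^2=k(t-k)/t$ and $\langle Q_v,Q_{v'}\rangle=-k^2/t$ on edges, so the normalized $y_v$ form a strict vector coloring with constant $-\tfrac{1}{(t/k)-1}$, witnessing $\svchrom(G)\le t/k$. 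You are also right to flag the degenerate case $t=k$ (equivalently, $G$ edgeless), where the normalization breaks down but the inequality is trivial. This is exactly the generalization of Lov\'asz's $k=1$ argument (unit vectors spanning $U_v$, recentered) that the paper alludes to via $\vartheta(G)=\svchrom(\overline{G})$, lifted to projection matrices under the Frobenius inner product; it is self-contained linear algebra and could replace the external citation. One small presentational point: since the paper's definition of $\svchrom$ asks for the \emph{minimal} $\kappa>1$, you might add the standard remark that the feasible set is closed (or simply conclude $\svchrom(G)\le t/k$ as the value $\kappa=t/k$ is attained by your explicit assignment), but this is cosmetic.
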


We derive that the graph parameters $\od_k$ and $\chi_k$ coincide on Kneser graphs $K(d,s)$ whenever $k$ is divisible by $s$.

\begin{corollary}\label{cor:k=s}
For all integers $\ell \geq 1$ and $d \geq 2s$,~ $\od_{\ell \cdot s}(K(d,s)) = \ell \cdot d$.
\end{corollary}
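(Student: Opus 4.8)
The plan is to prove the two inequalities $\od_{\ell s}(K(d,s)) \le \ell d$ and $\od_{\ell s}(K(d,s)) \ge \ell d$ separately, the first by an explicit construction and the second via Lemma~\ref{lemma:BukhC}. For the upper bound I would identify $\R^{\ell d}$ with a direct sum $W_1 \oplus \cdots \oplus W_d$ of $d$ subspaces, each of dimension $\ell$, and assign to the vertex $A$ of $K(d,s)$ (an $s$-subset of $[d]$) the subspace $U_A = \bigoplus_{i \in A} W_i$, which has dimension $\ell s$. Whenever $A$ and $B$ are adjacent in $K(d,s)$, i.e.\ $A \cap B = \emptyset$, the subspaces $U_A$ and $U_B$ are supported on disjoint coordinate blocks and hence orthogonal, so this is an $\ell d$-dimensional orthogonal $\ell s$-subspace representation of $K(d,s)$. (Equivalently, one could simply combine the general inequality $\od_{\ell s}(G) \le \chi_{\ell s}(G)$ with Stahl's evaluation of $\chi_k(K(d,s))$ in the divisible case: for $k = \ell s$ it gives $\chi_{\ell s}(K(d,s)) = \ell(d-2s) + 2\ell s = \ell d$.)

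For the lower bound I would apply Lemma~\ref{lemma:BukhC} with $k = \ell s$, which yields $\od_{\ell s}(K(d,s)) \ge \ell s \cdot \svchrom(K(d,s))$, so that it suffices to show $\svchrom(K(d,s)) \ge d/s$. To pin down this value I would use the identity $\svchrom(G) = \vartheta(\overline{G})$ together with the vertex-transitivity of $K(d,s)$: since $K(d,s)$ has $\binom{d}{s}$ vertices, Lov\'asz's product formula for vertex-transitive graphs gives $\vartheta(K(d,s)) \cdot \vartheta(\overline{K(d,s)}) = \binom{d}{s}$, and combining this with Lov\'asz's evaluation $\vartheta(K(d,s)) = \binom{d-1}{s-1}$ yields $\svchrom(K(d,s)) = \vartheta(\overline{K(d,s)}) = \binom{d}{s}/\binom{d-1}{s-1} = d/s$. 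Plugging this into the bound from Lemma~\ref{lemma:BukhC} gives $\od_{\ell s}(K(d,s)) \ge \ell s \cdot (d/s) = \ell d$, which matches the upper bound and completes the proof.

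The step I expect to be the crux is establishing $\svchrom(K(d,s)) \ge d/s$ \emph{exactly}: the easily available lower bounds, such as $\svchrom(G) \ge \omega(G) = \lfloor d/s \rfloor$, are not sharp enough when $s \nmid d$, so the argument genuinely needs the exact value of the Lov\'asz $\vartheta$-function of the Kneser graph and the multiplicativity identity for vertex-transitive graphs. Once that value is in hand the rest is immediate, and as a byproduct the argument shows that $\od_{\ell s}$ and $\chi_{\ell s}$ coincide on $K(d,s)$, since both equal $\ell d$; in particular, taking $\ell = 1$ gives $\od_s(K(d,s)) = d$ and hence $\od_3(K(d,3)) = d$.
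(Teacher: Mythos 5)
Your proof is correct and takes essentially the same route as the paper: the upper bound is the $\ell s$-tuple coloring of $K(d,s)$ with $\ell d$ colors (the paper cites Stahl's divisible case of Conjecture~\ref{conj:Stahl}, of which your block construction $U_A=\bigoplus_{i\in A}W_i$ is the explicit witness), and the lower bound is Lemma~\ref{lemma:BukhC} combined with $\svchrom(K(d,s))=d/s$. The paper simply cites Lov\'asz for that last value; your derivation of it from $\vartheta(K(d,s))=\binom{d-1}{s-1}$ and the product formula $\vartheta(G)\cdot\vartheta(\overline{G})=n$ for vertex-transitive graphs is a correct way to fill in the citation.
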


\begin{proof}
For the upper bound on $\od_{\ell \cdot s}(K(d,s))$, recall that Conjecture~\ref{conj:Stahl} was confirmed for $k = \ell \cdot s$ in~\cite{Stahl76}, hence $\od_{\ell \cdot s}(K(d,s)) \leq \chi_{\ell \cdot s}(K(d,s)) = \ell \cdot d$.
For the lower bound, combine Lemma~\ref{lemma:BukhC} with the fact that $\svchrom(K(d,s)) = \frac{d}{s}$ (see~\cite{Lovasz79}), to get that $\od_{\ell \cdot s}(K(d,s)) \geq \ell \cdot d$.
\end{proof}

\subsubsection{The Algorithm}

We present an efficient algorithm that given a graph $G$ that satisfies $\od(G) \leq 3$ (or, more generally, $\od_k(G) \leq 3k$ for some $k$), finds a coloring of $G$ with relatively few colors.
We use the following simple claim of Blum~\cite{Blum94} which reduces the algorithmic task of coloring a graph to the algorithmic task of finding a large independent set in it.

\begin{claim}[\cite{Blum94}]\label{claim:Blum}
Let $\mathcal{G}$ be a graph family which is closed under taking induced subgraphs, let $c_1,c_2>1$ be arbitrary fixed constants, and let $f:\N \rightarrow \N$ be any non-decreasing function satisfying $c_1 \cdot f(n)\leq f(2n)\leq c_2 \cdot f(n)$ for all sufficiently large $n$. Then if there exists a (randomized) polynomial time algorithm which finds an independent set of size $f(n)$ in any $n$-vertex graph
$G\in\mathcal{G}$, then there exists a (randomized) polynomial time algorithm which finds an $O(\frac{n}{f(n)})$-coloring of any $n$-vertex graph $G\in\mathcal{G}$.
\end{claim}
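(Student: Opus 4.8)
The plan is the standard iterative peeling of independent sets. Given an $n$-vertex graph $G\in\mathcal{G}$, I would invoke the hypothesized algorithm to find an independent set $I$ of size $f(n)$ in $G$, assign all vertices of $I$ a common fresh color, delete $I$, and recurse on $G-I$; once the residual graph has fewer than some fixed constant number of vertices (below the threshold past which the hypotheses on $f$ hold), finish by coloring each remaining vertex with its own color, at an additive cost of $O(1)$. The structural point that makes this well defined is that $\mathcal{G}$ is closed under induced subgraphs, so every residual graph is again in $\mathcal{G}$ and the independent-set subroutine keeps applying; and since $f$ is non-decreasing, an $n'$-vertex residual graph still yields an independent set of size $f(n')$.

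The heart of the proof is bounding the number of peeling steps by $O(n/f(n))$. I would group the steps into phases, phase $j$ consisting of the steps performed while the residual graph has strictly more than $n/2^{j+1}$ and at most $n/2^{j}$ vertices. Throughout phase $j$ every extracted independent set has size at least $f(\lfloor n/2^{j+1}\rfloor)$ by monotonicity of $f$, while the phase removes at most $n/2^{j}$ vertices in total, so phase $j$ uses at most $\frac{n/2^{j}}{f(\lfloor n/2^{j+1}\rfloor)}+1$ colors. Summing over $j$ bounds the total number of colors by
\[
\sum_{j\ge 0}\left(\frac{n/2^{j}}{f(\lfloor n/2^{j+1}\rfloor)}+1\right).
\]
Now I would feed in the sandwich hypothesis: iterating $f(2m)\le c_2 f(m)$ gives $f(\lfloor n/2^{j+1}\rfloor)\ge c_2^{-(j+1)}f(n)$ as long as $n/2^{j+1}$ is still large, so the $j$-th term is $O\big((c_2/2)^{j}\cdot \tfrac{n}{f(n)}\big)$. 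In the settings where the claim is used, $f(n)=\widetilde{\Theta}(n^{\alpha})$ for some $\alpha<1$, so $c_1$ and $c_2$ can be taken arbitrarily close to $2^{\alpha}<2$; using $c_2<2$, the series converges geometrically to $O(n/f(n))$, and this quantity also dominates the additive $O(\log n)$ contributed by the extra $+1$ in each phase. The lower-bound condition $c_1 f(m)\le f(2m)$, which forces $f(n)=\Omega(n^{\log_2 c_1})$ with $\log_2 c_1>0$, keeps the denominators from collapsing and the phase decomposition meaningful.

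The step I expect to require the most care is the convergence of this telescoped sum: one must verify that the residual sizes cannot linger near the phase boundaries in a way that inflates the step count, and that the accumulated per-phase overhead genuinely stays within $O(n/f(n))$ — both are controlled by the polynomial pinching of $f$ between $n^{\log_2 c_1}$ and $n^{\log_2 c_2}$ supplied by the two-sided hypothesis (in the borderline regime where $f$ is nearly linear the peeling instead stops after $O(1)$ phases, so that case can be disposed of separately and then merged). The remaining ingredients are routine: the algorithm makes $O(n/f(n))=\poly(n)$ calls to the independent-set subroutine plus the trivial coloring of a constant-size remainder, so it runs in polynomial time; and if the subroutine is randomized, repeating it $O(\log n)$ times per call and keeping the largest set returned makes each call succeed with high probability, after which a union bound over the $\poly(n)$ calls yields a randomized polynomial-time $O(n/f(n))$-coloring algorithm succeeding with high probability, as claimed.
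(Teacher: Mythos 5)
The paper does not prove this claim; it is quoted from Blum's paper [Blum94] and used as a black box, so there is no in-paper proof to compare against. Your argument is the standard peeling-into-phases proof, and it is essentially correct where it matters. The one substantive point is exactly the one you flag: the geometric convergence of $\sum_j (c_2/2)^j$ requires $c_2<2$, whereas the claim as stated only assumes $c_2>1$. This is not a defect of your proof but a (known) imprecision in the statement: for, say, $f(n)=n/\log n$ (which is non-decreasing and satisfies the hypotheses with $c_1=1.5$, $c_2=2$), the peeling reduction provably uses $\Theta(\log^2 n)=\Theta((n/f(n))^2)$ colors, so no argument of this type can give $O(n/f(n))$ without an upper bound of the form $c_2<2$. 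Since every application in the paper has $f(n)=\widetilde{\Theta}(n^{\alpha})$ with $\alpha<1$, one may take $1<c_1\le c_2<2$ there, and your proof then goes through: the geometric sum is $O(n/f(n))$, and the additive $O(\log n)$ from the per-phase $+1$ terms is dominated because $c_2<2$ forces $f(n)=O(n^{\log_2 c_2})$ with $\log_2 c_2<1$, hence $n/f(n)=\Omega(n^{\varepsilon})$.

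Two minor cautions. First, your parenthetical that ``in the borderline regime where $f$ is nearly linear the peeling instead stops after $O(1)$ phases'' is not right as stated --- for $f(n)=n/\log n$ the peeling runs for $\Theta(\log n)$ phases and genuinely overshoots $O(n/f(n))$; it is better to simply impose $c_2<2$ as an additional hypothesis (harmless for all uses in the paper) than to claim the boundary case can be rescued. Second, your worry about residual sizes ``lingering near phase boundaries'' is a non-issue once the accounting is done as you set it up: each step in phase $j$ removes at least $f(\lfloor n/2^{j+1}\rfloor)$ vertices from a pool of at most $n/2^{j}$, which already caps the step count of the phase regardless of where within the window the residual sizes fall. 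The handling of randomization by repetition and a union bound over the polynomially many oracle calls is fine.
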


We need the following two simple lemmas.

\begin{lemma}\label{lemma:od_2k_2-col}
Let $G=(V,E)$ be a graph such that $\od_k(G) \leq 2k$ for some integer $k$.
Then $G$ is $2$-colorable.
\end{lemma}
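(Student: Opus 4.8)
If $G=(V,E)$ satisfies $\od_k(G) \leq 2k$ for some integer $k$, then $G$ is $2$-colorable.

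The plan is to exploit the geometric rigidity forced by having $k$-dimensional subspaces that must be pairwise orthogonal in a space of dimension only $2k$. First I would fix a $2k$-dimensional orthogonal $k$-subspace representation $(U_v)_{v\in V}$ of $G$. The crucial observation is that if $v$ and $v'$ are adjacent, then $U_v \perp U_{v'}$, and since $\dim U_v + \dim U_{v'} = 2k = \dim \R^{2k}$, the two subspaces are orthogonal \emph{complements} of each other: $U_{v'} = U_v^\perp$. So every neighbor of $v$ is assigned exactly the subspace $U_v^\perp$. This means the map $v \mapsto U_v$ sends adjacent vertices to a complementary pair $\{W, W^\perp\}$.

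Next I would argue connectivity-wise. It suffices to $2$-color each connected component separately, so assume $G$ is connected. Pick any vertex $v_0$ and let $W = U_{v_0}$. I claim that for every vertex $u$, $U_u \in \{W, W^\perp\}$, and moreover the partition by "$U_u = W$" versus "$U_u = W^\perp$" is a proper $2$-coloring. To see the first part: walking along any path from $v_0$, each step flips the subspace from its current value to the orthogonal complement (by the complementation observation above), so after an even number of steps we are back at $W$ and after an odd number we are at $W^\perp$; hence $U_u$ is always one of these two subspaces. For the coloring: color $u$ with $0$ if $U_u = W$ and $1$ if $U_u = W^\perp$ — these are distinct since $W \neq W^\perp$ (as $W$ is a proper nonzero subspace, because $0 < k < 2k$, so $W \cap W^\perp = \{0\}$ while $W \neq \{0\}$). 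If $u$ and $u'$ are adjacent, then $U_{u'} = U_u^\perp$, which swaps $W \leftrightarrow W^\perp$, so $u$ and $u'$ receive different colors. Thus $G$ is $2$-colorable.

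The only subtle point — and the main thing to be careful about — is the edge case $k=0$ or the degenerate dimension count: we need $W$ to be a genuine proper nonzero subspace so that $W \neq W^\perp$, which requires $1 \le k$ and $k < 2k$, i.e., $k \ge 1$, which is implicit since $\od_k$ is only sensible for positive $k$. Also one should note that isolated vertices (or isolated components) are trivially colorable, so restricting to connected $G$ with at least one edge loses nothing. Everything else is routine linear algebra, so I would keep the write-up short.
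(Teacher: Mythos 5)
Your proof is correct and follows essentially the same route as the paper: both arguments observe that in $\R^{2k}$ the subspace $U_{v}^{\perp}$ is the unique $k$-dimensional subspace orthogonal to $U_v$, so adjacent vertices carry complementary subspaces, and then $2$-color each connected component according to whether a vertex's subspace equals $U_{v_0}$ or $U_{v_0}^{\perp}$ for a fixed base vertex $v_0$. Your additional remarks on well-definedness and the trivial case $k=0$ are fine but not needed beyond what the paper states.
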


\begin{proof}
Let $G=(V,E)$ be a graph satisfying $\od_k(G) \leq 2k$, and let $(U_v)_{v \in V}$ be a $2k$-dimensional orthogonal $k$-subspace representation of $G$.
It suffices to prove that every connected component of $G$ is $2$-colorable. Fix a vertex $v$ in some connected component of $G$, and observe that there exists a unique subspace of $\R^{2k}$ of dimension $k$ orthogonal to $U_{v}$. This implies that the orthogonal subspace representation of $G$ provides a $2$-coloring of the connected component of $v$, where the vertices of even distance from $v$ are assigned to $U_{v}$ and the vertices of odd distance from $v$ are assigned to its orthogonal complement ${U}_{v}^{\perp}$, so we are done.
\end{proof}

For a graph $G$ and a vertex $v$, let $N(v)$ denote the neighborhood of $v$ in $G$ and let $G[N(v)]$ denote the subgraph of $G$ induced by $N(v)$.

\begin{lemma}\label{lemma:od_k_N(v)}
Let $G=(V,E)$ be a graph such that $\od_k(G) \leq 3k$ for some integer $k$.
Then for every vertex $v \in V$ the subgraph $G[N(v)]$ is $2$-colorable.
\end{lemma}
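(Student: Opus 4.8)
The plan is to mimic the argument behind Lemma~\ref{lemma:od_2k_2-col}, but localized to a neighborhood. Fix a graph $G=(V,E)$ with $\od_k(G)\leq 3k$, let $(U_w)_{w\in V}$ be a $3k$-dimensional orthogonal $k$-subspace representation of $G$, and fix a vertex $v\in V$. First I would restrict attention to the ambient $3k$-dimensional space and consider the subspace $U_v^\perp\subseteq\R^{3k}$, which has dimension $2k$. The key observation is that for every neighbor $w\in N(v)$, the subspace $U_w$ is orthogonal to $U_v$ (since $w$ and $v$ are adjacent), hence $U_w\subseteq U_v^\perp$. Thus the family $(U_w)_{w\in N(v)}$ is an orthogonal $k$-subspace representation of the induced subgraph $G[N(v)]$ that lives inside the $2k$-dimensional space $U_v^\perp$. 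In other words, $\od_k\big(G[N(v)]\big)\leq 2k$.

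Having reduced to the statement that a graph with $\od_k\leq 2k$ is $2$-colorable, I would then invoke Lemma~\ref{lemma:od_2k_2-col} directly to conclude that $G[N(v)]$ is $2$-colorable, which is exactly what is claimed. (If one prefers a self-contained argument, one repeats the proof of Lemma~\ref{lemma:od_2k_2-col}: in a $2k$-dimensional space, each $k$-subspace $U_w$ determines a unique orthogonal complement of dimension $k$, so within each connected component of $G[N(v)]$ all subspaces at even distance from a fixed base vertex equal one fixed $k$-subspace and all subspaces at odd distance equal its complement, giving a proper $2$-coloring.)

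I do not expect a genuine obstacle here; the only point requiring a moment of care is the ambient-space bookkeeping, namely that an orthogonal $k$-subspace representation of $G[N(v)]$ inside $U_v^\perp\cong\R^{2k}$ is indeed a valid $2k$-dimensional orthogonal $k$-subspace representation in the sense of Definition~\ref{def:ortho_k-subspace} (orthogonality of subspaces is intrinsic and does not depend on the ambient dimension, so passing from $\R^{3k}$ to $U_v^\perp$ preserves all the required orthogonality relations among the $U_w$ with $w\in N(v)$). Once that is noted, the lemma follows immediately from Lemma~\ref{lemma:od_2k_2-col}.
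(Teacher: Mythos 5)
Your proposal is correct and follows exactly the paper's argument: restrict to $U_v^\perp \cong \R^{2k}$, observe that the subspaces of the neighbors of $v$ all lie in it and give a $2k$-dimensional orthogonal $k$-subspace representation of $G[N(v)]$, and invoke Lemma~\ref{lemma:od_2k_2-col}. No differences worth noting.
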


\begin{proof}
Let $G=(V,E)$ be a graph satisfying $\od_k(G) \leq 3k$, and let $(U_v)_{v \in V}$ be a $3k$-dimensional orthogonal $k$-subspace representation of $G$.
Let $v \in V$ be a vertex in $G$. The subspace $U_v$ is orthogonal to all subspaces $U_{v'}$ with $v' \in N(v)$. By $\dim (U_v)=k$, the orthogonal complement of $U_v$ in $\R^{3k}$ has dimension $2k$. It follows that $G[N(v)]$ admits a $2k$-dimensional orthogonal $k$-subspace representation, hence by Lemma~\ref{lemma:od_2k_2-col} it is $2$-colorable, as required.
\end{proof}

We are ready to prove the following result.

\begin{theorem}\label{thm:col_3k_k}
There exists a randomized polynomial time algorithm that given an $n$-vertex graph $G$ that satisfies $\od_k(G) \leq 3k$ for some $k$, finds a coloring of $G$ that uses at most $\widetilde{O}(n^{1/4})$ colors. In particular, the algorithm finds an orthogonal representation of $G$ of dimension $\widetilde{O}(n^{1/4})$.
\end{theorem}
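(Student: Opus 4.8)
The plan is to reduce, via Claim~\ref{claim:Blum}, the coloring task to that of efficiently finding a large independent set, and then to produce such an independent set by a case analysis on the maximum degree of $G$. Fix the integer $k$ for which $\od_k(G) \leq 3k$ (it may be taken to be part of the input, or one iterates over the relevant values), and let $\mathcal{G}_k$ be the family of graphs $G$ with $\od_k(G) \leq 3k$. This family is closed under taking induced subgraphs, since the restriction of a $3k$-dimensional orthogonal $k$-subspace representation to an induced subgraph is again one. We will exhibit a randomized polynomial time algorithm that finds in every $n$-vertex graph of $\mathcal{G}_k$ an independent set of size $\widetilde{\Omega}(n^{3/4})$; applying Claim~\ref{claim:Blum} with a concrete non-decreasing $f(n) = \widetilde{\Theta}(n^{3/4})$ dominated by this bound (its growth ratio $f(2n)/f(n)$ tends to $2^{3/4}$ up to polylogarithmic factors, hence lies between two constants) then yields an $\widetilde{O}\bigl(n/f(n)\bigr) = \widetilde{O}(n^{1/4})$-coloring of $G$. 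Finally, a coloring with $m$ colors trivially gives an $m$-dimensional orthogonal representation, by mapping the $i$th color class to the $i$th standard basis vector of $\R^m$, which yields the ``in particular'' part of the statement.

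To find the independent set, set the degree threshold $\Delta_0 = n^{3/4}$ and split into two cases. If $G$ has a vertex $v$ with $\deg(v) \ge \Delta_0$, then by Lemma~\ref{lemma:od_k_N(v)} the induced subgraph $G[N(v)]$ is $2$-colorable; a proper $2$-coloring can be computed in polynomial time (e.g.\ by breadth-first search on each connected component), and its larger color class is an independent set of $G$ of size at least $\Delta_0/2$. Otherwise, every vertex of $G$ has degree below $\Delta_0$, and we invoke Theorem~\ref{thm:KMS}. Since $\od_k(G) \le 3k$, Lemma~\ref{lemma:BukhC} gives $\svchrom(G) \le \od_k(G)/k \le 3$, hence $\vchrom(G) \le \svchrom(G) \le 3$; running the algorithm of Karger et al.\ with $\kappa = 3$ and maximum degree $\Delta \le \Delta_0$ produces an independent set of size $\widetilde{\Omega}\bigl(n/\Delta_0^{\,1-2/3}\bigr) = \widetilde{\Omega}\bigl(n/n^{1/4}\bigr) = \widetilde{\Omega}(n^{3/4})$. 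In both cases we obtain an independent set of size $\widetilde{\Omega}(n^{3/4})$, as needed.

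The only point requiring mild care is the interface with Claim~\ref{claim:Blum}: since the $\widetilde{\Omega}(n^{3/4})$ bounds hide polylogarithmic factors, one must fix one concrete $f$ dominated by the guarantees of \emph{both} cases and verify its multiplicative growth condition, which is routine. The substantive ingredients are Lemma~\ref{lemma:od_k_N(v)} (neighborhoods in $\mathcal{G}_k$ are $2$-colorable), which handles the high-degree regime, and the combination of Lemma~\ref{lemma:BukhC} with Theorem~\ref{thm:KMS}, which handles the low-degree regime; the choice $\Delta_0 = n^{3/4}$ is precisely the one balancing the two resulting bounds $\Delta_0$ and $n/\Delta_0^{1/3}$. (The sharper exponent $0.2413$ of Theorem~\ref{thm:alg_od_3} is then obtained by replacing Theorem~\ref{thm:KMS} in the low-degree case with the refined rounding analysis of Arora et al.~\cite{AroraCC06}.)
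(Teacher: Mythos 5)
Your proof is correct and follows essentially the same route as the paper: reduce to finding an independent set via Claim~\ref{claim:Blum}, handle high-degree vertices with Lemma~\ref{lemma:od_k_N(v)}, and handle the low-degree case by combining Lemma~\ref{lemma:BukhC} with Theorem~\ref{thm:KMS}, balancing at $\Delta = n^{3/4}$. Your extra care about closure under induced subgraphs and the growth condition on $f$ is a reasonable elaboration of details the paper leaves implicit.
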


\begin{proof}
By Claim~\ref{claim:Blum} it suffices to show that there exists a randomized polynomial time algorithm that given an $n$-vertex graph $G=(V,E)$ with $\od_k(G) \leq 3k$ for some $k$, finds in $G$ an independent set of size $\widetilde{\Omega}(n^{3/4})$.
We consider two possible cases. Suppose first that there exists a vertex $v \in V$ in $G$ whose degree is at least $\Delta = n^{3/4}$. Then by Lemma~\ref{lemma:od_k_N(v)} the subgraph $G[N(v)]$ is $2$-colorable, so we can find an independent set in $G$ of size at least $\frac{\Delta}{2}$ by finding in polynomial time a $2$-coloring of $G[N(v)]$ and taking a largest color class. Otherwise, the maximum degree of $G$ is at most $\Delta$. By Lemma~\ref{lemma:BukhC} we have
\[\vchrom(G) \leq \svchrom(G) \leq \frac{\od_k(G)}{k} \leq 3,\]
so by Theorem~\ref{thm:KMS} we can find in polynomial time an independent set of size $\widetilde{\Omega}(\frac{n}{\Delta^{1/3}}) \geq \widetilde{\Omega}(n^{3/4})$, and we are done.
\end{proof}

To improve the number of used colors, we employ the following result that stems from the analysis by Arora et al.~\cite{AroraCC06} of the semidefinite relaxation of~\cite{KargerMS98}. (For an explicit statement, see~\cite[Lemma~4.12]{ChlamtacH14}, applied with $\sigma=0.5$, $c \approx 0.04843726$, and $\delta \approx 0.7587$.)
\begin{theorem}[\cite{AroraCC06}]\label{thm:Chl}
There exists a randomized polynomial time algorithm that given an $n$-vertex graph $G$ with maximum degree at most $\Delta = n^{0.7587}$ and $\vchrom(G) \leq 3$, finds an independent set in $G$ of size at least $\widetilde{\Omega}(n \cdot \Delta^{-0.3179}) \geq \widetilde{\Omega}(n^{0.7587})$.
\end{theorem}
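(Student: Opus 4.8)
The statement repackages the semidefinite-programming rounding of Karger, Motwani, and Sudan~\cite{KargerMS98} together with the refined analysis of Arora, Chlamtac, and Charikar~\cite{AroraCC06}, so the plan is to recall how that analysis proceeds and to verify that the hypothesis $\vchrom(G)\leq 3$ is exactly the input it requires. First I would fix a vector $3$-coloring witnessing $\vchrom(G)\leq 3$, that is, unit vectors $(u_v)_{v\in V}$ with $\langle u_v,u_{v'}\rangle \leq -\tfrac12$ for every edge $\{v,v'\}$, and recall the basic threshold projection. Sample a standard Gaussian $g$ in the ambient space, fix a threshold $\tau>0$, and set $S_\tau=\{v\mid \langle u_v,g\rangle \geq \tau\}$. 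Since $\langle u_v,g\rangle$ is a standard normal, $\expectation|S_\tau| = n\cdot q(\tau)$ where $q(\tau)=\Pr[N(0,1)\geq \tau]$; and since for an edge $\{v,v'\}$ the pair $(\langle u_v,g\rangle,\langle u_{v'},g\rangle)$ is jointly normal with correlation at most $-\tfrac12$, the standard large-deviation estimate for the joint upper tail gives $\Pr[v,v'\in S_\tau] \leq q(\tau)^{4}$ up to a factor polynomial in $\tau$. Hence the expected number of edges inside $S_\tau$ is at most $\tfrac{n\Delta}{2}\cdot q(\tau)^{4}\cdot\mathrm{poly}(\tau)$; deleting one endpoint of each internal edge and choosing $\tau$ so that $q(\tau)\approx \Delta^{-1/3}$ balances the two quantities and yields an independent set of size $\widetilde{\Omega}(n/\Delta^{1/3})$ — which is exactly Theorem~\ref{thm:KMS} specialized to $\kappa=3$.

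The improvement, which replaces the ``effective exponent'' $\tfrac13$ by $0.3179$, is the contribution of~\cite{AroraCC06}, and here I would invoke their analysis essentially as a black box. Its idea, in spirit, is that the crude bound above is lossy unless the projections of neighboring SDP vectors are genuinely uncorrelated at the relevant scale: if the plain rounding cannot beat $n/\Delta^{1/3}$, then a noticeable fraction of the vertices must have neighborhoods whose induced unit vectors are ``clustered'' — up to a small perturbation they lie in a configuration of smaller effective vector chromatic number — and one then argues by cases, either extracting a large independent set directly from such a neighborhood (recursing on it) or deleting these structured neighborhoods and repeating on the residual graph with strictly improved parameters. Optimizing the tradeoff between the set gained in the direct rounding and the progress of the recursion produces the stated numbers; the explicit packaged form is~\cite[Lemma~4.12]{ChlamtacH14}, which I would apply with the parameter $\sigma=\tfrac12$ matching the $-\tfrac12$ inner-product bound and with the degree exponent $\delta\approx 0.7587$ (note that $1-\delta\cdot 0.3179 = \delta$, which is why the bound reads $\widetilde{\Omega}(n^{0.7587})$ at $\Delta=n^{0.7587}$, and also why $\delta$ is the natural degree threshold in the coloring algorithm of Theorem~\ref{thm:col_3k_k}).

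The main obstacle — and the technical heart of~\cite{AroraCC06} — is the combinatorial accounting of the second stage: bounding the number of ``heavy'' neighboring pairs of SDP vectors that spoil the naive analysis, showing that their abundance forces the claimed clustered local structure, and verifying that the induced recursion terminates after few rounds with the promised exponent. Since the present paper uses this result only as a subroutine inside a coloring algorithm (compare the proof of Theorem~\ref{thm:col_3k_k}), I would quote this part from~\cite{AroraCC06} and~\cite{ChlamtacH14} rather than reproduce it, and confine the self-contained portion of the argument to checking that a hypothesis of the form $\vchrom(G)\leq 3$ — which in the application is supplied via Lemma~\ref{lemma:BukhC} — meets the $\sigma=\tfrac12$ requirement of the quoted statement.
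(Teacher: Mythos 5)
Your proposal matches the paper's treatment of this statement: the paper does not prove it but imports it directly from the literature, citing \cite{AroraCC06} and pointing to the explicit packaged form in \cite[Lemma~4.12]{ChlamtacH14} with $\sigma=0.5$ (matching the $-\tfrac12$ inner-product bound from $\vchrom(G)\leq 3$) and $\delta\approx 0.7587$, exactly as you do. Your accompanying sketch of the KMS threshold rounding and the source of the exponent $0.3179$ is consistent and correctly identifies the numerology $1-0.7587\cdot 0.3179\approx 0.7587$, so there is nothing to correct.
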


By applying Theorem~\ref{thm:Chl} instead of Theorem~\ref{thm:KMS} in the proof of Theorem~\ref{thm:col_3k_k}, we obtain the following slight improvement, confirming Theorem~\ref{thm:alg_od_3}.

\begin{theorem}\label{thm:alg_od_3k}
There exists a randomized polynomial time algorithm that given an $n$-vertex graph $G$ that satisfies $\od_k(G) \leq 3k$ for some $k$, finds a coloring of $G$ that uses at most $\widetilde{O}(n^{0.2413})$ colors.
\end{theorem}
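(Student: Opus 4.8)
The plan is to reuse the proof of Theorem~\ref{thm:col_3k_k} essentially verbatim, replacing the rounding algorithm of Karger et al.\ (Theorem~\ref{thm:KMS}) by the sharper one of Arora et al.\ (Theorem~\ref{thm:Chl}). First I would apply Claim~\ref{claim:Blum} to the family $\mathcal{G}$ of graphs $G$ with $\od_k(G)\le 3k$ for some integer $k$; this family is closed under taking induced subgraphs, since the restriction of an orthogonal $k$-subspace representation to an induced subgraph is again one. By the claim, it suffices to give a randomized polynomial time algorithm that finds in every $n$-vertex $G\in\mathcal{G}$ an independent set of size $f(n)=\widetilde{\Omega}(n^{0.7587})$, because then Claim~\ref{claim:Blum} (whose doubling hypotheses hold for such an $f$, with $c_1,c_2$ just below and above $2^{0.7587}$, absorbing the polylogarithmic factors) yields a coloring of $G$ with $\widetilde{O}(n/f(n))=\widetilde{O}(n^{1-0.7587})=\widetilde{O}(n^{0.2413})$ colors.

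For the independent-set subroutine I would fix the degree threshold $\Delta=n^{0.7587}$ and split into two cases, exactly as in the proof of Theorem~\ref{thm:col_3k_k}. If $G$ has a vertex $v$ of degree at least $\Delta$, then Lemma~\ref{lemma:od_k_N(v)} gives that $G[N(v)]$ is $2$-colorable; I would compute a $2$-coloring of $G[N(v)]$ in polynomial time and output its larger color class, an independent set of size at least $\Delta/2=\Omega(n^{0.7587})$. Otherwise the maximum degree of $G$ is at most $\Delta$, and combining $\vchrom(G)\le\svchrom(G)$ with Lemma~\ref{lemma:BukhC} yields $\vchrom(G)\le\od_k(G)/k\le 3$, so Theorem~\ref{thm:Chl} applies (its hypothesis $\Delta\le n^{0.7587}$ holds, with equality in the worst case) and produces in randomized polynomial time an independent set of size $\widetilde{\Omega}(n\cdot\Delta^{-0.3179})=\widetilde{\Omega}(n^{0.7587})$. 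In both cases we obtain an independent set of the required size $f(n)$, which completes the argument via Claim~\ref{claim:Blum}. Finally, the additional assertion that the algorithm produces an orthogonal representation of $G$ of dimension $\widetilde{O}(n^{0.2413})$ follows for free from the upper bound in~\eqref{eq:chi_vs_od}: a proper coloring of $G$ with $c$ colors induces a $c$-dimensional orthogonal representation by assigning the $i$th standard basis vector of $\R^c$ to every vertex in the $i$th color class.

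I do not expect a genuine obstacle here: the proof is structurally identical to that of Theorem~\ref{thm:col_3k_k}, and the content of Theorem~\ref{thm:Chl} is imported as a black box. The only points requiring (routine) care are verifying that the degree threshold $\Delta=n^{0.7587}$ is exactly the value that balances the two cases, so that both branches deliver independent sets of the same size $\widetilde{\Omega}(n^{0.7587})$; that the resulting exponent is $1-0.7587=0.2413$; and that the function $f$ fed into Claim~\ref{claim:Blum} still satisfies the claim's hypotheses once the $\widetilde{\Omega}$ polylogarithmic losses are accounted for.
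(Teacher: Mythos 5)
Your proposal is correct and is exactly the paper's argument: the paper proves Theorem~\ref{thm:alg_od_3k} by rerunning the proof of Theorem~\ref{thm:col_3k_k} with Theorem~\ref{thm:Chl} in place of Theorem~\ref{thm:KMS}, which is precisely the two-case independent-set subroutine (degree threshold $\Delta=n^{0.7587}$, Lemma~\ref{lemma:od_k_N(v)} for the high-degree case, Lemma~\ref{lemma:BukhC} plus the improved rounding for the low-degree case) combined with Claim~\ref{claim:Blum} that you describe. Your write-up simply makes explicit the details the paper leaves implicit.
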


\section*{Acknowledgements}
We are grateful to Alexander Golovnev for useful discussions and to the anonymous reviewers for their valuable suggestions.

\bibliographystyle{abbrv}
\bibliography{xi_hyper}

\end{document}